\date{November 20, 2014}
\newtheorem{lemma}{Lemma}
\newtheorem{theorem}{Theorem}
\newtheorem{corollary}{Corollary}
\newtheorem{proposition}{Proposition}
\def\cz{\mathbb{C}} 
\def\rz{\mathbb{R}} 
\def\nz{\mathbb{N}} 
\def\gz{\mathbb{Z}} 
\def\const{\mathrm{C}} 
\def\ba{\mathbf{a}}
\def\bp{\mathbf{p}}
\def\bx{\mathbf{x}}
\def\by{\mathbf{y}}
\def\bA{\mathbf{A}}
\def\bJ{\mathbf{J}} 
\def\bL{\mathbf{L}}
\def\balpha{\boldsymbol{\alpha}}
\def\bsigma{\boldsymbol{\sigma}}
\def\bomega{\boldsymbol{\omega}}
\def\bO{\mathcal{O}}
\def\cF{\mathcal{F}}
\def\cE{\mathcal{E}}
\def\cS{\mathcal{S}}
\def\gh{\mathfrak{h}}
\def\gH{\mathfrak{H}}
\def\rd{\mathrm{d}}
\def\ri{\mathrm{i}}
\def\sgn{\mathrm{sgn}}
\def\eh{\tfrac12}
\def\ct{\chi}
\def\rtf{\rho_Z^{\mathrm{TF}}}
\def\tr{\mathrm{tr}}
\def\I{\mathbbm{1}}
\title[Scott Correction]{The Ground State Energy of Heavy Atoms: the
  Leading Correction} \author{Michael Handrek}
\email{handrek@math.lmu.de} \author{Heinz Siedentop}
\email{h.s@lmu.de}
\address{Mathematisches Institut\\Ludwig-Maximilians-Universit\"at M\"unchen\\
  Theresienstr. 39\\ 80333 M\"unchen\\Germany}
\begin{document}
\maketitle
\begin{abstract}
  For heavy atoms (large atomic number $Z$) described by no-pair
  operators in the Furry picture we find the ground state's leading
  energy correction. We compare the result with (semi-)empirical
  values and Schwinger's prediction showing more than qualitative
  agreement.
\end{abstract}
\tableofcontents
\section{Introduction\label{s1}}

Since the advent of quantum mechanics the description of heavy atoms
and molecules, in particular their energies, has been of considerable
interest in physics, quantum chemistry, and mathematics.  However, as
in classical mechanics, an explicit treatment beyond one-electron
systems is elusive.  This spurred the development of
effective models of large Coulomb systems starting with Thomas
\cite{Thomas1927}, Fermi \cite{Fermi1927,Fermi1928}, and Lenz
\cite{Lenz1932} who formulated the Thomas-Fermi model in the
appropriate language of energy functionals. It asserts that in terms
of the nuclear charge $Z$ the ground state energy of an atom is of
leading order $Z^{7/3}$.  That this is only an approximation was clear
since the beginning. But it was even doubted that the leading behavior
for large $Z$ was correct. In fact, Foldy \cite{Foldy1951} claimed a
$Z^{12/5}$ behavior for large $Z$ based on numerical
computations. Scott \cite{Scott1952} offered a refined physical
argument yielding a positive, additive correction of order $Z^2$.  The
correction originates from the strongly bound electrons which are ill-represented semi-classically. Later Schwinger \cite{Schwinger1980} and
Bander \cite{Bander1982} gave additional arguments for the validity of
the Scott correction. Schwinger \cite{Schwinger1981} and Englert and
Schwinger
\cite{EnglertSchwinger1984StatisticalAtom:H,EnglertSchwinger1984StatisticalAtom:S,EnglertSchwinger1985A}
even argued for a $Z^{5/3}$ that can be partially traced back to Dirac
\cite{Dirac1930}.

The question whether the conjectured formula for the ground state
energy holds asymptotically for the $N$-electron Schr\"odinger theory
was left unanswered until Lieb and Simon's seminal paper
\cite{LiebSimon1977} which successfully established the asymptotic
correctness of Thomas', Fermi's, and Lenz's prediction.  Later the
Scott correction for atoms was mathematically confirmed by Hughes
\cite{Hughes1986,Hughes1990} (lower bound), and Siedentop and Weikard
\cite{SiedentopWeikard1987O,SiedentopWeikard1987U,SiedentopWeikard1989}
(lower and upper bound).  Eventually, even the existence of a
$Z^{5/3}$-term, the so-called Dirac-Schwinger correction, was proved
by Fefferman and Seco in a tour de force
\cite{FeffermanSeco1990O,FeffermanSeco1989,FeffermanSeco1992,FeffermanSeco1993,FeffermanSeco1994,FeffermanSeco1994T,FeffermanSeco1994Th,FeffermanSeco1995}.
These results were extended in various ways to, e.g., ions and  molecules
\cite{Bach1989,IvriiSigal1993,Balodis2004}.

However, from a physical point of view, it is questionable how these
mathematical results reflect reality.  It is expected that electrons
located close to the nucleus will move at high velocities, thus
requiring a relativistic treatment.  Already in non-relativistic
quantum mechanics the bulk of the electrons are forced on orbits of
distances on order $Z^{-1/3}$, the electrons contributing to the Scott
correction even live on the scale $Z^{-1}$. Thus, it is
expected that only the latter will generate the leading correction to the non-relativistic ground state energy.  Already
Schwinger \cite{Schwinger1980} estimated this effect and illustrated
that the leading term remains unaffected.  S{\o}rensen
\cite{Sorensen2005} was the first to put Schwinger's prediction on
mathematical ground by showing that the leading Thomas-Fermi term is
indeed left unchanged in the limit of large nuclear charge $Z$ and
large speed of light $c$ when replacing the non-relativistic
Hamiltonian by the Chandrasekhar (or pseudo-relativistic) Hamiltonian.
Subsequently, the Scott correction for the Chandrasekhar model was
again proved to be of order $Z^2$ by S{\o}rensen \cite{Sorensen1998}
(non-interacting case) and Solovej, S{\o}rensen and Spitzer
\cite{Solovejetal2008}(interacting case).  A short alternative proof
was given by Frank, Siedentop, and Warzel \cite{Franketal2008}.  By
going from the non-relativistic Schr\"odinger theory to the
pseudo-relativistic Chandrasekhar theory one observes a lowering of the
leading energy correction.

Despite the mathematical success in establishing the large
$Z$-asymptotics for this simplified relativistic model, it is still
desirable to examine models that not only represent qualitative
features of relativistic systems but are also expected to be quantitatively correct.
In particular, the pseudo-relativistic theory fails to reproduce the
energies of hydrogen-like atoms. In fact, it predicts collapse of the
innermost electron for Ra ($Z=88$) and beyond, i.e., it does not even allow to treat very large atoms.  For this matter it is
believed to be necessary to study Hamiltonians that are derived
directly from QED, among them the so called no-pair operators
\cite{Sucher1980,Sucher1984,Sucher1987}. The most simple of those
models has been introduced by Brown and Ravenhall
\cite{BrownRavenhall1951}.  In analogy to the Schr\"odinger and
Chandrasekhar theory, Cassanas and Siedentop
\cite{CassanasSiedentop2006} proved that to leading order this model
has no effect on the energy asymptotics. In a further step Frank,
Siedentop, and Warzel \cite{Franketal2009} established the Scott
correction for this model as well. However, although the
Brown-Ravenhall model raises the energies of the
Chandrasekhar model, it nevertheless -- even in the one-particle
picture -- predicts still too low energies. In consequence, the Scott correction which is determined by the pure unscreened
Coulomb potential is too small. In fact, the energy becomes unbounded
from below at about $Z=124$, which is higher than any known element,
but nevertheless lower than the expected value $137$.

In passing we would like to mention two recent results on the
inclusion of the self-generated magnetic field (quantized or not):
it turns out that it does not affect the leading order (Erd{\H{o}}s and
Solovej \cite{ErdosSolovej2010}). In fact the Scott term is not
affected in the Chandrasekhar model either whereas in the
non-relativistic setting -- which is expected to be unphysical because of
the argument mentioned above -- the Scott term would be changed
\cite{Erdosetal2012,Erdosetal2012S,Erdosetal2012Sc}. This and
corresponding numerical results motivate us, to drop the
self-generated magnetic fields in this paper.

Despite the shortcomings of the original Brown-Ravenhall model,
it is known that no-pair operators reproduce relativistic effects in a
quantitative correct manner \cite{Hess1986}, when taking into account
the external potential in determining the electron space (Furry
picture) \cite{Hess1986,SaueVisscher2003}. Physically this is
reasonable since the first order correction stemming from the
innermost electrons is now derived from the Dirac equation with the
unscreened Coulomb potential. This is actually the underlying physical
argument for Schwinger's \cite{Schwinger1980} relativistic correction.

\subsection{The Energy Form}
The relativistic description of an electron moving in the potential of
a static nucleus of charge $Z$ and a mean-field potential $\chi$ is
given by the Dirac operator
\begin{equation}
  \label{Coulomb-Dirac}
  D_{\gamma,\chi}=\balpha\cdot\bp+\beta-\frac{\gamma}{|\bx|}+\chi
\end{equation}
with $\gamma=Z/c$ in units of $mc^2$ and $\bp:=(1/\ri)\nabla$. This
operator is self-adjointly realized in $L^2(\rz^3:\cz^4)$. Here
\begin{equation}
  \balpha=\begin{pmatrix}0 & \bsigma\\ 
    \bsigma & 0\end{pmatrix},\ \ \beta=\begin{pmatrix}1 & 0 \\ 
    0 & -1\end{pmatrix}
\end{equation}
where $\bsigma=(\sigma_1,\sigma_2,\sigma_3)$ are the three Pauli
matrices in standard representation. A physical example for $\chi$
would be the screening mean-field potential of the electrons. We will
assume that $\chi$ is a bounded operator. Assuming $\gamma:=Z/c\in(-1,1)$,
Nenciu \cite{Nenciu1976} showed that $D_{\gamma,\chi}$ defined on
$\cS(\rz^3:\cz^4)$ has a distinguished self-adjoint extension whose
domain includes $H^1(\rz^3:\cz^4)$ and whose form domain is
${H^{1/2}(\rz^3:\cz^4)}$. By the usual abuse of notation we do no longer
distinguish notationally between the original operator and its
extension. Moreover, we write $D_\gamma:=D_{\gamma,0}$ for the pure
Coulomb-Dirac operator without screening.

The energy form $\cE^N$ on $\cS(\rz^{3N}:C^{4^N})$ is
\begin{equation}
\begin{aligned}
  \label{energyform}
  \cE^N[\Psi]:=&c^2\left\langle\Psi,\left(\sum_{\nu=1}^{N}(D_{\gamma}-1)_\nu+ \sum_{1\leq\mu<\nu\leq N} \frac\alpha{|\bx_\nu-\bx_\mu|}\right)\Psi\right\rangle\\
  =&\left\langle V_c\Psi,\left(\sum_{\nu=1}^{N}(c\balpha_\nu\cdot\bp_\nu+c^2\beta_\nu-\frac{Z}{|\bx_\nu|}-c^2)+ \sum_{1\leq\mu<\nu\leq N} \frac1{|\bx_\nu-\bx_\mu|}\right)V_c\Psi\right\rangle
  \end{aligned}
\end{equation}
where $\alpha:=e^2/(\hbar c)$, i.e.,
$\alpha=1/c$ in our units, is the Sommerfeld fine structure constant. The scaling map $V_c$ will be defined in section \ref{scaling}.

Since electrons are fermions, we will restrict the domain to antisymmetric
spinors $\Psi$. Moreover, in the spirit of Dirac's postulate of a
filled Dirac sea, Brown and Ravenhall \cite{BrownRavenhall1951} and
later Sucher \cite{Sucher1984,Sucher1987}, extending this idea,
formulated this mathematically by requiring that the one-electron space is
given by the positive spectral subspace of a suitable Dirac operator.
Typical choices for such operators:
\begin{description}
\item[Free picture (Brown and Ravenhall \cite{BrownRavenhall1951})]
  The free Dirac operator, i.e., $D_0$ with 
  $$\gH_0:=\Lambda_0(L^2(\rz^3:\cz^4)):= \I_{(0,\infty)}(D_0)(L^2(\rz^3:\cz^4)).$$
\item[Furry picture (Furry and Oppenheimer
  \cite{FurryOppenheimer1934})] The Dirac operator with the external
  potential, i.e., $D_\gamma$ with
  $$\gH_\gamma:=\Lambda_\gamma(L^2(\rz^3:\cz^4)):=
  \I_{(0,\infty)}(D_\gamma)(L^2(\rz^3:\cz^4)).$$
\item[Intermediate or Fuzzy picture (Mittleman \cite{Mittleman1981})]
  The intermediate picture with some screening potential $\chi$,
  e.g. it may be picked as the mean-field potential of the
  Dirac-Hartree-Fock equations. An optimal choice, which depends on
  the two-particle density matrix, was suggested by Mittleman. This
  leads to a non-linear equation, which has been studied numerically
  with great success in quantum chemistry, see, e.g., Saue
  \cite{Saue2011}.

  We will pick $\chi$ to be the rescaled Thomas-Fermi
  screening potential reduced by the exchange hole (see \eqref{loch}),
  i.e., $D_{\gamma,\chi}$, with the corresponding one-particle
  Hilbert space is
 $$\gH_{\gamma,\chi}:=\Lambda_{\gamma,\chi}(L^2(\rz^3:\cz^4)):=
 \I_{(0,\infty)}(D_{\gamma,\chi})(L^2(\rz^3:\cz^4)).$$
 \end{description}
 The corresponding $N$ particle Hilbert spaces are $\gH_{\#}^N:=
 \bigwedge_{\nu=1}^N \gH_\#$ where $\#$ either denotes the free picture,
 i.e., $\#="0"$, the Furry picture, i.e., $\#="\gamma"$, or our choice of the
 intermediate picture, i.e., $\#="\gamma,\chi"$. Because of Nenciu's result
 as described above \cite{Nenciu1976},
$$\Lambda_{\#}(\cS(\rz^3:\cz^4))\subset H^{1/2}(\rz^3:\cz^4)$$ 
and dense in $\Lambda_{\#}(L^{2}(\rz^3:\cz^4)$.Thus, the functional $\cE_{\#}^N:=
\cE^N\big|_{\bigwedge_{\nu=1}^N(\Lambda_\#(\cS(\rz^3:\cz^4))}$ is well
defined for $\gamma\in[0,1)$ and bounded screening potential, i.e., in
particular for our choice.

The above quadratic forms are bounded from below under suitable
constraints on $\gamma$, and thus define according to Friedrichs a
distinguished self-adjoint operator. In particular this holds, if
\begin{description}
\item[Brown-Ravenhall] 
  $\gamma\leq\gamma_{crit}^B=\frac{2}{2/\pi+\pi/2}$ (Evans et al
  \cite{Evansetal1996}).
\item[Furry and Intermediate] $\gamma <1$ (Nenciu \cite{Nenciu1976}).
\end{description}
Matte and Stockmeyer \cite{MatteStockmeyer2010} have worked out the detailed structure of the spectrum of such no-pair operators.
\subsection{Scaling and Units}

\label{scaling}
The technical necessity of the limit $Z\to\infty$ only being possible in the simultaneous limit $c\to\infty$ such that $\gamma=\frac{Z}{c}\to\const<1$ suggests that the appropriate units are chosen as in \eqref{Coulomb-Dirac}. 
To draw a connection to the usual Schr\"odinger operator of hydrogen
\begin{equation}
\frac{\bp^2}2-\frac{Z}{|\bx|}
\end{equation}
it will be useful to consider the unitary scaling operator $V_c$ on $L^2(\rz^{3N}:\cz^{4^N})$ defined by
\begin{equation}
 (V_c\psi)(\bx):=c^{-\frac32}\psi\left(\frac{\bx}{c}\right)
\end{equation}
such that
\begin{equation}
 c^2\left\langle \psi, \left(\balpha\cdot\bp+\beta-\frac{\gamma}{|\bx|}\right)\psi\right\rangle=\left\langle V_c\psi, \left( c\balpha\cdot\bp+c^2\beta-\frac{Z}{|\bx|}\right)V_c\psi\right\rangle.
\end{equation}

\subsection{Main Result}

Ground state energies of neutral atoms in no-pair pictures are given by
\begin{equation} E_\#(Z):=\inf \{\cE^Z[\psi]|\psi\in\mathfrak{Q}_\#^Z,\Vert\psi\Vert=1\}
\end{equation}
where
$\mathfrak{Q}_\gamma^Z:=\gH_{\gamma}^Z\cap\bigwedge_{\nu=1}^Z(\Lambda_\#(\cS(\rz^3:\cz^4))$. 

For the Brown-Ravenhall operator, i.e., $\#$ set to $0$, Cassanas and
Siedentop \cite{CassanasSiedentop2006} proved that the leading order
term is given by the Thomas-Fermi energy $\mathcal{E}^\mathrm{TF}$ (see
Appendix \ref{TFtheory}), i.e., 
\begin{equation}
E_0(Z)=E_\mathrm{TF}(1)Z^{7/3}+o(Z^{7/3}).
\end{equation}

Furthermore, Frank et al \cite{Franketal2009} found the leading
correction. It is given by the spectral shift function $s^B(\gamma)$
for $\gamma\in(0,\gamma_{crit}^B]$, i.e., the difference of bound
state energies of the one-particle Brown-Ravenhall operator and Schr\"odinger
operator.

This paper concerns the analogous result in the case of the Furry
picture, i.e., $\#$ replaced by $\gamma$. In this case we
are in the fortunate situation that the eigenvalues of the one-particle Furry operator are explicitly known. They are identical to
the eigenvalues $\lambda^D_n$ (labeled in increasing order) of the
Dirac-Coulomb operator $D_{\gamma}-1$. Likewise, denoting the
eigenvalues of the hydrogen Schr\"odinger operator
$(\bp^2/2-\gamma/|\bx|)\otimes\mathbbm{1}_{\cz^2}$ in $L^2(\rz^3:\cz^2)$ by $\lambda^S_n$, we can
write down rather explicitly a new spectral shift function
\begin{equation}
  \label{Scottfunction}
  s^D(\gamma)
  :=\frac1{\gamma^2}\sum_{n=1}^\infty \left(\lambda^D_{n}-\lambda^S_n\right)
\end{equation}
for $\gamma\in (0,1)$.

We are now ready to state the main theorem.
\begin{theorem}[Scott correction]
 \label{Scottcorrection}
 There exists a constant $C>0$ such that for all $Z>0$ and $\gamma=\frac{Z}{c}<1$ one has
 \begin{equation}
  \left|E_\gamma(Z)-E_\mathrm{TF}(Z)-\left(\frac12+s^D(\gamma)\right)Z^2\right|\leq C Z^{47/24}.
 \end{equation}

\end{theorem}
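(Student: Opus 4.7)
The plan is to prove matching upper and lower bounds on $E_\gamma(Z)$ by reducing the $Z$-body problem to a one-particle spectral problem for the Coulomb--Dirac operator perturbed by the atomic Thomas--Fermi potential $\varphi_Z^{\mathrm{TF}}$. For the upper bound I would construct a Slater determinant in $\gH_\gamma^Z$ from the first $Z$ negative eigenfunctions of
\begin{equation*}
H:=\Lambda_\gamma\bigl(c^2(D_\gamma-1)-\varphi_Z^{\mathrm{TF}}\bigr)\Lambda_\gamma.
\end{equation*}
Inserting into $\cE^Z$, the direct Coulomb repulsion combined with the contribution of $\varphi_Z^{\mathrm{TF}}$ reproduces the Thomas--Fermi energy, the exchange term is bounded by $O(Z^{5/3})$ via a Lieb--Oxford-type estimate, and the remaining piece is the sum of negative eigenvalues of $H$, analyzed below.

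\textbf{Lower bound.} The essential input is a correlation inequality of Lieb--Oxford / Mancas--Klopp--Siedentop type adapted to no-pair wave functions, yielding
\begin{equation*}
\sum_{1\le\mu<\nu\le Z}\frac{1}{|\bx_\mu-\bx_\nu|}\ge \sum_{\nu=1}^Z\varphi_Z^{\mathrm{TF}}(\bx_\nu)-D(\rtf)-O(Z^{5/3}),
\end{equation*}
where $D(\cdot)$ denotes the Coulomb self-energy. Combined with the variational principle restricted to $\gH_\gamma^Z$, this reduces the lower bound on $E_\gamma(Z)$ to the one-particle estimate $E_\gamma(Z)\ge -D(\rtf)+\tr(H_-)-O(Z^{5/3})$, with $H_-$ the negative part of $H$.

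\textbf{One-particle analysis.} The central identity to establish is
\begin{equation*}
-\tr(H_-)=E_{\mathrm{TF}}(Z)+D(\rtf)-\bigl(\tfrac12+s^D(\gamma)\bigr)Z^2+o(Z^2).
\end{equation*}
I would introduce a smooth radial partition $\chi_{\mathrm{in}}^2+\chi_{\mathrm{out}}^2=1$ separating $|\bx|\le Z^{-1+\delta}$ from $|\bx|\ge Z^{-1+\delta}$. On the outer region $\varphi_Z^{\mathrm{TF}}$ is slowly varying and the rescaling $V_c$ of Section~\ref{scaling} converts $c^2(D_\gamma-1)$ into a small relativistic perturbation of $\bp^2/2$; a coherent-state semiclassical calculation in the spirit of Frank--Siedentop--Warzel contributes $E_{\mathrm{TF}}(Z)+D(\rtf)-\tfrac12 Z^2+o(Z^2)$, where the Scott-type $\tfrac12 Z^2$ is the Schr\"odinger shell contribution corresponding to $\sum_n c^2\lambda_n^S$. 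On the inner region $\varphi_Z^{\mathrm{TF}}$ differs from the unscreened $Z/|\bx|$ only by terms of order $Z^{4/3}$, so up to controllable errors $H$ coincides there with the exact Furry Hamiltonian $c^2(D_\gamma-1)-Z/|\bx|$, whose negative eigenvalues are $c^2\lambda_n^D$. Replacing the Schr\"odinger contribution already counted semiclassically by the Dirac one produces exactly $c^2\sum_n(\lambda_n^D-\lambda_n^S)=s^D(\gamma)Z^2$.

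\textbf{Main obstacle.} The delicate point throughout is controlling the non-local projector $\Lambda_\gamma$ under the radial localization: the commutators $[\Lambda_\gamma,\chi_{\mathrm{in}}]$ and $[\Lambda_\gamma,\chi_{\mathrm{out}}]$ generate corrections which, if handled crudely, would spoil the $Z^2$ accuracy and forbid a clean separation into inner and outer regimes. I would estimate them via Nenciu's form-domain characterisation together with the resolvent of $D_\gamma$, exploiting the sub-criticality $\gamma<1$ to gain trace-class control. The exponent $47/24$ is the outcome of optimising $\delta$ among the competing contributions: the semiclassical Weyl remainder in the outer region, the $O(Z^{4/3})$ replacement cost of $\varphi_Z^{\mathrm{TF}}$ by $-Z/|\bx|$ inside, the commutator price of the partition, and the $O(Z^{5/3})$ exchange/correlation error coming from the correlation inequality.
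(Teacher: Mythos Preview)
Your overall architecture (correlation inequality for the lower bound, trial state for the upper bound, reduction to a one-particle spectral problem, then splitting into an ``inner'' Dirac regime and an ``outer'' semiclassical regime) is sound and matches the paper in spirit. However, the paper's implementation differs from yours at the crucial technical step, and your version has a gap precisely at the point you yourself flag as the main obstacle.

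You propose a \emph{spatial} partition $\chi_{\mathrm{in}}^2+\chi_{\mathrm{out}}^2=1$ and then have to control $[\Lambda_\gamma,\chi_{\mathrm{in/out}}]$. You offer only ``Nenciu's form-domain characterisation together with the resolvent of $D_\gamma$'' as a remedy; that is not a method but a hope. Obtaining trace-class bounds on such commutators to accuracy $o(Z^2)$, uniformly in $\gamma<1$, is exactly the hard part, and nothing in your sketch indicates how the nonlocality of $\Lambda_\gamma$ is tamed. The paper sidesteps this entirely: instead of spatial localisation it decomposes in \emph{angular momentum}. Since $D_\gamma$ (and hence $\Lambda_\gamma$) commutes with the projections $\Pi_{j,l}$ onto total angular momentum channels, the split $l\le L$ versus $l>L$ (with $L=[Z^{1/12}]$) produces no commutator error at all.

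The second divergence is that the paper does not run a fresh semiclassical analysis in the Furry picture. For $l>L$ it proves quantitative operator comparisons between $\Lambda_\gamma(\cdot)\Lambda_\gamma$ and $\Lambda_0(\cdot)\Lambda_0$ of size $O(l^{-2})\langle\psi,\bp^2\psi\rangle$ (Lemmata~\ref{Coulombestimate} and~\ref{kineticestimate}, together with a shift to an intermediate picture in Lemma~\ref{DifferenceFTF} and the Griesemer--Lewis--Siedentop min-max principle). This reduces the high-$l$ contribution to the already established Brown--Ravenhall result of Frank--Siedentop--Warzel, where the Thomas--Fermi term, the Schr\"odinger $\tfrac12 Z^2$, and the $Z^{47/24}$ remainder are known. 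The low-$l$ channels then contribute exactly the Dirac--Schr\"odinger eigenvalue shift $s^D(\gamma)Z^2$, controlled via Corollary~\ref{everror3}. Your proposal, by contrast, would require redoing the semiclassical analysis under the nonlocal projection $\Lambda_\gamma$, which is a substantially harder task that you have not outlined.

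In short: replace spatial cutoffs by angular-momentum cutoffs, and replace the direct semiclassical step by a comparison of the Furry and free projections in high angular momentum channels, leaning on the existing Brown--Ravenhall Scott correction. That is what makes the argument go through.
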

Put differently, Theorem \ref{Scottcorrection} asserts that in the limit $Z\to\infty$ one has 
\begin{equation}
 E_\gamma(Z)=E_\mathrm{TF}(Z)+\left(\frac12+s^D(\gamma)\right)Z^2+o(Z^2)
\end{equation}
uniformly in $\gamma=\frac{Z}{c}<1$.

\section{The Scott Correction of the Hydrogenic Atom: The Shift from Schr\"odinger to Dirac Energies\label{s2}}

Since the Scott correction results from the innermost electrons, it is
not only intuitively obvious that the Scott corrections can be
computed for atoms whose electrons do not interact with each other; it
is also important for obtaining the Scott correction in the
interacting case. To turn this intuition into a proof, we estimate the
difference of Dirac and Schr\"odinger eigenvalues for hydrogenic atoms
and exhibit the leading correction to the non-relativistic correction
$Z^2/2$.
\subsection{Bounds on the Energy Shift}

We denote by $\lambda^D_{\gamma,n,l,j}$ the $n^{th}$ eigenvalue of the
operator $D_{\gamma}-1$ for $0\leq\gamma< 1$ restricted to the angular
momentum subspace $\mathfrak{H}_{j,l,m}$ (note that the eigenvalue
does not depend on the azimuthal quantum number $m$). We write
$\lambda^S_{\gamma,n,l}$ for the $n^{th}$ eigenvalue of the operator
$\bp^2/2-\gamma/|\bx|$ restricted to $|Y_{l,m}\rangle\langle
Y_{l,m}|L^2(\rz^3)$. They are given by Sommerfeld's fine structure
formula \cite{Sommerfeld1916} (Gordon \cite{Gordon1928}, and Darwin
\cite{Darwin1928}) and Balmer's formula \cite{Balmer1885}
(Schr\"odinger \cite{Schrodinger1926I}) (see Bethe \cite{Bethe1933}
for a concise treatment)
\begin{align} 
  \nonumber\lambda^D_{\gamma,n,l,j}&=\left(1-\frac{\gamma^2}{\left(n+l-(j+\frac12)+\sqrt{(j+\frac12)^2-\gamma^2}\right)^2+\gamma^2}\right)^\frac12-1\\
  \label{l2} &=\left(1-\frac{\gamma^2}{\left(n+l\right)^2-2(n+l-(j+\frac12))(j+\frac12-\sqrt{(j+\frac12)^2-\gamma^2})}\right)^\frac12-1\\  
  \nonumber\lambda^S_{\gamma,n,l}&=-\frac{\gamma^2}{2(n+l)^2}
\end{align}
(Gordon \cite{Gordon1928}, 
Expanding the square root in \eqref{l2} and using 
the fact that
$\sqrt{1-x}< 1-x/2$ for $x\in(0,1]$ we have
\begin{equation}
 \lambda^D_{\gamma,n,l,j}<\lambda^S_{\gamma,n,l}<0.
\end{equation}
Moreover, for fixed $l$ and $n$ the dimension of the Dirac
eigenspace is $2(2l+1)$ (see \eqref{diml}) and the dimension of the
Schr\"odinger eigenspace is $2l+1$.
\begin{lemma}
\label{everror}
  Assume $\gamma_0<1$. Then there exists a constant $C\in\rz$ such that for all $l\in \nz$,
  $j=l\pm1/2$, $j\geq1/2$, and $\gamma\in[0,\gamma_0)$
\begin{equation}
 \label{everror2} \left|\lambda^D_{\gamma,n,l,j}-\lambda^S_{\gamma,n,l}+\frac{\gamma^4}{2\left(n+l\right)^3}\left(\frac{1}{j+\frac12}-\frac{3}{4}\frac1{n+l}\right)\right|\leq C \frac{\gamma^6}{(n+l)^4}.
\end{equation}
\end{lemma}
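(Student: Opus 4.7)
The plan is to Taylor expand the explicit formula \eqref{l2} in powers of $\gamma^2$ and to match the first two nontrivial coefficients against $\lambda^S_{\gamma,n,l}$ and the stated correction, controlling the remainder uniformly in $n,l,j$ and in $\gamma\in[0,\gamma_0)$. Setting $N:=n+l$, $k:=j+\tfrac12\ge 1$, and $t:=N-k+\sqrt{k^2-\gamma^2}$, the algebraic identity $1-\gamma^2/(t^2+\gamma^2)=t^2/(t^2+\gamma^2)$ recasts \eqref{l2} in the more tractable form
\[
 \lambda^D_{\gamma,n,l,j}=\bigl(1+\gamma^2/t^2\bigr)^{-1/2}-1.
\]
Using $k\le N$ and $\gamma\le\gamma_0$, one checks that $t\ge\sqrt{1-\gamma_0^2}\,N$, hence $w:=\gamma^2/t^2\le\gamma_0^2/((1-\gamma_0^2)N^2)$. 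For $N$ above a threshold $N_0(\gamma_0)$ this forces $w<1/2$, so the binomial series $(1+w)^{-1/2}-1=-w/2+3w^2/8+R(w)$ with $|R(w)|\le C_0w^3$ applies uniformly, with cubic tail of size $O(\gamma^6/N^6)$. The finitely many cases $N\le N_0(\gamma_0)$ are handled by compactness: the quantity inside the absolute value is analytic in $\gamma$ and vanishes to order $\gamma^6$ at $\gamma=0$ by construction of the correction, while $\gamma^6/(n+l)^4$ is bounded below on a compact range of $N$, so a large enough constant $C$ absorbs them.

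For the main range $N\ge N_0(\gamma_0)$ I would perform two further Taylor expansions: first $\sqrt{k^2-\gamma^2}=k-\gamma^2/(2k)-\gamma^4/(8k^3)+O(\gamma^6/k^5)$, with the implicit constant depending only on $\gamma_0$; and second the convergent geometric series for $1/t^2=(1/N^2)(1-\eta/N)^{-2}$ with $\eta:=k-\sqrt{k^2-\gamma^2}$, giving
\[
 \frac{1}{t^2}=\frac{1}{N^2}+\frac{\gamma^2}{N^3 k}+\frac{\gamma^4}{4N^3 k^3}+\frac{3\gamma^4}{4N^4 k^2}+O\bigl(\gamma^6/(N^3 k^3)\bigr),\qquad \frac{1}{t^4}=\frac{1}{N^4}+O(\gamma^2/N^5).
\]
Substituting into the binomial expansion, collecting orders of $\gamma^2$, and using the elementary identity $-(N-k)/(2N^4 k)=-1/(2N^3 k)+1/(2N^4)$ together with the $3\gamma^4/(8N^4)$ contribution coming from $3w^2/8$, one obtains
\[
 \lambda^D_{\gamma,n,l,j}=-\frac{\gamma^2}{2N^2}-\frac{\gamma^4}{2N^3}\Bigl(\frac{1}{k}-\frac{3}{4N}\Bigr)+\mathrm{Err},
\]
which is exactly the asserted matching up to the remainder.

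The main obstacle is the uniform control of $\mathrm{Err}$. It comprises three pieces: (i) the cubic tail $R(w)$ of the binomial series, bounded by $O(\gamma^6/N^6)$; (ii) the $O(\gamma^6)$ tail of $1/t^2$ multiplied by $\gamma^2/2$; and (iii) the $O(\gamma^6)$ tail of $1/t^4$ multiplied by $3\gamma^4/8$. Using the constraint $1\le k\le N$ and tracking the explicit $\gamma^6$ structure of each contribution, the three are absorbed into $C\gamma^6/(n+l)^4$ with $C$ depending only on $\gamma_0$. The delicate step is to exploit the positivity of the geometric-series tails and to organize cancellations carefully so that no powers of $N$ are lost when subleading $\gamma^6$ pieces combine with the leading prefactors.
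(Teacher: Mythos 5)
Your plan is essentially the paper's own proof: Taylor-expand the Sommerfeld formula in powers of $\gamma^2$, match the $\gamma^4$ coefficient against the stated correction, and estimate the remainder. Your passage through $t:=N-k+\sqrt{k^2-\gamma^2}$ and $(1+\gamma^2/t^2)^{-1/2}-1$ is a tidier rewrite of the paper's direct expansion of $\sqrt{1-\gamma^2/(\cdots)}$, and your explicit disposal of the finitely many small-$N$ cases via analyticity and compactness is a useful clarification the paper suppresses, but the route is the same.

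However, the final absorption as you claim it does not quite work, and the paper's displayed remainder has the same defect. The dominant $\gamma^6$ piece of the remainder is $-\gamma^6/(8N^3k^3)+\text{lower order}$, coming from the $\gamma^4/(8k^3)$ term of $k-\sqrt{k^2-\gamma^2}$ pushed through $1/t^2$. For fixed $k=j+\tfrac12$ (e.g.\ $j=\tfrac12$, $l=1$) and $N=n+l\to\infty$ this is of order $\gamma^6/N^3$, not $O(\gamma^6/N^4)$; the paper's own last displayed error is $\bO\bigl(\gamma^6n/((n+l)^4(l+1))\bigr)$, which likewise is not bounded by $C\gamma^6/(n+l)^4$ unless $n\lesssim l$. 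So "tracking the explicit $\gamma^6$ structure and exploiting cancellations" as you propose cannot deliver the bound $C\gamma^6/(n+l)^4$ uniformly; a correct formulation carries the $k$-dependence, e.g.\ $\le C\gamma^6/\bigl((n+l)^3(j+\tfrac12)^3\bigr)$. This is harmless downstream since only Corollary \ref{everror3} is used, and that still follows from the weaker bound because $\gamma^6/((n+l)^3k^3)\le\gamma^4/((n+l)^3k)\le\gamma^4/((n+l)^3l)$ for $k\ge\max(1,l)$ and $\gamma<1$; but if you present the lemma, either state the $k$-weighted error or move the explicit $\gamma^6$ coefficient into the expansion and control only the genuine $O(\gamma^8)$ tail.
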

\begin{proof}
Note that
\begin{equation}
\frac12(n+l)^2\leq\left(n+l\right)^2-2(n+l-(j+\frac12))(j+\frac12-\sqrt{(j+\frac12)^2-\gamma^2})\leq (n+l)^2.
\end{equation}
Expanding the square root gives sufficient error estimates:
\begin{align*}
  &\left(1-\frac{\gamma^2}{\left(n+l\right)^2-2(n+l-(j+\frac12))(j+\frac12-\sqrt{(j+\frac12)^2-\gamma^2})}\right)^\frac12\\
  =&1-\frac{\gamma^2}{2\left(\left(n+l\right)^2-2(n+_l-(j+\frac12))(j+\frac12-\sqrt{(j+\frac12)^2-\gamma^2})\right)}\\
  &-\frac{\gamma^4}{8\left(\left(n+l\right)^2-2(n+l-(j+\frac12))(j+\frac12-\sqrt{(j+\frac12)^2-\gamma^2})\right)^2}+\bO\left(\frac{\gamma^6}{(n+l)^6}\right)\\
  =&1-\frac{\gamma^2}{2\left(n+l\right)^2}-\frac{\gamma^4}{2\left(n+l\right)^4}\frac{n+l-(j+\frac12)}{j+\frac12}\\
  &-\frac{\gamma^4}{8\left(\left(n+l\right)^2-2(n+l-(j+\frac12))(j+\frac12-\sqrt{(j+\frac12)^2-\gamma^2})\right)^2}\\
&+\bO\left(\frac{\gamma^6n}{(n+l)^4(l+1)^2}\right)\\
\end{align*}
\begin{align*}
  =&1-\frac{\gamma^2}{2\left(n+l\right)^2}-\frac{\gamma^4}{2\left(n+l\right)^4}\frac{n+l-(j+\frac12)}{j+\frac12}\\
  &-\frac{\gamma^4}{8\left(n+l\right)^2}+\bO\left(\frac{\gamma^6n}{(n+l)^4(l+1)}\right)\\
  =&1-\frac{\gamma^2}{2\left(n+l\right)^2}-\frac{\gamma^4}{2\left(n+l\right)^3}\left(\frac{1}{j+\frac12}-\frac{3}{4}\frac1{n+l}\right)+\bO\left(\frac{\gamma^6n}{(n+l)^4(l+1)}\right)
\end{align*}
\end{proof}
For our application it is convenient to simplify the estimate.
\begin{corollary}
\label{everror3}
Under the assumptions of Lemma \ref{everror} we have
\begin{equation}
 0\leq\lambda^S_{\gamma,n,l}-\lambda^D_{\gamma,n,l,j}\leq\frac{C\gamma^4}{\left(n+l\right)^3l}.
\end{equation}
\end{corollary}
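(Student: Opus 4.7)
The corollary should follow as a direct consequence of Lemma \ref{everror}, repackaged so that an explicit $1/l$ factor appears in the denominator. That is the form one needs later when summing the eigenvalue shifts against the angular momentum degeneracy $2(2l+1)$.

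The lower bound $\lambda^S_{\gamma,n,l}-\lambda^D_{\gamma,n,l,j}\geq 0$ has already been recorded in the paragraph preceding Lemma \ref{everror}: expanding the square root in \eqref{l2} together with $\sqrt{1-x}<1-x/2$ for $x\in(0,1]$ gives $\lambda^D_{\gamma,n,l,j}<\lambda^S_{\gamma,n,l}<0$, so nothing new is needed for this half.

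For the upper bound I would take Lemma \ref{everror} in the form
\begin{equation*}
\lambda^S_{\gamma,n,l}-\lambda^D_{\gamma,n,l,j}\leq \frac{\gamma^4}{2(n+l)^3}\left(\frac{1}{j+\eh}-\frac{3}{4(n+l)}\right)+\frac{C\gamma^6}{(n+l)^4}
\end{equation*}
and majorize each of the two summands separately by $C'\gamma^4/((n+l)^3 l)$. For the leading term the key observation is that the selection rule $j=l\pm\eh$ with $j\geq\eh$ forces $j+\eh\geq l$ in every admissible case, so that the whole bracket is controlled by $1/(j+\eh)\leq 1/l$. For the error term one absorbs the extra factor $\gamma^2\leq \gamma_0^2$ into the constant and uses the trivial inequality $1/(n+l)\leq 1/l$ to trade one of the four factors of $(n+l)^{-1}$ for $l^{-1}$. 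Adding the two contributions gives the claimed inequality.

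The main content is purely algebraic; there is no substantive obstacle beyond Lemma \ref{everror}. The only subtle point worth flagging is that the stated right-hand side implicitly requires $l\geq 1$, which is consistent with the selection rule $j=l\pm\eh$, $j\geq\eh$ in all cases apart from the $s$-wave $(l,j)=(0,\eh)$; that single case is handled directly by the explicit form of Lemma \ref{everror} wherever it would otherwise appear.
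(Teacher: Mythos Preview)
Your proposal is correct and is exactly the kind of argument the paper has in mind: the corollary is stated without proof, introduced only by the sentence ``For our application it is convenient to simplify the estimate,'' so the authors treat it as an immediate consequence of Lemma~\ref{everror}. Your use of $j+\eh\geq l$ to extract the $1/l$, together with $\gamma^2\leq\gamma_0^2$ and $1/(n+l)\leq 1/l$ for the remainder term, is precisely the intended simplification, and your remark about the $l=0$ case is apt (the hypothesis $l\in\nz$ in Lemma~\ref{everror} is to be read as $l\geq1$, consistent with the $l$ in the denominator).
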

\subsection{Expectation of the Electric Potential}

We need an estimate on the expectation value of the Coulomb potential
in eigenstates of the Coulomb-Dirac operator. In the non-relativistic case the virial theorem yields immediately $\langle\psi,\gamma/|\bx|\psi\rangle= \gamma^2(n+l)^{-2}$. In the relativistic case explicit computation (Burke and Grant \cite{BurkeGrant1967}) yields 
\begin{equation}
\label{BG}
\begin{aligned}
  &\langle\psi_{n,l,j},\frac{\gamma}{|\bx|}\psi_{n,l,j}\rangle\\
=&\gamma^2\frac{(j+\frac12)^2+(n+l-(j+\frac12))\sqrt{(j+\frac12)^2-\gamma^2}}{\sqrt{(j+\frac12)^2-\gamma^2}((\sqrt{(j+\frac12)^2-\gamma^2}+n+l-(j+\frac12))^2+\gamma^2)^{\frac{3}{2}}}
  .
\end{aligned}
\end{equation}
This formula would also follow from hypervirial theorems (see, e.g.,
\cite{Shabaev1991}).
\begin{lemma}
\label{Coulomblow}
  Pick $\gamma_0\in(0,1)$. Given $n\in\nz$, $j\in \nz-1/2$,
  $l=j\pm1/2$, $m=-j,...,j$, and $\gamma\in(0,\gamma_0]$. Let
  $\psi_{n,l,j,m}$ denote an eigenfunction with
  eigenvalue $\lambda^D_{\gamma,n,l,j}$. Then there is a constant
  $C_{\gamma_0}$ such that
\begin{equation}
  \label{fbg}
  \langle \psi_{n,l,j},\frac{\gamma}{|\bx|}\psi_{n,l,j}\rangle \leq \frac{C_{\gamma_0}\gamma^2}{(n+l)^2}.
\end{equation}
\end{lemma}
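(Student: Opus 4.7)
The plan is to start directly from the closed-form Burke--Grant formula \eqref{BG} and reduce it to an elementary algebraic estimate by introducing shorthand for the parameters. Set $k:=j+\tfrac12\geq 1$ (an integer, since $j\in\nz-\tfrac12$), $M:=n+l\geq k$, $s:=\sqrt{k^2-\gamma^2}$, and $\tilde M:=M-k\geq 0$. In this notation
\begin{equation*}
 \langle\psi_{n,l,j},\tfrac{\gamma}{|\bx|}\psi_{n,l,j}\rangle
 =\gamma^2\,\frac{k^2+\tilde M\,s}{s\bigl((s+\tilde M)^2+\gamma^2\bigr)^{3/2}},
\end{equation*}
so the lemma is equivalent to showing the right-hand side is bounded by $C_{\gamma_0}\gamma^2/M^2$.

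The key two-sided bound on $s$ is obtained from $\gamma\leq\gamma_0<1$ and $k\geq 1$: clearly $s\leq k$, while
\begin{equation*}
 s=k\sqrt{1-\gamma^2/k^2}\geq k\sqrt{1-\gamma_0^2}.
\end{equation*}
In particular $s+\tilde M\geq\sqrt{1-\gamma_0^2}\,k+\tilde M\geq\sqrt{1-\gamma_0^2}\,M$, hence
\begin{equation*}
 \bigl((s+\tilde M)^2+\gamma^2\bigr)^{3/2}\geq (1-\gamma_0^2)^{3/2}M^3.
\end{equation*}
For the numerator I use $s\leq k$ to get $k^2+\tilde M s\leq k^2+\tilde M k=kM$. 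Putting these two estimates together,
\begin{equation*}
 \frac{k^2+\tilde M\,s}{s\bigl((s+\tilde M)^2+\gamma^2\bigr)^{3/2}}
 \leq \frac{kM}{\sqrt{1-\gamma_0^2}\,k\cdot(1-\gamma_0^2)^{3/2}M^3}
 =\frac{1}{(1-\gamma_0^2)^2\,M^2},
\end{equation*}
which gives the claim with $C_{\gamma_0}=(1-\gamma_0^2)^{-2}$.

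There is really no serious obstacle here: because the spectral gap condition $\gamma<1$ automatically keeps $s$ comparable to $k$ uniformly in $k\geq1$, the only thing to watch is that both $k$ and $\tilde M$ can be small while the other is large, and the crude bounds $s\sim k$ and $s+\tilde M\gtrsim M$ handle both regimes simultaneously. The $m$-independence of the expectation value is built into formula \eqref{BG}, so the statement with $\psi_{n,l,j,m}$ follows immediately.
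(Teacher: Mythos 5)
Your argument is correct and takes essentially the same route as the paper: plug into the Burke--Grant formula, bound the numerator above by $(j+\tfrac12)(n+l)$, bound $\sqrt{(j+\tfrac12)^2-\gamma^2}$ below by a constant times $j+\tfrac12$ using $\gamma\leq\gamma_0<1$, and bound the remaining factor in the denominator below by a constant times $(n+l)^3$. The only cosmetic difference is that the paper retains the $+\gamma^2$ term to get the $\gamma_0$-independent lower bound $(n+l)^2/2$ there, while you drop it and let the $\gamma_0$-dependence enter through $s+\tilde M\geq\sqrt{1-\gamma_0^2}\,(n+l)$; both yield a valid $C_{\gamma_0}$.
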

\begin{proof}
   The claim follows by noting that
\begin{align*}
  \frac1{\sqrt{(j+\frac12)^2-\gamma^2}}\leq\frac1{j+\frac12-\gamma}&\leq\frac1{(1-\gamma)(j+\frac12)},\\
  \frac1{(\sqrt{(j+\frac12)^2-\gamma^2}+n+l-(j+\frac12))^2+\gamma^2}&\leq\frac{2}{(n+l)^2},\\
  (j+\frac12)^2+(n+l-(j+\frac12))\sqrt{(j+\frac12)^2-\gamma^2}&\leq(n+l)(j+\frac12).
\end{align*}
\end{proof}
Note that the restriction $\gamma\leq \gamma_0$ is only relevant for
  $j=\frac12$.

\section{Lower Bound on the Energy}

To prove a lower bound on the energy we will reduce the multi-particle
operator to an effective one-particle operator via the pointwise
correlation inequality of Mancas et al \cite{Mancasetal2004} in the
appropriately rescaled version $x\rightarrow x/c$, i.e.,
\begin{equation}
  \label{MMS}
  \sum_{1\leq\nu<\mu\leq N}\frac1{|\bx_\nu-\bx_\mu|}\geq\sum_{\nu=1}^N \ct(\bx_\nu)-c^{-1} D[\rtf].
\end{equation}
Here $\ct$ denotes the rescaled screening part of the Thomas-Fermi potential reduced by the exchange hole, i.e.,
\begin{equation}
  \label{loch}
 \ct(\bx)=c^{-4}\int_{|\bx-\by|>R_Z(\bx/c)}\frac{\rtf(c^{-1}\by)}{|\bx-\by|}\rd \by
\end{equation}
with $R_Z$ being the unique minimal radius such that
\begin{equation}
 \int_{|\bx-\by|\leq R_Z(\bx)}\rtf(\by)\rd \by =\frac12,
\end{equation}
and
\begin{equation}
D[\rho]:=\frac12\int_{\rz^3}\int_{\rz^3}\frac{\rho(\bx)\rho(\by)}{|\bx-\by|}\rd \bx \rd \by
\end{equation}
is the classical energy of a charge density $\rho$. (The corresponding
sesquilinear form is denoted by $D(\rho,\sigma)$.)  Of course,
\begin{equation}
 0<\ct(\bx)<{1\over c^2}\int_{\rz^3}\rd \by{\rtf(\by)\over |c^{-1}\bx-\by|}
\end{equation}
and in particular
\begin{equation}
\left\Vert\ct\right\Vert_\infty\leq C Z^{\frac{4}{3}}c^{-2}
\end{equation}
for some constant $C>0$. Thus, to estimate the form $\cE^N_{\gamma}$ from
below, it suffices to bound
\begin{equation}
 \cF(d):=\tr((D_{\gamma,\chi}-1)d) - c^{-2}D[\rtf]
\end{equation}
with $d$ a one-particle density matrix on $\gH_\gamma$ of
particle number not exceeding $N$, and of finite kinetic energy, i.e.,
$0\leq d\leq1$, $\tr d\leq N$, and
$\tr(|\bp|d)<\infty$.

We now use that $\chi$ is spherically symmetric and therefore
$D_\gamma^\ct$ commutes with the total angular momentum $\bJ$ and we
split the trace into angular momentum channels, i.e.,
\begin{equation}
  \label{eq:3}
  \mathcal{F}(d)\geq \sum_{{l,j\atop l\leq L}} \tr_{j,l}((D_{\gamma}-1)d)+ \sum_{{l,j\atop l> L}} \tr_{j,l}((D_{\gamma,\chi}-1)d))-c^{-2}D[\rtf],
\end{equation}
since the projections $\Lambda_\#$ commute with the projections on the subspaces $\gh_{j,l}$ (see appendix \ref{PWA}), i.e.,
\begin{equation}
 \Pi_{j,l} \Lambda_\#=\Lambda_\# \Pi_{j,l}.
\end{equation}
The parameter $L$ will later be specified to give the desired lower bound.
\subsection{Shift from Furry to Thomas-Fermi Picture}

To prove a lower bound we use a uniform estimate on the difference
between the Furry and the Thomas-Fermi picture. 
\begin{lemma}
  \label{DifferenceFTF}
  Assume $\gamma<1$. Then for every state $\psi\in\Lambda_\gamma
  H^1(\rz^3,\cz^4)$
  \begin{equation}
    0\geq\langle\psi, \left[D_{\gamma,\chi}-1-\Lambda_{\gamma,\chi} (D_{\gamma,\chi}-1)\Lambda_{\gamma,\chi}\right]\psi\rangle\geq -{\pi^2+B(\frac14,\frac12)^2\over4\pi^2(1-\gamma^2)}\|\chi\|^2_\infty\|\psi\|^2
\end{equation}
\end{lemma}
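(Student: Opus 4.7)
The plan is to reduce the claim to norm estimates for $(\Lambda_\gamma-\Lambda_{\gamma,\chi})\psi$ and $|D_{\gamma,\chi}|^{1/2}(\Lambda_\gamma-\Lambda_{\gamma,\chi})\psi$, realize these via an integral representation of the spectral projection, and compute the resulting $t$-integrals. First, because $\Lambda_{\gamma,\chi}$ is a spectral projection of $D_{\gamma,\chi}$ and therefore commutes with $D_{\gamma,\chi}-1$, the orthogonal decomposition
\begin{equation*}
D_{\gamma,\chi}-1=\Lambda_{\gamma,\chi}(D_{\gamma,\chi}-1)\Lambda_{\gamma,\chi}+(I-\Lambda_{\gamma,\chi})(D_{\gamma,\chi}-1)(I-\Lambda_{\gamma,\chi})
\end{equation*}
collapses the bracket appearing in the lemma to $(I-\Lambda_{\gamma,\chi})(D_{\gamma,\chi}-1)(I-\Lambda_{\gamma,\chi})$. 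Because $D_{\gamma,\chi}=-|D_{\gamma,\chi}|$ on the range of $I-\Lambda_{\gamma,\chi}$, this operator equals $-(I-\Lambda_{\gamma,\chi})(|D_{\gamma,\chi}|+1)(I-\Lambda_{\gamma,\chi})\leq 0$, yielding the upper bound $0$ at once and, after pairing with $\psi$, the key identity
\begin{equation*}
\langle\psi,[\,\cdot\,]\psi\rangle = -\||D_{\gamma,\chi}|^{1/2}(I-\Lambda_{\gamma,\chi})\psi\|^2 - \|(I-\Lambda_{\gamma,\chi})\psi\|^2.
\end{equation*}

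The hypothesis $\psi\in\Lambda_\gamma H^1$ gives $\Lambda_\gamma\psi=\psi$ and hence $(I-\Lambda_{\gamma,\chi})\psi=(\Lambda_\gamma-\Lambda_{\gamma,\chi})\psi$. Combining $\Lambda=\eh(I+\sgn(D))$ with the sign-function representation $\sgn(D)=\pi^{-1}\int_\rz(D-\ri t)^{-1}\rd t$ and the second resolvent identity (applicable because $\chi$ is bounded) yields
\begin{equation*}
\Lambda_\gamma-\Lambda_{\gamma,\chi} = \frac{1}{2\pi}\int_\rz (D_{\gamma,\chi}-\ri t)^{-1}\,\chi\,(D_\gamma-\ri t)^{-1}\,\rd t,
\end{equation*}
to be inserted into both norms.

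The two resulting norms are estimated by Minkowski. Since $\mathrm{dist}(0,\sigma(D_\gamma))=\sqrt{1-\gamma^2}$ one has $\|(D_\gamma-\ri t)^{-1}\|\leq((1-\gamma^2)+t^2)^{-1/2}$, and for the $|D_{\gamma,\chi}|^{1/2}$-piece one uses additionally the universally valid spectral-calculus bound $\||D_{\gamma,\chi}|^{1/2}(D_{\gamma,\chi}-\ri t)^{-1}\|\leq(2|t|)^{-1/2}$. Minkowski gives
\begin{equation*}
\||D_{\gamma,\chi}|^{1/2}(I-\Lambda_{\gamma,\chi})\psi\| \leq \frac{\|\chi\|_\infty\|\psi\|}{2\pi}\int_\rz \frac{\rd t}{\sqrt{2|t|}\sqrt{(1-\gamma^2)+t^2}};
\end{equation*}
the substitution $t=\sqrt{1-\gamma^2}\,u$ followed by $w=u^2$ reduces the integral to $B(1/4,1/4)/[\sqrt{2}(1-\gamma^2)^{1/4}]$, and Euler's reflection $\Gamma(1/4)\Gamma(3/4)=\pi\sqrt{2}$ yields $B(1/4,1/4)=\sqrt{2}\,B(1/4,1/2)$. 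Squaring and weakening $(1-\gamma^2)^{-1/2}$ to $(1-\gamma^2)^{-1}$ produces the $B(1/4,1/2)^2$-piece. For $\|(I-\Lambda_{\gamma,\chi})\psi\|$ the analogous Minkowski estimate with the matching bound $\|(D_{\gamma,\chi}-\ri t)^{-1}\|\leq((1-\gamma^2)+t^2)^{-1/2}$ and the evaluation $\int_\rz \rd t/((1-\gamma^2)+t^2)=\pi/\sqrt{1-\gamma^2}$ gives $\|(I-\Lambda_{\gamma,\chi})\psi\|\leq\|\chi\|_\infty\|\psi\|/(2\sqrt{1-\gamma^2})$, whose square furnishes the $\pi^2$-piece; summing the two squared bounds reproduces the stated constant exactly.

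The main technical hurdle is the last resolvent estimate: although Weyl's theorem gives $\sigma_{\mathrm{ess}}(D_{\gamma,\chi})=(-\infty,-1]\cup[1,\infty)$, the bounded perturbation $\chi$ can move eigenvalues of $D_\gamma$ inside the gap $(-\sqrt{1-\gamma^2},\sqrt{1-\gamma^2})$ arbitrarily close to~$0$, so $\|(D_{\gamma,\chi}-\ri t)^{-1}\|\leq((1-\gamma^2)+t^2)^{-1/2}$ is not automatic. This is handled by a case distinction: when $\|\chi\|_\infty<\sqrt{1-\gamma^2}$ a Neumann-series/min-max argument restores the gap, whereas when $\|\chi\|_\infty\geq\sqrt{1-\gamma^2}$ the right-hand side of the inequality is already large enough that the trivial bound $\|(I-\Lambda_{\gamma,\chi})\psi\|\leq\|\psi\|$ suffices. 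Crucially, the $|D_{\gamma,\chi}|^{1/2}$-piece requires no such gap, since its estimate relies on the universal bound $\||D_{\gamma,\chi}|^{1/2}(D_{\gamma,\chi}-\ri t)^{-1}\|\leq(2|t|)^{-1/2}$.
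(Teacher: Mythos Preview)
Your reduction to the two norms $\||D_{\gamma,\chi}|^{1/2}(I-\Lambda_{\gamma,\chi})\psi\|$ and $\|(I-\Lambda_{\gamma,\chi})\psi\|$, the integral representation of $\Lambda_\gamma-\Lambda_{\gamma,\chi}$, and the Minkowski estimate are exactly the paper's route. Your treatment of the $|D_{\gamma,\chi}|^{1/2}$-piece via the universal bound $\||D_{\gamma,\chi}|^{1/2}(D_{\gamma,\chi}-\ri t)^{-1}\|\le(2|t|)^{-1/2}$ and the Beta-function identity is correct and indeed yields the $B(\tfrac14,\tfrac12)^2$ contribution.

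The genuine gap is in the second piece. Your case distinction does not close. In the regime $\|\chi\|_\infty\ge\sqrt{1-\gamma^2}$ the trivial bound gives only $\|(I-\Lambda_{\gamma,\chi})\psi\|^2\le\|\psi\|^2$, whereas the $\pi^2$-contribution you need is $\tfrac{1}{4(1-\gamma^2)}\|\chi\|_\infty^2\|\psi\|^2$, which at the threshold $\|\chi\|_\infty=\sqrt{1-\gamma^2}$ equals $\tfrac14\|\psi\|^2$; combining with the first piece still falls short of the target $\tfrac{\pi^2+B(\frac14,\frac12)^2}{4\pi^2}\|\psi\|^2\approx0.95\,\|\psi\|^2$ when $\gamma$ is close to~$1$. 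In the complementary regime $\|\chi\|_\infty<\sqrt{1-\gamma^2}$, a Neumann-series or perturbation argument gives a gap for $D_{\gamma,\chi}$ of size only $\sqrt{1-\gamma^2}-\|\chi\|_\infty$, not $\sqrt{1-\gamma^2}$, so the resolvent bound $((1-\gamma^2)+t^2)^{-1/2}$ you invoke does not follow.

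The paper avoids this by ordering the resolvents as $(D_\gamma+\ri\nu)^{-1}\chi(D_{\gamma,\chi}+\ri\nu)^{-1}$ so that the $D_{\gamma,\chi}$-resolvent sits next to $\Lambda_{\gamma,\chi}^\perp$ (respectively $|D_{\gamma,\chi,-}|$); one then only needs control on the \emph{negative} spectral subspace of $D_{\gamma,\chi}$. The paper uses that this negative spectrum lies in $(-\infty,-1]$, which yields $\|\Lambda_{\gamma,\chi}^\perp(D_{\gamma,\chi}+\ri\nu)^{-1}\|\le\langle\nu\rangle^{-1}$ and hence, via $\langle\nu\rangle^{-1}(1-\gamma^2+\nu^2)^{-1/2}\le(1-\gamma^2+\nu^2)^{-1}$, the bound $\tfrac14(1-\gamma^2)^{-1}\|\chi\|_\infty^2\|\psi\|^2$ directly. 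This spectral localization is where the positivity of $\chi$ (noted in the remark after the lemma) enters; your argument, which treats $\chi$ as an arbitrary bounded perturbation, cannot reproduce the stated constant without it.
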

Note that the lemma generalizes to more general positive operators $\chi$.
\begin{proof}
  The expression to be estimated can be written as a manifestly negative term
  \begin{align*}
    0\geq \langle\psi, (D_{\gamma,\chi,-} -\Lambda_{\gamma,\chi}^\perp)\psi\rangle =\langle\psi, ((\Lambda_\gamma-\Lambda_{\gamma,\chi})(D_{\gamma,\chi,-}-\Lambda_{\gamma,\chi}^\perp)(\Lambda_\gamma-\Lambda_{\gamma,\chi})  \psi\rangle.
  \end{align*}
  (In our convention the negative part $A_-$ of a self-adjoint
  operator $A$ is negative.)  By the usual integral representation
  for $\sgn(x)$, the resolvent identity, and the spectral theorem we
  get
\begin{align*}
  &\langle\psi,(\Lambda_\gamma-\Lambda_{\gamma,\chi}) |D_{\gamma,\chi,-}|(\Lambda_\gamma-\Lambda_{\gamma,\chi})\psi\rangle\\
  =&\frac1{4\pi^2}\iint_{\rz^2}\rd\mu\rd\nu\langle\psi,\frac1{D_\gamma+\ri\mu}\ct\frac1{D_{\gamma,\chi,-}+\ri\mu}|D_{\gamma,\chi,-}|\frac1{D_{\gamma,\chi,-}+\ri\nu}\ct\frac1{D_\gamma+\ri\nu}\psi\rangle\\
  \leq&\left(\frac1{2\pi}\int_\rz \rd\nu\Vert{|D_{\gamma,\chi,-}|^\frac12\over D_{\gamma,\chi,-}+\ri\nu}\ct\frac1{D_\gamma+\ri\nu}\Lambda_\gamma\psi\Vert\right)^2 \leq \left({1\over2\pi}\int_\rz\rd \nu {\|\chi\|_\infty\|\psi\|\over \sqrt{\langle\nu\rangle}  \sqrt{1-\gamma^2 +\nu^2}}\right)^2\\
  \leq &(1-\gamma^2)^{-1/2}{B(\frac 14,\frac12)^2\over
    4\pi^2}\|\chi\|_\infty^2\|\psi\|^2
\end{align*}
using the Schwarz inequality in the first estimate.

A similar estimate yields
$$ \langle\psi, (\Lambda_\gamma-\Lambda_{\gamma,\chi}) \Lambda_{\gamma,\chi}^\perp(\Lambda_\gamma-\Lambda_{\gamma,\chi})\psi\rangle\leq \tfrac14 (1-\gamma^2)^{-1}\|\chi\|_\infty^2.$$
\end{proof}
\subsection{Lower Bound in the Thomas-Fermi picture}

Lemma \ref{DifferenceFTF} and \eqref{eq:3} allow to write
\begin{align*}
  \cE^N_\gamma[\psi]/c^2  \geq &\sum_{l\leq L} \sum_{j=l\pm\frac12,j\geq0} \tr_{j,l}(\left[\Lambda_\gamma (D_\gamma-1)\Lambda_\gamma\right]_-)-\const {Z^{8/3}N\over c^4}\\
&+\sum_{L<l\leq N}\sum_{j=l\pm\frac12,j\geq0}\tr_{j,l}(\left[\Lambda_{\gamma,\chi} (D_{\gamma,\chi}-1)\Lambda_{\gamma,\chi}\right]_-)
  -D(\rtf,\rtf).
\end{align*}
By the min-max principle for operators with spectral gaps
\cite{Griesemeretal1999}, we have
\begin{equation}
  \lambda_{n,j,l,m}(\Lambda_{\gamma,\chi} (D_{\gamma,\chi}-1)\Lambda_{\gamma,\chi})\geq\lambda_{n,j,l,m}(\Lambda_0 (D_{\gamma,\chi}-1)\Lambda_0)
\end{equation}
for $l\geq1$ and $\gamma<1$,since the proof of Griesemer et al \cite{Griesemeretal1999}
extends to higher coupling constants. In particular, the condition
$$ \|(|D_0|+1)^{1/2}\Lambda_0\Lambda_{\gamma,\chi}(|D_0|+1)^{-1/2}\|_{j,l}<1$$
of \cite[Theorem 1]{Griesemeretal1999} is readily verified for $j\geq
\frac32$ following the proof of \cite[Theorem
2]{Griesemeretal1999} and using Hardy's inequality in angular momentum channel $l'$,
\begin{equation}
 \bp^2\geq \frac{(l'+\frac12)^2}{|\bx|^2},
\end{equation}
i.e.,
\begin{equation}
  \label{drehhardy}
 \bp^2\geq \frac{j^2}{|\bx|^2}
\end{equation}
in the subspaces $\gh_{j,l}$. Note that although the subspaces $\gh_{j,l}$
are not eigenspaces of the angular momentum operator $\bL$, they only
contain functions with orbital angular momentum larger than
$j-\frac12$.

Therefore,
\begin{equation}
\begin{aligned}
\label{lowerbound1}
  \cE^N_{\gamma}[\psi]/c^2\geq &\sum_{l\leq L} \sum_{j=l\pm\frac12,j\geq0} \tr_{j,l}(\left[\Lambda_\gamma (D_\gamma-1)\Lambda_\gamma\right]_-)-\const  {Z^{8/3}N\over c^4}\\
&+\sum_{L<l\leq N}\sum_{j=l\pm\frac12,j\geq0}\tr_{j,l}(\left[\Lambda_0 (D_\gamma+\chi-1)\Lambda_0\right]_-)-D[\rtf].
\end{aligned}
\end{equation}
Note that the detour via the projections $\Lambda_{\gamma,\chi}$ was central to our argument, since although we have
\begin{equation}
\lambda_{n,j,l,m}(\Lambda_{\gamma} (D_{\gamma}-1)\Lambda_{\gamma})\geq\lambda_{n,j,l,m}(\Lambda_0 (D_{\gamma}-1)\Lambda_0).
\end{equation}
by the same arguments as above, it is however not clear whether
\begin{equation}
\lambda_{n,j,l,m}(\Lambda_{\gamma} (D_{\gamma,\chi}-1)\Lambda_{\gamma})\geq\lambda_{n,j,l,m}(\Lambda_0 (D_{\gamma,\chi}-1)\Lambda_0),
\end{equation}
because of the (albeit small) perturbation $\chi$.
\subsection{Difference between Schr\"odinger and Brown-Ravenhall Energies}

We will now compare the lower bound in \eqref{lowerbound1} with the non-relativistic atomic energy, which as
the proof of the Scott correction for the Brown Ravenhall operator in
\cite{Franketal2009} shows, is given -- for $N=Z$ and $L=[Z^{1/12}]$ -- by
\begin{equation}
\begin{aligned}
\label{lowerbound2}
  E^Z_S=&-2\sum_{l=0}^{L-1} (2l+1)\sum_{n=1}^\infty{Z^2\over2(n+l)^2}-D[\rtf]\\
  &+c^2\sum_{l=L}^Z\sum_{j=l\pm\frac12,j\geq0}\tr_{j,l}(\left[\Lambda_0 (D_{\gamma,\chi}-1)\Lambda_0\right]_-)+\mathcal O(Z^\frac{5}{3}).
\end{aligned}
\end{equation}
In particular, the extension of Theorem 3.1 in \cite{Franketal2009} to all $\gamma<1$ for higher momentum channels is again easily achieved by exploiting inequality \ref{drehhardy} such that we have

\begin{theorem}\emph{\cite[Theorem 3.1]{Franketal2009}} There exists a constant $C<\infty$ such that for any $\gamma\leq1$ $v:\left[0,\infty\right)\to\left[0,\infty\right)$, satisfying
 \begin{equation}
  v(r)\leq\frac{\gamma}{r},
 \end{equation}
 any $\mu>0$ and any $l\in\nz$, $j=l\pm\frac12$ one has
 \begin{equation}
  \tr_{j,l}\left[\Lambda_0 (D_0-v(|\bx|))\Lambda_0-1+\mu\right]_--\tr_l\left[\frac{\bp^2}2-v(|\bx|)+\mu\right]_-\leq C\frac{\gamma^4}{j^2}.
 \end{equation}

\end{theorem}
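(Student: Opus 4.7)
I would follow the proof of \cite[Theorem 3.1]{Franketal2009} step by step, making only one substitution: wherever that argument relied on the orbital Hardy inequality with factor $(l+\eh)^2$, I would use instead the sharper angular-momentum Hardy bound \eqref{drehhardy}, $\bp^2\ge j^2/|\bx|^2$, that is valid on the channel $\gh_{j,l}$. This is precisely what buys the extension from $\gamma\le\gamma_{crit}^B$ to the full range $\gamma<1$ as long as $j\ge 1$.

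First I would reduce the Brown-Ravenhall form to a two-component form. A positive-energy spinor $\psi=\Lambda_0\Phi\in\gh_{j,l}\cap\gH_0$ is determined by its upper Pauli spinor $u$, the lower component being $\bsigma\cdot\bp/(E(\bp)+1)\,u$ with $E(\bp):=\sqrt{\bp^2+1}$. A direct calculation (compare \cite{Evansetal1996, Franketal2008}) turns the Brown-Ravenhall form into
\[
\langle\psi,\Lambda_0(D_0-v(|\bx|))\Lambda_0\psi\rangle=\langle u,(E(\bp)-\tilde v)u\rangle,
\]
where $\tilde v=v$ up to a sign-definite, uniformly bounded remainder. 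The elementary inequality $0\le \bp^2/2-(E(\bp)-1)\le \bp^4/8$ then gives, modulo a controllable error,
\[
\Lambda_0(D_0-v)\Lambda_0-1\ \ge\ \tfrac12\bp^2-v-\tfrac18\bp^4.
\]

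Second, I would dispose of the relativistic correction $\bp^4/8$ using \eqref{drehhardy}. On $\gh_{j,l}$ the bound $\bp^2\ge j^2/|\bx|^2$, combined with $v(r)\le\gamma/r$, yields $v\le(\gamma/j)\,|\bp|$; for $j\ge 1$ and $\gamma<1$ the potential is thus dominated by the kinetic energy uniformly in $\gamma\in[0,1)$. One therefore obtains a relative bound of the form $\bp^2\le A_j(\bp^2/2-v+\mu)+B_j\mu$ with $A_j,B_j$ of order one, and, by squaring, an analogous bound for $\bp^4$. This lets the relativistic correction be absorbed into the Schr\"odinger operator with an error proportional to $\gamma^4$.

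Third, I would apply the min-max principle eigenvalue by eigenvalue: the $n$-th negative eigenvalue of $\Lambda_0(D_0-v)\Lambda_0-1+\mu$ in the channel $(j,l)$ exceeds the corresponding eigenvalue of $\bp^2/2-v+\mu$ in the Schr\"odinger channel $l$ by at most $C\gamma^4/(n+l)^4$, the same order as the fine-structure correction of Lemma~\ref{everror}. Summing in $n$ produces $C\gamma^4\sum_{n\ge 1}(n+l)^{-4}\le C\gamma^4/(l+1)^3\le C\gamma^4/j^2$, which is the asserted bound. The \emph{main obstacle} is the uniform control in $\gamma$ up to $1$: in \cite{Franketal2009} that step was performed only for $\gamma\le\gamma_{crit}^B$, a restriction rooted in the weaker orbital Hardy bound. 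Replacing that bound by \eqref{drehhardy} furnishes exactly the additional room needed to let the argument pass for every $\gamma<1$ provided $j\ge 1$, with a constant $C$ that can be chosen uniformly on $[0,1)$.
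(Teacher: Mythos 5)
Your proposal matches the paper's route. The paper gives no independent proof of this statement: it quotes \cite[Theorem 3.1]{Franketal2009} (valid there for $\gamma\le\gamma_{crit}^B$) and remarks only that the extension to all $\gamma<1$ in higher angular-momentum channels ``is again easily achieved by exploiting inequality \eqref{drehhardy},'' i.e., the channel Hardy bound $\bp^2\ge j^2/|\bx|^2$ in $\gh_{j,l}$. That is precisely the single substitution you identify, and your observation that this is what furnishes the extra room as $\gamma\to1$ once $j$ is bounded away from $\tfrac12$ is exactly the intended point.

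One caveat on your expanded sketch, which the paper does not reproduce. Your step 2 produces an \emph{operator lower bound},
\begin{equation*}
\Lambda_0(D_0-v)\Lambda_0-1\ \ge\ \tfrac12\bp^2-v-\tfrac18\bp^4,
\end{equation*}
which controls $\tr_{j,l}\bigl[\Lambda_0(D_0-v)\Lambda_0-1+\mu\bigr]_-$ from \emph{below}, i.e., it yields $\tr_{j,l}[\cdots]_--\tr_l[\cdots]_-\ge -C\gamma^4/j^2$. The theorem, however, asserts the opposite inequality $\le C\gamma^4/j^2$, which requires controlling the projected operator from \emph{above}. In the two-component reduction that is the step where you must bound the commutator-type remainder $v-(\Phi_0 v\Phi_0+\Phi_1 v\Phi_1)$ from above in the Schr\"odinger form sense, and it is exactly there, rather than in the $\bp^4$ kinetic correction alone, that the improved Hardy bound carries the load. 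Your step 3 already invokes the correctly directed eigenvalue comparison, so the intention is sound; but as written, step 2 delivers the wrong half of the bound, and following \cite{Franketal2009} faithfully would have to supply the matching upper estimate.
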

Combining inequality \ref{lowerbound1} and equality \ref{lowerbound2} with the known asymptotic expansion of the non-relativistic ground state energy gives the desired lower bound for the Furry Hamiltonian.

\section{Upper Bound on the Energy}
In this section we will derive a sufficient decay of the difference of the Furry and Brown-Ravenhall operators when restricted to angular momentum $l$. For high angular momenta, up to an error of lower order the Furry picture will then be replaced by the Brown-Ravenhall picture which in turn was shown to give the correct upper bound in \cite{Franketal2009}. 
We will frequently use the following inequality.
\begin{lemma}
Assume $j\geq \frac32$ and $\gamma\leq 1$. In $\mathfrak{H}_{j,l}$
\begin{equation}
\left\Vert|D_0|^{\frac1{2}}|D_\gamma|^{-\frac1{2}}\right\Vert_{j,l}\leq \sqrt{\frac{j}{j-\gamma}}.
\end{equation}
\label{lemma2}
\end{lemma}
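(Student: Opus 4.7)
The plan is to reduce the claimed bound to a comparison of the squared operators $D_0^2$ and $D_\gamma^2$, and then to descend to absolute values by the operator monotonicity of $x\mapsto\sqrt{x}$. The route is: angular-momentum Hardy, triangle inequality, square-root calculus, polarization.

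First I would use the angular-momentum Hardy bound~\eqref{drehhardy}: on $\mathfrak{H}_{j,l}$ one has $\bp^2\ge j^2/|\bx|^2$, hence, using $D_0^2=\bp^2+1\ge \bp^2$,
\begin{equation*}
  \bigl\|\tfrac{\gamma}{|\bx|}\psi\bigr\|\le\frac{\gamma}{j}\|\bp\psi\|\le\frac{\gamma}{j}\|D_0\psi\|
\end{equation*}
for every $\psi\in\mathfrak{H}_{j,l}$ in the form domain. Writing $D_\gamma=D_0-\gamma/|\bx|$ and applying the reverse triangle inequality, which is effective because $\gamma\le 1$ and $j\ge \tfrac32$ force $\gamma/j<1$, I get
\begin{equation*}
  \|D_\gamma\psi\|\ge \|D_0\psi\|-\bigl\|\tfrac{\gamma}{|\bx|}\psi\bigr\|\ge \frac{j-\gamma}{j}\|D_0\psi\|.
\end{equation*}
Squaring and interpreting the result as a statement on quadratic forms yields
\begin{equation*}
  D_0^2\le\Bigl(\frac{j}{j-\gamma}\Bigr)^{\!2}D_\gamma^2\qquad\text{on }\mathfrak{H}_{j,l}.
\end{equation*}

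Next I would upgrade this to an inequality on the absolute values. Both $D_0$ and $D_\gamma$ commute with the total angular momentum, so $\mathfrak{H}_{j,l}$ reduces them and their spectral functions; applying operator monotonicity of $x\mapsto\sqrt{x}$ on the reduced operators gives $|D_0|\le (j/(j-\gamma))\,|D_\gamma|$ on $\mathfrak{H}_{j,l}$. Substituting $\psi=|D_\gamma|^{-1/2}\phi$ into the associated form inequality $\langle\psi,|D_0|\psi\rangle\le (j/(j-\gamma))\langle\psi,|D_\gamma|\psi\rangle$ rewrites it as $\||D_0|^{1/2}|D_\gamma|^{-1/2}\phi\|^2\le(j/(j-\gamma))\|\phi\|^2$, which is the claimed operator-norm bound.

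The only real subtlety is the middle step: going from the form inequality $D_0^2\le c^2 D_\gamma^2$ to $|D_0|\le c |D_\gamma|$ requires that both sides be read as self-adjoint operators on the invariant subspace $\mathfrak{H}_{j,l}$ so that the square-root functional calculus can be applied coherently. This is precisely what the commutativity of $D_0$ and $D_\gamma$ with the total angular momentum provides. Apart from this bookkeeping, the proof is Hardy plus the triangle inequality, and the use of $j\ge\tfrac32$ enters only to keep $\gamma/j$ strictly less than one so that the triangle bound is non-trivial.
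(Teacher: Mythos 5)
Your proof is correct and follows essentially the same route as the paper's: Hardy's inequality in the angular momentum channel, the reverse triangle inequality for $D_\gamma=D_0-\gamma/|\bx|$, and operator monotonicity of the square root to pass from $D_0^2\le c^2 D_\gamma^2$ to $|D_0|\le c|D_\gamma|$ on $\mathfrak{H}_{j,l}$. You spell out the final polarization step and the role of $\mathfrak{H}_{j,l}$ as a reducing subspace more explicitly than the paper does, but the argument is the same.
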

\begin{proof}
Note that
$
|D_0|=\sqrt{\bp^2+1}
$
and
\begin{equation}
\left\Vert|D_0|^{\frac1{2}}|D_\gamma|^{-\frac1{2}}\right\Vert_{j,l}=\sup_{\psi\in \Pi_{j,l}C_0^\infty(\rz^3,\cz^4)}\frac{\left\Vert|D_0|^{\frac1{2}}\psi\right\Vert_2}{\left\Vert|D_\gamma|^{\frac1{2}}\psi\right\Vert_2}.
\end{equation}
Squaring the operators yields
\begin{equation}   
\begin{aligned}
&\left\Vert |D_\gamma|\psi\right\Vert_{j,l} =\left\Vert (D_0-\frac\gamma{|\bx|})\psi\right\Vert_{j,l} \\ 
\geq &\left\Vert |D_0|\psi\right\Vert_{j,l}-\left\Vert\frac\gamma{|\bx|}\psi\right\Vert_{j,l}\geq (1-\frac\gamma{j}) \left\Vert |D_0|\psi\right\Vert_{j,l}
\end{aligned}
\end{equation}
The claim follows from the operator monotony of the square root.
\end{proof}
\subsection{Estimate on the Electric Potential} We will now show that for high angular momenta the expectation value of the potential in the Furry picture is close to that in the Brown-Ravenhall picture.

\begin{lemma}
\label{Coulombestimate}
  Let $j\geq\frac32$, $\psi\in\Pi_{j,l} H^{1}(\rz^3,\cz^4)$,
  $-\frac{1}{|\bx|}\leq\phi(\bx)\leq \frac{1}{|\bx|}$,
  $\gamma<1$. Then
\begin{equation}
\label{12}
|\langle\psi,(\Lambda_\gamma\phi\Lambda_\gamma-\Lambda_0\phi\Lambda_0)\psi\rangle|\leq \frac{\const}{l^2
}\langle\psi,\bp^2\psi\rangle.
\end{equation} 
\end{lemma}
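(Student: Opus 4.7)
The plan is to mirror the integral-representation argument used in the proof of Lemma~\ref{DifferenceFTF}. I would first decompose
\[
\Lambda_\gamma\phi\Lambda_\gamma-\Lambda_0\phi\Lambda_0=(\Lambda_\gamma-\Lambda_0)\phi\,\Lambda_\gamma+\Lambda_0\,\phi(\Lambda_\gamma-\Lambda_0),
\]
so that it suffices to estimate the first summand, the second being its adjoint. Using $\Lambda_\gamma=\tfrac{1}{2}(1+\sgn(D_\gamma))$ together with $\sgn(A)=\tfrac{1}{\pi}\int_{\rz}(A+\ri\nu)^{-1}\rd\nu$, the resolvent identity yields
\[
\Lambda_\gamma-\Lambda_0=\frac{\gamma}{2\pi}\int_{\rz}\frac{1}{D_\gamma+\ri\nu}\frac{1}{|\bx|}\frac{1}{D_0+\ri\nu}\,\rd\nu,
\]
exactly as in the proof of Lemma~\ref{DifferenceFTF}. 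Substituting this and applying Cauchy-Schwarz under the integral, together with the pointwise bound $|\phi|\leq |\bx|^{-1}$ (which gives $|\langle f,\phi g\rangle|\leq\||\bx|^{-1/2}f\|\,\||\bx|^{-1/2}g\|$), reduces matters to controlling the norms $\||\bx|^{-1/2}(D_\gamma-\ri\nu)^{-1}\psi\|$ and $\||\bx|^{-1/2}(D_0+\ri\nu)^{-1}\phi\Lambda_\gamma\psi\|$.

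The key input is the angular Hardy inequality $\bp^2\geq j^2/|\bx|^2$ on $\gh_{j,l}$ from~\eqref{drehhardy}. By operator monotonicity of the square root this gives $|\bx|^{-1}\leq|\bp|/j$ on $\gh_{j,l}$, and hence $\||\bx|^{-1/2}f\|^2\leq j^{-1}\langle f,|\bp|f\rangle$ for every $f\in\gh_{j,l}$. Since $D_\gamma$ commutes with the total angular momentum $\bJ$, all resolvents and both projections $\Lambda_\gamma,\Lambda_0$ preserve $\gh_{j,l}$, so this Hardy bound is available for every factor. The expression $\Lambda_\gamma\phi\Lambda_\gamma-\Lambda_0\phi\Lambda_0$ carries two independent $|\bx|^{-1}$-factors --- one from $\Lambda_\gamma-\Lambda_0$, one dominating $\phi$ --- which after a symmetric Cauchy-Schwarz split give four $|\bx|^{-1/2}$-factors, each contributing a Hardy gain of $1/\sqrt{j}$; their product yields the claimed $1/j^2\sim 1/l^2$.

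For the $\nu$-integration I would exploit that $|\bp|$ commutes with $D_0^2=\bp^2+1$, so functional calculus gives $\||\bx|^{-1/2}(D_0+\ri\nu)^{-1}g\|^2\leq Cj^{-1}(1+\nu^2)^{-1/2}\|g\|^2$. For the interacting resolvent I would invoke Lemma~\ref{lemma2} to replace $|D_0|^{1/2}$ by $|D_\gamma|^{1/2}$ at the cost of the bounded factor $\sqrt{j/(j-\gamma)}$, producing an analogous bound with $(1-\gamma^2+\nu^2)^{-1/2}$. The resulting scalar integral $\int_{\rz}\rd\nu/\sqrt{(1+\nu^2)(1-\gamma^2+\nu^2)}$ is finite uniformly in $j,l$. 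To relate the remaining norms to $\langle\psi,\bp^2\psi\rangle$, I would bound $\|\phi\Lambda_\gamma\psi\|\leq j^{-1}\|\bp\Lambda_\gamma\psi\|$ by Hardy and $\|\bp\Lambda_\gamma\psi\|\leq C\|\bp\psi\|$ by combining Lemma~\ref{lemma2} with $\gamma/|\bx|\leq\gamma|\bp|/j$.

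The hard part will be the careful bookkeeping required to ensure that both $|\bx|^{-1}$-factors actually realize their Hardy gains, that the $\nu$-integrand stays integrable once all resolvent norms are absorbed, and that all constants remain uniformly controlled in the regime $\gamma<1$. Lemma~\ref{lemma2} is the essential tool here, as it converts estimates involving the interacting resolvent into ones for $|D_0|$ and $|\bp|$, for which the free functional calculus applies directly, at the price of a factor depending only on $\gamma$ and $j$ and uniformly bounded on $\gh_{j,l}$ for $j\geq 3/2$.
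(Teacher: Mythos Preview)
Your overall strategy is sound, but there is a genuine gap in the $\nu$-integrability. Your pointwise Cauchy--Schwarz produces the product of \emph{norms}, not of squared norms: from your own bound $\||\bx|^{-1/2}(D_0+\ri\nu)^{-1}g\|^2\leq Cj^{-1}(1+\nu^2)^{-1/2}\|g\|^2$ you get only $(1+\nu^2)^{-1/4}$ decay per factor, so the integrand behaves like $(1+\nu^2)^{-1/4}(1-\gamma^2+\nu^2)^{-1/4}\sim|\nu|^{-1}$ at infinity and $\int_{\rz}\rd\nu/|\nu|$ diverges logarithmically. The integral you wrote down, $\int_{\rz}\rd\nu/\sqrt{(1+\nu^2)(1-\gamma^2+\nu^2)}$, would correspond to the product of the \emph{squared}-norm bounds, which is not what the estimate delivers. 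Applying Cauchy--Schwarz in $L^2(\rd\nu)$ instead does not help either, since $\int_{\rz}(1+\nu^2)^{-1/2}\,\rd\nu$ still diverges.

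The paper repairs this by first symmetrising the contour integral: combining the $+\ri z$ and $-\ri z$ contributions of $\tfrac{1}{2\pi}\int_{\rz}(D_0-\ri z)^{-1}|\bx|^{-1}(D_\gamma-\ri z)^{-1}\rd z$ yields
\[
\Lambda_0-\Lambda_\gamma=-\frac{\gamma}{\pi}\int_0^\infty\frac{1}{D_0^2+z^2}\Bigl(D_0\frac{1}{|\bx|}D_\gamma-\frac{z^2}{|\bx|}\Bigr)\frac{1}{D_\gamma^2+z^2}\,\rd z,
\]
so that \emph{squared} resolvents $(D^2+z^2)^{-1}$ appear. Cauchy--Schwarz in $L^2(\rd z)$ then gives factors like $\int_0^\infty\||\bx|^{-1}D_0(D_0^2+z^2)^{-1}\psi\|^2\,\rd z$, and the scalar identity $\int_0^\infty t^2(t^2+z^2)^{-2}\,\rd z=\pi/(4t)$ makes these finite and reproduces $|D_0|^{-1/2}$. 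This is the missing ingredient that buys the extra power of decay. A minor aside: in your decomposition the two summands are not adjoints of one another (the adjoint of $(\Lambda_\gamma-\Lambda_0)\phi\Lambda_\gamma$ is $\Lambda_\gamma\phi(\Lambda_\gamma-\Lambda_0)$, not $\Lambda_0\phi(\Lambda_\gamma-\Lambda_0)$), though both can indeed be handled by the same argument; the paper instead splits into the cross term $2I=2|\langle\psi,(\Lambda_\gamma-\Lambda_0)\phi\Lambda_0\psi\rangle|$ and the diagonal term $II=|\langle\psi,(\Lambda_\gamma-\Lambda_0)\phi(\Lambda_\gamma-\Lambda_0)\psi\rangle|$.
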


\begin{proof}
  The proof follows loosely the arguments found in
  \cite{Griesemeretal1999}.

One has
\begin{equation}
\begin{aligned}
\label{14}
&|\langle\psi,(\Lambda_\gamma\phi\Lambda_\gamma-\Lambda_0\phi\Lambda_0)\psi\rangle|\\ \leq &2\underbrace{|\langle\psi,(\Lambda_\gamma-\Lambda_0)\phi\Lambda_0\psi\rangle|}_{I}+\underbrace{|\langle\psi,(\Lambda_\gamma-\Lambda_0)\phi(\Lambda_\gamma-\Lambda_0)\psi\rangle|}_{II}.
\end{aligned}
\end{equation}
The second resolvent identity gives
\begin{equation}
\label{47}
\begin{aligned}
  \Lambda_0-\Lambda_\gamma&=-\frac \gamma{2\pi}\int_{-\infty}^{\infty}\rd z\frac1{D_0-\ri z}\frac1{|\bx|}\frac1{D_\gamma-\ri z}\\
  &=-\frac{\gamma}{\pi}\int_{0}^\infty\rd
  z\frac1{D_0^2+z^2}\left(D_0\frac1{|\bx|}D_\gamma-\frac{z^2}{|\bx|}\right)\frac1{D_\gamma^2+z^2}.
\end{aligned}
\end{equation}
Term $I$:
\begin{equation}
  \label{48}
\begin{aligned}
&|\langle\psi,(\Lambda_\gamma-\Lambda_0)\phi\Lambda_0\psi\rangle|\\=&|\langle\psi,\frac\gamma{\pi}\int_{0}^\infty\rd z\frac1{D_0^2+z^2}\left(D_0\frac1{|\bx|}D_\gamma-\frac{z^2}{|\bx|}\right)\frac1{D_\gamma^2+z^2}\phi\Lambda_0\psi\rangle|.
\end{aligned}
\end{equation}
The first summand on the right side of \eqref{48} is estimated as follows
\begin{equation}
\begin{aligned}
&|\langle\psi,\frac\gamma{\pi}\int_{0}^\infty\rd z\frac1{D_0^2+z^2}D_0\frac1{|\bx|}D_\gamma\frac1{D_\gamma^2+z^2}\phi\Lambda_0\psi\rangle|\\
\leq&\frac\gamma{\pi}\left[\int_0^\infty\rd z\left\Vert\frac{1}{|\bx|}D_0\frac1{D_0^2+z^2}\psi\right\Vert ^2\right]^{\frac1{2}}\left[\int_0^\infty\rd z\left\Vert D_\gamma\frac1{D_\gamma^2+z^2}\phi\Lambda_0\psi\right\Vert ^2\right]^{\frac1{2}}.
\end{aligned}
\end{equation}
Note that
\begin{equation}
\int_0^\infty\rd z \frac1{(1+z^2)^2}=\int_0^\infty\rd z \frac{z^2}{(1+z^2)^2}=\frac{\pi}{4}.
\end{equation}
Thus,
\begin{equation}
  \label{51}
\begin{aligned}
  &\int_0^\infty\rd z\left\Vert\frac{1}{|\bx|}D_0\frac1{D_0^2+z^2}\psi\right\Vert^2\leq\frac1{j^2}\int_0^\infty\rd z\left\Vert |\bp||D_0|\frac1{D_0^2+z^2}\psi\right\Vert ^2\\
  \leq&\frac{\pi}{4j^2}\left\Vert
    |\bp||D_0|^{-\frac12}\psi\right\Vert^2
  \leq\frac{\pi}{4j^2}\left\Vert |\bp|\psi\right\Vert^2.
\end{aligned}
\end{equation}
Similarly,
\begin{equation}
  \label{52}
  \begin{aligned}
    &\int_0^\infty\rd z\left\Vert D_\gamma\frac1{D_\gamma^2+z^2}\phi\Lambda_0\psi\right\Vert ^2 = {\pi\over4} \left\Vert |D_\gamma|^{-\frac12}\phi\Lambda_0\psi\right\Vert ^2\\
    \leq&{\pi\over4}\frac{j}{j-1}\left\Vert
      |D_0|^{-\frac12}\phi\Lambda_0\psi\right\Vert
    ^2\leq\frac\pi{4j(j-1)}\left\Vert \bp\psi\right\Vert^2
  \end{aligned}
\end{equation}
where the last line uses Hardy's inequality. 

The second summand of \eqref{48}
\begin{equation}
\begin{aligned}
  &|\langle\psi,\frac \gamma{\pi}\int_{0}^\infty\rd z\frac1{D_0^2+z^2}\frac{z^2}{|\bx|}\frac1{D_\gamma^2+z^2}\phi\Lambda_0\psi\rangle|\\
  \leq&\frac \gamma{\pi}\left[\int_0^\infty\rd
    z\left\Vert\frac1{|\bx|}\frac{z}{D_0^2+z^2}\psi\right\Vert
    ^2\right]^{\frac1{2}}\left[\int_0^\infty\rd
    z\left\Vert\frac{z}{D_\gamma^2+z^2}\phi\Lambda_0\psi\right\Vert
    ^2\right]^{\frac1{2}}.
\end{aligned}
\end{equation}
The first factor on the right side is estimated similarly as in
\eqref{51}; the second factor is estimated similarly as in \eqref{52}
with the same results. Putting everything together gives
\begin{equation}
  \label{eq:sum}
  I \leq {\gamma\over2j(j-1)} \|\bp\psi\|^2
\end{equation}

Term $II$: The square root of $II$ is
\begin{equation}
 \begin{aligned}
  &|\langle\psi,(\Lambda_\gamma-\Lambda_0)\phi(\Lambda_\gamma-\Lambda_0)\psi\rangle|^{1/2}
\leq\frac1{j^{1\over2}}\left\Vert|\bp|^{1\over2}(\Lambda_\gamma-\Lambda_0)\psi\right\Vert\\
\leq&\frac1{j^{1\over2}}\sup\limits_{\left\Vert h\right\Vert=1 \atop h\in\mathfrak{H}_{j,l}}|\langle h, \int_{0}^\infty\rd z|\bp|^\frac1{2}\frac1{D_\gamma^2+z^2}\left(D_\gamma\frac1{|\bx|}D_0-\frac{z^2}{|\bx|}\right)\frac1{D_0^2+z^2}\psi\rangle|.
 \end{aligned}
\end{equation}
using \eqref{47} in the last step.  The second term of the right side yields
\begin{equation}
  \begin{aligned}
    &|\langle h, \int_{0}^\infty\rd z|\bp|^\frac1{2}\frac1{D_\gamma^2+z^2}\frac{z^2}{|\bx|}\frac1{D_0^2+z^2}\psi\rangle| \\
    \leq &\left[\int_{0}^\infty\left\Vert \frac{z}{D_\gamma^2+z^2}|\bp|^\frac1{2} h\right\Vert^2 \rd z\right]^{\frac12}\left[\int_{0}^\infty\left\Vert \frac1{|\bx|}\frac{z}{D_0^2+z^2}\psi\right\Vert^2\rd z\right]^{\frac12}\\
    \leq &\frac{\sqrt{\pi}}{2j}\left\Vert \frac{1}{|D_\gamma|^{\frac12}}|\bp|^\frac1{2} h\right\Vert\left[\int_{0}^\infty\left\Vert \bp\frac{z}{D_0^2+z^2}\psi\right\Vert^2\rd z\right]^{\frac12}\\
    \leq &\frac{\pi}{4\sqrt{j(j-1)}}\left\Vert h\right\Vert\left\Vert \bp\frac1{|D_0|^\frac12}\psi\right\Vert
    \leq \frac{\pi}{4\sqrt{j(j-1)}}\left\Vert
      \bp\psi\right\Vert.
  \end{aligned}
\end{equation}
The first term can be treated analogously.

Summing $I$ and $II$ yields the claimed estimate since $j+\frac12\geq l \geq j-\frac12$ in $\mathfrak{H}_{j,l}$.
\end{proof}
\subsection{Estimate on the Projected Dirac-Coulomb Operator}

We will need a similar estimate for the expectation value of the Dirac
operator $D_\gamma$.
\begin{lemma}
\label{kineticestimate}
Assume $\gamma\in[0,1]$, $j\geq\frac32$ and $\psi\in\Pi_{l,j} \Lambda_0H^{1}(\rz^3,\cz^4)$. Then
\begin{equation}
\label{12a}
0\leq\langle\psi,\Lambda_\gamma(D_\gamma-1)\Lambda_\gamma\psi\rangle
- \langle\psi,(D_\gamma-1)\psi\rangle \leq \frac{\const}{2l^2}\langle\psi,\bp^2\psi\rangle.
\end{equation} 
\end{lemma}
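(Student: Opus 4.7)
My strategy is to reduce the two-sided inequality to an $L^2$-estimate on $\||D_\gamma|^{1/2}(\Lambda_0-\Lambda_\gamma)\psi\|$ and then treat this reduced estimate with the same resolvent machinery used in the proof of Lemma~\ref{Coulombestimate}. Since $\Lambda_\gamma$ commutes with $D_\gamma$, one has $\Lambda_\gamma(D_\gamma-1)\Lambda_\gamma-(D_\gamma-1)=-\Lambda_\gamma^\perp(D_\gamma-1)\Lambda_\gamma^\perp$, so the expression whose size we control equals $\langle\psi,\Lambda_\gamma^\perp(1-D_\gamma)\Lambda_\gamma^\perp\psi\rangle$. For $\gamma<1$ the operator $D_\gamma$ has no spectrum in $(-1,0]$ (its eigenvalues lie in $(0,1)$ and its negative essential spectrum is $(-\infty,-1]$), so $D_\gamma\le -1$ on $\operatorname{ran}\Lambda_\gamma^\perp$. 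This immediately yields the lower bound $\ge 0$ and, moreover, $1-D_\gamma=1+|D_\gamma|\le 2|D_\gamma|$ on $\Lambda_\gamma^\perp$. Using $\psi=\Lambda_0\psi$ to rewrite $\Lambda_\gamma^\perp\psi=(\Lambda_0-\Lambda_\gamma)\psi$, the upper bound is reduced to proving $\||D_\gamma|^{1/2}(\Lambda_0-\Lambda_\gamma)\psi\|^2\le\frac{C}{4l^2}\|\mathbf{p}\psi\|^2$.

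To establish this reduced inequality I would substitute the resolvent formula \eqref{47} for $\Lambda_0-\Lambda_\gamma$, dualize via $\sup_{\|h\|=1}|\langle|D_\gamma|^{1/2}h,(\Lambda_0-\Lambda_\gamma)\psi\rangle|$ with $h\in\mathfrak{H}_{j,l}$, and split the integrand into its two natural summands (coming from $D_0|\mathbf{x}|^{-1}D_\gamma$ and from $z^2|\mathbf{x}|^{-1}$). Each summand is estimated by Cauchy--Schwarz in $z$ and in $L^2$. Splitting the middle singularity symmetrically as $|\mathbf{x}|^{-1}=|\mathbf{x}|^{-1/2}\,|\mathbf{x}|^{-1/2}$ and applying the square-root form of Hardy's inequality \eqref{drehhardy}, $|\mathbf{x}|^{-1/2}\le j^{-1/2}|\mathbf{p}|^{1/2}$, extracts a factor of $j^{-1/2}$ on each side. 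The standard spectral integrals $\int_0^\infty\lambda^{2k}(\lambda^2+z^2)^{-2}\,\mathrm dz$ then generate the appropriate powers of $|D_0|$ and $|D_\gamma|$, which are interchanged with a constant uniformly bounded in $\gamma<1$ for $j\ge 3/2$ via Lemma~\ref{lemma2} together with its reverse $|D_\gamma|\le\frac{j}{j-\gamma}|D_0|$ (proved analogously). Since $j\ge 3/2$ and $j\ge l-\tfrac12\ge l/2$ in $\mathfrak{H}_{j,l}$, the combined factor $j^{-2}$ translates into the required $l^{-2}$.

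The main technical obstacle is the extra factor $|D_\gamma|^{1/2}$ on the test-function side of the inner product, which is absent in Lemma~\ref{Coulombestimate}. In that earlier lemma the ``sandwich'' operator was a scalar $\phi$ with $|\phi|\le|\mathbf{x}|^{-1}$, whereas here the role of $\phi$ is played by the unbounded operator $|D_\gamma|$. The Cauchy--Schwarz splitting must therefore be arranged so that the $|D_\gamma|^{1/2}$ appearing next to $h$ is exactly cancelled by an $|D_\gamma|^{-1/2}$ generated through the $z$-integration on the $h$-side; otherwise one is left with the unbounded quantity $\||D_\gamma|^{1/2}h\|$ and the supremum over unit $h$ diverges. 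Once this bookkeeping is in place the argument is a direct analogue of the Term~II computation in the proof of Lemma~\ref{Coulombestimate}.
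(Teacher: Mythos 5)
Your proposal is correct and follows essentially the same route as the paper: both start by writing $\Lambda_\gamma(D_\gamma-1)\Lambda_\gamma-(D_\gamma-1)=-\Lambda_\gamma^\perp(D_\gamma-1)\Lambda_\gamma^\perp$, use $\psi=\Lambda_0\psi$ to replace $\Lambda_\gamma^\perp\psi$ by $(\Lambda_0-\Lambda_\gamma)\psi$, note that the resulting operator in the middle is manifestly nonpositive (giving the lower bound), dominate it by a multiple of $|D_\gamma|$, and then reduce to bounding $\||D_\gamma|^{1/2}(\Lambda_0-\Lambda_\gamma)\psi\|$ via the resolvent integral \eqref{47} together with Hardy's inequality in the channel $\gh_{j,l}$.

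Two differences are worth recording. First, in the domination step the paper invokes the generic spectral gap $|D_\gamma|\ge\sqrt{1-\gamma^2}$ to get $|D_\gamma|+1\le(1+\tfrac1{\sqrt{1-\gamma^2}})|D_\gamma|$, a constant that blows up as $\gamma\to1$ even though the lemma is stated for $\gamma\in[0,1]$. You instead observe that the operator in question lives entirely on $\operatorname{ran}\Lambda_\gamma^\perp$, where $D_\gamma\le-1$, so $1+|D_\gamma|\le 2|D_\gamma|$ with a constant uniform over all $\gamma\in[0,1]$. This is a genuine tightening of the paper's argument and removes the only place in the proof where the constant depended on $\gamma$. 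Second, to estimate $\||D_\gamma|^{1/2}(\Lambda_0-\Lambda_\gamma)\psi\|$ you dualize against a unit vector $h\in\gh_{j,l}$ and apply Cauchy--Schwarz to the $z$-integral, in the style of the Term~II computation in the proof of Lemma~\ref{Coulombestimate}, whereas the paper's proof of Lemma~\ref{kineticestimate} directly takes norms inside the $\nu$-integral and uses a weighted Cauchy--Schwarz. Both work; your bookkeeping concern about cancelling the $|D_\gamma|^{1/2}$ on the $h$-side resolves itself automatically, since that factor sits next to $(D_\gamma^2+z^2)^{-1}$ and the $z$-integral produces exactly $|D_\gamma|^{-1/2}$, so in fact no appeal to Lemma~\ref{lemma2} is needed on that side. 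The place where Lemma~\ref{lemma2} (or the elementary variant you mention) does enter is in controlling the eventual $|D_0|^{-1/2}$ or $|D_\gamma|^{-1/2}$ against $|\bp|$, exactly as in the paper.
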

\begin{proof}
Observe that
\begin{align*}
&\langle\psi,\Lambda_\gamma(D_\gamma-1)\Lambda_\gamma\psi\rangle=\langle\psi,\Lambda_0\Lambda_\gamma(D_\gamma-1)\Lambda_\gamma\Lambda_0\psi\rangle\\
=&\langle\psi,\Lambda_0(1-\Lambda_\gamma^\perp)(D_\gamma-1)(1-\Lambda_\gamma^\perp)\Lambda_0\psi\rangle\\
=&\langle\psi,\Lambda_0(D_\gamma-1)\Lambda_0\psi\rangle-\langle\psi,\Lambda_0\Lambda_\gamma^\perp(D_\gamma-1)\Lambda_\gamma^\perp\Lambda_0\psi\rangle\\
=&\langle\psi,\Lambda_0(D_\gamma-1)\Lambda_0\psi\rangle-\langle\psi,(\Lambda_0-\Lambda_\gamma)(\left[D_\gamma\right]_--1)(\Lambda_0-\Lambda_\gamma)\psi\rangle.
\end{align*}
First of all, we note that
\begin{equation}
 \left[D_\gamma\right]_--1<0
\end{equation}
which allows us to make the crude estimates
\begin{equation}
  |\left[D_\gamma\right]_--1|= -\left[D_\gamma\right]_-+1 \leq |D_\gamma|+1\leq (1+\frac1{\sqrt{1-\gamma^2}})|D_\gamma|.
\end{equation}
Hence,

\begin{align*}
  |\langle\psi,(\Lambda_0-\Lambda_\gamma)(\left[D_\gamma\right]_--1)(\Lambda_0-\Lambda_\gamma)\psi\rangle|\leq(1+\frac1{\sqrt{1-\gamma^2}})\Vert|D_\gamma|^\frac12(\Lambda_0-\Lambda_\gamma)\psi\Vert^2.
\end{align*}
Now, 
\begin{align*}
A:=\Vert|D_\gamma|^\frac12(\Lambda_0-\Lambda_\gamma)\psi\Vert&=\frac{\gamma}{2\pi}\int_{-\infty}^{+\infty}\rd\nu\Vert|D_\gamma|^\frac12\frac1{D_\gamma+\ri\nu}\frac1{|\cdot|}\frac1{D_0+\ri\nu}\psi\Vert\\
&\leq\frac{\gamma}{2\pi}\int_{-\infty}^{+\infty}\rd\nu\Vert|D_\gamma|^\frac12\frac1{D_\gamma+\ri\nu}\Vert\Vert\frac1{|\cdot|}\frac1{D_0+\ri\nu}\psi\Vert
\end{align*}
Since $l\geq1$, we have $\langle\psi, D_\gamma\psi\rangle \geq
1/\sqrt2$ (see \eqref{l2}) and
$$ {|D_\gamma|\over D_\gamma^2 + \nu^2} \leq {\max\{1/\sqrt2,|\nu|\}\over \max\{1/2,\nu^2\} +\nu^2}.$$
Using Hardy's inequality we get
\begin{align*}
  A&\leq\frac{\gamma}{2\pi j}\int_{-\infty}^{+\infty}\rd\nu \sqrt{{\max\{\sqrt{1/2},|\nu|\}\over \max\{1/2,\nu^2\} +\nu^2}}\Vert\bp\frac1{D_0+\ri\nu}\psi\Vert\\
  &\leq\frac{\gamma}{2\pi j}\left(\int_{-\infty}^{+\infty}\rd\nu\frac{\max(\sqrt{1/2},|\nu|)}{\max(1/2,\nu^2)+\nu^2}\frac1{|\nu|^\frac12}\int_{-\infty}^{+\infty}\rd\mu|\mu|^\frac12\Vert\bp\frac1{D_0+\ri\mu}\psi\Vert^2\right)^\frac12\\
  &\leq\frac{\sqrt[8]{2}\gamma}{2\pi j}\left(\underbrace{\int_{-\infty}^{+\infty}\rd\nu\frac{\max(1,|\nu|)}{\max(1,\nu^2)+\nu^2}\frac1{|\nu|^\frac12}}_{\leq 6}\int_{-\infty}^{+\infty}\rd\mu|\mu|^\frac12\Vert\bp\frac1{D_0+\ri\mu}\psi\Vert^2\right)^\frac12\\
  &\leq
  \frac{\gamma\sqrt6}{2\pi j}\left\langle\psi,\bp\frac1{|D_0|^\frac12}\bp\psi\right\rangle^\frac12
  \leq {\gamma\sqrt6\over2\pi j}\|\bp\psi\|.
\end{align*}

\end{proof}
\subsection{Inserting the Trial Density Matrix}

To finish the proof of the upper bound we will insert the trial
density matrix constructed in Appendix \ref{TDM} (in the correct units
$V_c^*dV_c$) into the Hartree-Fock functional (without exchange energy)
\begin{equation}
  \label{eq:hf}
  \cE_\mathrm{HF}(d):= \tr\left[ \left(c\balpha\cdot\bp+c^2\beta
  -c^2-\frac{Z}{|\bx|}\right)d \right]+ D[\rho]
\end{equation}
where $\rho(\bx):=\tr_{\cz^4}d(\bx,\bx)$. We will need the following
auxiliary lemma on the Schr\"odinger energy.

\begin{lemma}
  \label{Schrenergery}
  For $\gamma:=\frac{Z}c<1$
 \begin{equation}
   E_S(Z)\geq\tr[(\frac{\bp^2}2-\frac{Z}{|\bx|})d^S_<]
   +\tr[(c\balpha\cdot\bp+c^2\beta-c^2-\frac{Z}{|\bx|})d_{>,\Phi}]+D\left[\rho_{>,\Phi}\right]+\bO(Z^\frac{47}{24})
 \end{equation}
 with $$\rho(\bx)_{>,\Phi}:=\tr_{\cz^4}(d_{>,\Phi}(\bx,\bx)).$$
\end{lemma}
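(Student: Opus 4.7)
The plan is to combine the rigorous non-relativistic Scott lower bound with a controlled replacement of the Schr\"odinger operator by the rescaled Dirac operator on part of a well-chosen trial density matrix. As a first step I would invoke the known Scott lower bound
\[
E_S(Z)\geq E_\mathrm{TF}(Z)+\tfrac12 Z^2+\bO(Z^{47/24})
\]
(Hughes \cite{Hughes1986,Hughes1990}; Siedentop--Weikard \cite{SiedentopWeikard1987O,SiedentopWeikard1987U,SiedentopWeikard1989} refined by the semiclassical error control that produces the power $47/24$), and then rewrite the right-hand side as a Hartree--Fock-type energy of the trial density matrix $d^S=d^S_<+d^S_>$ constructed in Appendix \ref{TDM}. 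Here the splitting corresponds to the angular-momentum cutoff $L=[Z^{1/12}]$ already used in \eqref{lowerbound2}: $d^S_<$ is the Thomas--Fermi bulk carried by high-$l$ channels and $d^S_>$ is the innermost, hydrogenically bound component that generates the $\tfrac12 Z^2$ Scott contribution.

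In the second step I would replace the Schr\"odinger contribution of $d^S_>$ by its Dirac analogue. Applying the scaling operator $V_c$ of Section \ref{scaling} sends $d^S_>$ to a density matrix on the natural relativistic $Z^{-1}$-scale, and the label $\Phi$ absorbs the positive Furry projection $\Lambda_\gamma$ composed with the angular-momentum projections $\Pi_{j,l}$. The kinetic-plus-rest term is replaced using Lemma \ref{kineticestimate} and the Coulomb potential using Lemma \ref{Coulombestimate}, each contributing an $l^{-2}\langle\bp^2\rangle$ error per angular-momentum channel. Because $d^S_<$ and $d_{>,\Phi}$ live on the well-separated scales $Z^{-1/3}$ and $Z^{-1}$ respectively, the cross-terms $D(\rho^S_<,\rho_{>,\Phi})$ and the pure-bulk self-interaction $D[\rho^S_<]$ are absorbed into $E_\mathrm{TF}(Z)$ and into the $\bO(Z^{47/24})$ remainder, leaving only $D[\rho_{>,\Phi}]$ on the right-hand side as claimed.

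The main obstacle will be keeping the cumulative replacement error below $Z^2$. Since Lemmas \ref{kineticestimate} and \ref{Coulombestimate} require $j\geq\tfrac32$, the channel $j=\tfrac12$ (i.e.\ $l\in\{0,1\}$) must be handled directly: Corollary \ref{everror3} supplies the pointwise eigenvalue shift $0\leq\lambda^S_{\gamma,n,l}-\lambda^D_{\gamma,n,l,j}\leq C\gamma^4/((n+l)^3 l)$, and Lemma \ref{Coulomblow} controls $\langle\gamma/|\bx|\rangle$, producing a per-state error of order $c^2\gamma^4/(n+l)^3=Z^2\gamma^2/(n+l)^3$. Summing these low-$j$ errors over the occupied innermost states and balancing them against the $l^{-2}$-tail from the $j\geq\tfrac32$ sector by means of the cutoff $L=[Z^{1/12}]$ yields the final exponent $\tfrac{47}{24}=2-\tfrac{1}{24}$.
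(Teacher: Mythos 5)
Your proposal rests on a structural misreading that makes the intended argument unworkable, and the actual proof is considerably more elementary than what you set up.

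First, the index $\Phi$ in $d_{>,\Phi}$ does \emph{not} denote the Furry projection $\Lambda_\gamma$: in Appendix~\ref{TDM} one has $d_{>,\Phi}=\mathbf{\Phi}_c\,d_>\,\mathbf{\Phi}_c^*$, where $\mathbf{\Phi}_c$ is the Foldy--Wouthuysen isometry of Appendix~\ref{FWT}. This matters because Lemma~\ref{Schrenergery} is a purely non-relativistic/Brown--Ravenhall statement that never touches $\Lambda_\gamma$, and the Lemmata~\ref{Coulombestimate} and~\ref{kineticestimate} that you invoke compare the $\Lambda_\gamma$- and $\Lambda_0$-projected operators; they are the tools for the later estimate of $R_1$ and $R_2$, not for this lemma. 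Second, you have the roles of $d_<$ and $d_>$ reversed: $d_<$ carries the \emph{low} angular momenta $l<L=[Z^{1/12}]$ (the innermost, hydrogenic part, on scale $Z^{-1}$), while $d_>$ carries $l\geq L$ (the Thomas--Fermi bulk, on scale $Z^{-1/3}$). This is why the lemma keeps the Schr\"odinger kinetic energy on $d^S_<$ and replaces only the high-$l$ part by the Dirac form; your reading would produce the opposite split. Third, the $j=\tfrac12$ channel plays no role here and there is no balancing of low- versus high-$j$ errors; the exponent $47/24$ is inherited directly from the known non-relativistic estimate, as explained below.

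The paper's argument is short. By \cite{Franketal2008} and \cite[Proposition~4]{SiedentopWeikard1987O} one has the identity
\begin{equation}
  E_S(Z)=\tr\Bigl[\bigl(\tfrac{\bp^2}{2}-\tfrac{Z}{|\bx|}\bigr)d^S\Bigr]+D[\rho^S]+\bO(Z^{47/24}),
\end{equation}
with $d^S=d^S_<+d_>$. Expanding $D[\rho^S]$ and discarding the manifestly positive interaction terms involving $\rho^S_<$ leaves $D[\rho_>]$. The replacement of $d_>$ by $d_{>,\Phi}$ is then handled term by term: since $\mathbf{\Phi}_c$ is an isometry onto the free positive subspace,
\begin{equation}
  \tr\bigl[(c\balpha\cdot\bp+c^2\beta-c^2)d_{>,\Phi}\bigr]=\tr\bigl[(\sqrt{c^2\bp^2+c^4}-c^2)d_>\bigr]\leq\tfrac12\tr[\bp^2 d_>],
\end{equation}
while the Coulomb replacements are controlled by \cite[Lemmata~4.6 and~4.7]{Franketal2009}, which give $\tr[\tfrac{Z}{|\bx|}d_{>,\Phi}]-\tr[\tfrac{Z}{|\bx|}d_>]=\bO(Z^{23/12})$ and $D(\rho_{>,\Phi}-\rho_>,\rho_{>,\Phi}+\rho_>)=\bO(Z^{5/3})$. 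Both are smaller than $Z^{47/24}$, so the claimed bound follows. None of Lemma~\ref{Coulombestimate}, Lemma~\ref{kineticestimate}, Corollary~\ref{everror3}, or Lemma~\ref{Coulomblow} is used in this step.
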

\begin{proof}
  We recall, following \cite{Franketal2008} and \cite[Proposition
  4]{SiedentopWeikard1987O},
\begin{equation}
 E_S(Z)=\tr[(\frac{\bp^2}2-\frac{Z}{|\bx|})d^S]+D_(\rho^S,\rho^S)+\bO(Z^{47/24})
\end{equation}
where
$\rho^S(\bx):=\tr_{\cz^2}(d^S(\bx,\bx))=\tr_{\cz^2}(d^S_<(\bx,\bx))+\underbrace{\tr_{\cz^2}(d_>(\bx,\bx)}_{=:\rho_>(\bx))}$.
We drop the Coulomb interaction between orbitals with small angular momenta and the total density, since it is positive.  Furthermore, by
\cite[Lemma 4.7.]{Franketal2009}, it holds
\begin{equation}
 D(\rho_{>,\Phi}-\rho_>, \rho_{>,\Phi}+\rho_>)=\bO(Z^{5/3}).
\end{equation}
Note that in the reference the statement is given in terms of the transformation $U_c(\bA):=\Phi_0(\bp/c)\bA\Phi_0(\bp/c)+\Phi_0(\bp/c)\bA\Phi_0(\bp/c)$ and the density
\begin{equation}
 \rho_{U,>}(\bx):=\tr_{\cz^2}(U_c (d_{>})(\bx,\bx)).
\end{equation}
However,
\begin{equation}
 \rho_{U,>}\equiv\rho_{>,\Phi}.
\end{equation}

Finally, by Frank et al \cite[Lemma 4.6.]{Franketal2009}
\begin{equation}
\tr[\frac{Z}{|\bx|}d_{>,\Phi}]-\tr[\frac{Z}{|\bx|}d_>]=\bO(Z^{23/12})
\end{equation}
and the observation that
\begin{equation}
 \tr[(c\balpha\cdot\bp+c^2\beta -c^2)d_{>,\Phi}]=\tr[(\sqrt{c^2\bp^2+c^4}-c^2)d_>]\leq\frac12\tr[\bp^2d_>]
\end{equation}
we conclude the proof.
\end{proof}
Now, $\cE_\mathrm{HF}$ yields the following estimate.
\begin{proposition}
  \label{HartreeFock}
  For $\frac{Z}{c}<1$,
  \begin{align}
    E_{\gamma}(Z)\leq \tr[ (c\balpha\cdot\bp+c^2\beta
    -c^2-\frac{Z}{|\bx|})\Lambda_{Z,c}d ]+D(\rho^F,\rho^F)
  \end{align}
  where
  \begin{equation}
    \Lambda_{Z,c}:=\I_{(0,\infty)}(c\balpha\cdot\bp+c^2\beta)
  \end{equation}
  and
  \begin{equation}
    \rho^F(\bx):= \tr_{\cz^4}((\Lambda_{Z,c}d\Lambda_{Z,c})(\bx,\bx)).
  \end{equation}
\end{proposition}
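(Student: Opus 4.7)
The plan is to apply the variational principle $E_\gamma(Z) \leq \cE^Z[\Psi]$ with a trial Slater determinant $\Psi \in \gH_\gamma^Z$ built from the density matrix $d$ constructed in Appendix \ref{TDM}, and then drop the non-negative exchange contribution to obtain a no-exchange Hartree--Fock upper bound matching the right-hand side of the proposition.

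Concretely, I would use the block decomposition of $d$ into an inner part (exact Dirac--Coulomb eigenfunctions on angular momentum channels $l<L$, automatically in $\Lambda_\gamma L^2$) and an outer part $d_{>,\Phi}$ supported in the range of the free positive projection $\Lambda_{Z,c}$. Taking the orthonormal eigenvectors of $\Lambda_{Z,c}d\Lambda_{Z,c}$, further projecting the outer ones onto $\Lambda_\gamma L^2$ and Gram--Schmidt orthonormalizing yields orbitals $\phi_1,\dots,\phi_Z$ in $\Lambda_\gamma L^2(\rz^3:\cz^4)$ whose rank-$Z$ projection $d_\Psi$ approximates $\Lambda_{Z,c}d\Lambda_{Z,c}$. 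The Slater determinant built from these orbitals is antisymmetric and lies in $\gH_\gamma^Z$, so the variational principle applies; dropping the exchange contribution then gives
\begin{equation*}
 \cE^Z[\Psi]\leq\tr\bigl[(c\balpha\cdot\bp+c^2\beta-c^2-Z/|\bx|)\,d_\Psi\bigr]+D(\rho_\Psi,\rho_\Psi),
\end{equation*}
where $\rho_\Psi(\bx)=\tr_{\cz^4}d_\Psi(\bx,\bx)$.

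The main obstacle is showing that replacing $d_\Psi$ by $\Lambda_{Z,c}d$ and $\rho_\Psi$ by $\rho^F$ preserves (rather than perturbs) the inequality, since the proposition is stated cleanly without error term. Here Lemmas \ref{Coulombestimate} and \ref{kineticestimate} furnish the key control: on each outer angular momentum channel the difference between the $\Lambda_\gamma$- and $\Lambda_{Z,c}$-projections costs only $O(l^{-2})\langle\psi,\bp^2\psi\rangle$ in kinetic and potential expectations, so that the further $\Lambda_\gamma$-projection introduced in the construction of $\Psi$ can be absorbed into the definition of the trial matrix itself by taking $\Lambda_{Z,c}d$ as the "raw" object. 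The remaining comparison between the Dirac and Schr\"odinger outer energies is then handled by Lemma \ref{Schrenergery} together with the density-matching bound $D(\rho_{>,\Phi}-\rho_{>},\rho_{>,\Phi}+\rho_{>})=\bO(Z^{5/3})$ from \cite[Lemma 4.7.]{Franketal2009}, which together allow the right-hand side to be written exactly in the stated form.
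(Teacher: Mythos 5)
The proposal has the right overall strategy --- variational upper bound from a trial density matrix projected onto the admissible one-electron space, then drop the non-negative exchange term --- but it goes astray by reading $\Lambda_{Z,c}$ as the rescaled \emph{free} projection $\I_{(0,\infty)}(c\balpha\cdot\bp+c^2\beta)$ (which is what the displayed definition literally says) and then trying to repair the mismatch with $\gH_\gamma$ by post-projecting the outer orbitals onto $\Lambda_\gamma L^2$ and Gram--Schmidt orthonormalizing. This produces a trial density matrix $d_\Psi$ that is \emph{not} $\Lambda_{Z,c}d\Lambda_{Z,c}$, so the resulting bound does not have the stated form. Your plan to "absorb" the discrepancy via Lemmas \ref{Coulombestimate} and \ref{kineticestimate} cannot work: those lemmas only control the difference up to terms of order $l^{-2}\langle\psi,\bp^2\psi\rangle$, not to zero, whereas the proposition is an exact inequality. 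Moreover those lemmas, together with Lemma \ref{Schrenergery} and \cite[Lemma 4.7]{Franketal2009}, belong to the \emph{subsequent} error estimates $R_1,R_2,R_3$, not to the proof of Proposition \ref{HartreeFock} itself; invoking them here conflates the proposition with its later use.

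The resolution is to recognize that the displayed definition $\Lambda_{Z,c}:=\I_{(0,\infty)}(c\balpha\cdot\bp+c^2\beta)$ contains a misprint and is meant to be the rescaled Furry projection $\I_{(0,\infty)}(c\balpha\cdot\bp+c^2\beta-Z/|\bx|)=V_c\Lambda_\gamma V_c^{-1}$. Three facts force this reading: (i) the trial density matrix for $E_\gamma(Z)$ must be supported in the Furry one-electron space; (ii) the low-angular-momentum block $d_<$ consists of positive-energy Dirac--Coulomb eigenfunctions, so the identity $\rho^F_<(\bx)=\tr_{\cz^4}\bigl((\Lambda_{Z,c}d_<\Lambda_{Z,c})(\bx,\bx)\bigr)=\tr_{\cz^4}\bigl(d_<(\bx,\bx)\bigr)$, used a few lines below the proposition, holds only for the Furry projection; and (iii) $\tr[h\Lambda_{Z,c}d]$ is real and equals $\tr[h\Lambda_{Z,c}d\Lambda_{Z,c}]$ precisely because $\Lambda_{Z,c}$ commutes with $h:=c\balpha\cdot\bp+c^2\beta-c^2-Z/|\bx|$, which is false for the free projection. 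With the Furry reading the proof is a one-step application of Lieb's variational principle in the Furry Hilbert space: $\tilde d:=\Lambda_{Z,c}d\Lambda_{Z,c}$ satisfies $0\leq\tilde d\leq1$, $\tr\tilde d\leq\tr d$, and has range in the Furry one-electron space, so
\begin{equation*}
E_\gamma(Z)\leq\tr[h\tilde d]+D[\rho^F]-X[\tilde d]\leq\tr[h\tilde d]+D[\rho^F]=\tr[h\Lambda_{Z,c}d]+D[\rho^F],
\end{equation*}
where $X\geq0$ is the exchange energy and the final equality uses $[\Lambda_{Z,c},h]=0$. No Gram--Schmidt step, and no angular-momentum-channel estimates, are needed.
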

Combining Lemma \ref{Schrenergery}, Proposition \ref{HartreeFock} and
the positivity of the Coulomb energy we find
\begin{align}
\label{energydiff}
 E_{\gamma}(Z)-E_S(Z)\leq&\tr[ (c\balpha\cdot\bp+c^2\beta -c^2-\frac{Z}{|\bx|})\Lambda_{Z,c}d_<]-\tr[(\frac{\bp^2}2-\frac{Z}{|\bx|})d^S_<]\\
 &+\underbrace{\tr[(c\balpha\cdot\bp+c^2\beta-c^2-\frac{Z}{|\bx|})(\Lambda_{Z,c}d_{>,\Phi}-d_{>,\Phi})]}_{R_1}\\
 &+D(\rho^F-\rho_{>,\Phi},\rho^F+\rho_{>,\Phi})+\bO(Z^{47/24}).
\end{align}
Moreover, we write
\begin{equation}
  D(\rho^F-\rho_{>,\Phi},\rho^F+\rho_{>,\Phi})=\underbrace{D(\rho^F_>-\rho_{>,\Phi},\rho^F_>+\rho_{>,\phi})}_{=:R_2}+\underbrace{D(\rho^F_<,\rho^F+\rho^F_>)}_{=:R_3}
\end{equation}
with
\begin{equation}
 \rho^F_<(\bx):=\tr_{\cz^4}((\Lambda_{Z,c}d_<\Lambda_{Z,c})(\bx,\bx))=\tr_{\cz^4}(d_<(\bx,\bx))
\end{equation}
and
\begin{equation}
 \rho^F_>(\bx):=\tr_{\cz^4}((\Lambda_{Z,c}d_{>,\phi}\Lambda_{Z,c})(\bx,\bx)).
\end{equation}
We will see that the first term on the right side of Inequality
\eqref{energydiff} yields the Scott correction. In the following
we will prove that the error terms $R_1, R_2$, and $R_3$ are negligible.
\begin{lemma}
  As $Z\to\infty$
  \begin{equation}
    R_1=\tr[ (c\balpha\cdot\bp+c^2\beta -c^2-\frac{Z}{|\bx|})(\Lambda_{Z,c}d_{>,\Phi}-d_{>,\Phi})]=\bO(Z^{23/12})
  \end{equation}
  uniformly in $\gamma\in [0,1)$
\end{lemma}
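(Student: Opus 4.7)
My plan is as follows. Since $\Lambda_{Z,c}d_{>,\Phi}-d_{>,\Phi}=-\Lambda_{Z,c}^\perp d_{>,\Phi}$ and since on the range of $\Lambda_{Z,c}^\perp$ the free Dirac operator $c\balpha\cdot\bp+c^2\beta$ equals $-\sqrt{c^2\bp^2+c^4}$, I rewrite
\begin{equation*}
R_1 \;=\; \tr\!\left[\left(\sqrt{c^2\bp^2+c^4}+c^2\right)\Lambda_{Z,c}^\perp d_{>,\Phi}\right] \,+\, \tr\!\left[\frac{Z}{|\bx|}\,\Lambda_{Z,c}^\perp d_{>,\Phi}\right].
\end{equation*}
Both summands are traces against the ``negative-energy leakage'' $\Lambda_{Z,c}^\perp d_{>,\Phi}$, so the strategy reduces to quantifying how far $d_{>,\Phi}$ is from lying in $\mathrm{Ran}\,\Lambda_{Z,c}$.

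The Foldy--Wouthuysen-type multipliers $\Phi(\bp/c)$ used in the construction of $d_{>,\Phi}$ in Appendix~\ref{TDM} are chosen precisely so that $d_{>,\Phi}$ lies in $\mathrm{Ran}\,\Lambda_{Z,c}$ in the formal non-relativistic limit $c\to\infty$; the residual leakage is therefore suppressed by positive powers of $|\bp|/c$. For the kinetic-plus-rest-mass trace I would apply a Cauchy--Schwarz split in the Hilbert--Schmidt norm, combining this suppression with the Thomas--Fermi bound $\tr[\bp^2 d_>]\leq\const\, Z^{7/3}$ and the fact that $d_>$ is concentrated at momenta $|\bp|\lesssim Z^{1/3}$. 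After collecting $c$-factors using $\gamma=Z/c<1$, the resulting error is strictly smaller than $Z^{23/12}$, uniformly in $\gamma$, since every factor of $|\bp|/c$ costs the small parameter $\gamma Z^{-2/3}$ while the operator $\sqrt{c^2\bp^2+c^4}+c^2$ contributes at worst $c^2\sim Z^2/\gamma^2$.

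The main obstacle is the Coulomb trace $\tr[Z|\bx|^{-1}\Lambda_{Z,c}^\perp d_{>,\Phi}]$: dominating $Z/|\bx|$ crudely by the kinetic energy via Hardy's inequality loses a factor of $Z$ and spoils the target bound. I would instead expand $\Lambda_{Z,c}^\perp$ as a resolvent integral analogous to \eqref{47}, commute $Z/|\bx|$ through the resolvents via the second resolvent identity, and apply Hardy's inequality in the high angular-momentum channels $l\geq L$ where $d_>$ is supported (so that $\bp^2\geq \const\, l^2/|\bx|^2$). This mirrors the argument used in Lemma~\ref{Coulombestimate} above as well as the proof of Lemma~4.6 of Frank--Siedentop--Warzel \cite{Franketal2009}, whose outcome is precisely the $\bO(Z^{23/12})$ bound. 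Uniformity in $\gamma\in[0,1)$ follows because every estimate depends on $\gamma$ only through the fixed upper bound $\gamma_0<1$, and the leakage vanishes as $\gamma\to 0$.
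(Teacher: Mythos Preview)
Your first step already goes wrong: the identity $c\balpha\cdot\bp+c^2\beta=-\sqrt{c^2\bp^2+c^4}$ on the range of $\Lambda_{Z,c}^\perp$ holds only if $\Lambda_{Z,c}$ is the \emph{free} positive spectral projection. But in that reading $R_1$ vanishes identically, because the Foldy--Wouthuysen isometry $\mathbf\Phi_c$ of Appendix~\ref{FWT} maps $L^2(\rz^3:\cz^2)$ \emph{exactly} onto the positive spectral subspace of the free Dirac operator; there is no ``residual leakage suppressed by $|\bp|/c$'' --- the leakage is zero. (The displayed definition of $\Lambda_{Z,c}$ in Proposition~\ref{HartreeFock} omits the Coulomb term, but the context --- an upper bound on the \emph{Furry} energy $E_\gamma(Z)$ --- forces $\Lambda_{Z,c}$ to be the scaled Furry projection $\I_{(0,\infty)}(c\balpha\cdot\bp+c^2\beta-Z/|\bx|)=V_c\Lambda_\gamma V_c^*$.) With $\Lambda_{Z,c}$ the Furry projection your rewriting collapses: $\Lambda_{Z,c}^\perp$ does not diagonalize the free Dirac operator, so neither of your two summands is well-posed, and the mechanism you invoke (powers of $|\bp|/c$) is simply not the one at work.

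The paper's argument does not split kinetic and Coulomb parts at all. Since $d_{>,\Phi}$ already lies in the free positive subspace, $R_1$ is (after scaling by $V_c$) exactly the difference
\[
\sum_{l\geq L}\sum_{j}\Bigl(\langle\psi,\Lambda_\gamma(D_\gamma-1)\Lambda_\gamma\psi\rangle-\langle\psi,(D_\gamma-1)\psi\rangle\Bigr)
\]
summed over the orbitals of $d_{>,\Phi}$. This is precisely what Lemma~\ref{kineticestimate} (together with Lemma~\ref{Coulombestimate}) controls channel by channel by $\const\, l^{-2}\langle\psi,\bp^2\psi\rangle$; the $l^{-2}$ gain --- coming from Hardy's inequality \eqref{drehhardy} inside the resolvent representation of $\Lambda_\gamma-\Lambda_0$ --- is what makes the bound work, not any $|\bp|/c$ smallness. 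Summing via Lemma~\ref{lemmae1} then gives $\sum_{l\geq L}l^{-2}\tr_l(\bp^2 d_>)=\bO(Z^2/L)=\bO(Z^{23/12})$. Your fallback of estimating $\tr[\bp^2 d_>]\leq\const Z^{7/3}$ without the $l^{-2}$ decay would miss the target by a factor $Z^{5/12}$. You do cite Lemma~\ref{Coulombestimate} at the end, but only for the Coulomb piece; in fact it (together with its companion Lemma~\ref{kineticestimate}) is the entire engine, applied to the full operator $D_\gamma-1$ at once.
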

\begin{proof}
  By Lemmata \ref{Coulombestimate} and \ref{kineticestimate} and the
  scaling behavior of the map $V_c$ we have
  \begin{equation}
  \begin{aligned}
    0&\leq\tr[ (c\balpha\cdot\bp+c^2\beta
    -c^2-\frac{Z}{|\bx|})\Lambda_{Z,c}d_{>,\Phi}]
    -\tr[ (c\balpha\cdot\bp+c^2\beta -c^2-\frac{Z}{|\bx|})d_{>,\Phi}]\\
    &\leq
    \sum_l\frac{C}{l^2}\tr_l[\bp^2d_{>,\Phi}]=\sum_l\frac{C}{l^2}\tr_l[\bp^2d_{>}]
  \end{aligned}
  \end{equation}
  The result follows from Lemma \ref{lemmae1}.
\end{proof}
Likewise we can show that $R_2$ is negligible.
\begin{lemma}
 As $Z\to\infty$ 
 \begin{equation}
  D(\rho^F_>-\rho_{>,\Phi},\rho^F_{>}+\rho_{>,\phi})=\bO(Z^{23/12}).
 \end{equation}
 uniformly in $\gamma\in [0,1)$.
\end{lemma}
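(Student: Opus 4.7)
The plan is to apply Cauchy-Schwarz for the positive semi-definite Coulomb form $D(\cdot,\cdot)$,
\[
|R_2|\leq D[\rho^F_>-\rho_{>,\Phi}]^{1/2}\,D[\rho^F_>+\rho_{>,\Phi}]^{1/2},
\]
and estimate the two factors separately. The sum factor is handled by a standard Thomas-Fermi estimate: both $\rho^F_>$ and $\rho_{>,\Phi}$ are one-body densities of mass $\bO(Z)$ supported, up to lower-order corrections, on the Thomas-Fermi length scale $Z^{-1/3}$, by construction of $d_{>,\Phi}$ in Appendix \ref{TDM} and the fact that the momentum-space projection $\Lambda_{Z,c}$ does not alter the spatial scale. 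Hence each $D[\cdot]$ is $\bO(Z^{7/3})$ and the sum factor is $\bO(Z^{7/6})$, uniformly in $\gamma\in[0,1)$.

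The main task is then to show $D[\rho^F_>-\rho_{>,\Phi}]=\bO(Z^{3/2})$. For this I would use the operator decomposition
\[
\Lambda_{Z,c}d_{>,\Phi}\Lambda_{Z,c}-d_{>,\Phi}=-\Lambda_{Z,c}^\perp d_{>,\Phi}\Lambda_{Z,c}^\perp-(\Lambda_{Z,c}^\perp d_{>,\Phi}\Lambda_{Z,c}+\mathrm{h.c.}),
\]
with $\Lambda_{Z,c}^\perp:=1-\Lambda_{Z,c}$, together with the duality $D[\rho]^{1/2}=\sup\{D(\rho,f):D[f]\leq 1\}$. This reduces the estimate to controlling, for Coulomb-normalised test $f$, the one-body trace
\[
\tr\bigl[V_f\bigl(\Lambda_{Z,c}d_{>,\Phi}\Lambda_{Z,c}-d_{>,\Phi}\bigr)\bigr],\qquad V_f:=f\ast|\cdot|^{-1}.
\]
Splitting into angular-momentum channels as in \eqref{eq:3} and running the same resolvent-identity machinery as in Lemmata \ref{Coulombestimate} and \ref{kineticestimate} produces a channel-wise bound of the form $Cl^{-2}\tr_l[\bp^2d_{>,\Phi}]$; summing over $l$ and invoking the kinetic-energy bound of Lemma \ref{lemmae1} then yields the desired $\bO(Z^{3/2})$.

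The main obstacle is this duality step, since Lemmata \ref{Coulombestimate} and \ref{kineticestimate} are phrased for multiplicative potentials bounded pointwise by $1/|\bx|$, whereas the screened potential $V_f$ satisfies only the integrated Hardy-Littlewood-Sobolev bound $\|V_f\|_6\leq CD[f]^{1/2}$. One therefore either splits $V_f$ into a near-origin Coulombic piece (controlled by Hardy's inequality) and a bounded far-field piece, or adapts the resolvent estimates of those lemmata to act on a general $V_f$ with $D[f]\leq 1$. Once this technicality is resolved, the remaining steps mirror the proof of the $R_1$ bound, and uniformity in $\gamma\in[0,1)$ follows from the uniformity of all inputs.
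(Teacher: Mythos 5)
Your opening Cauchy--Schwarz step and exponent bookkeeping ($Z^{7/6}\cdot Z^{3/4}=Z^{23/12}$) are fine, and you correctly isolate where the real work would have to happen: the bound $D[\rho^F_>-\rho_{>,\Phi}]=\bO(Z^{3/2})$. But you also correctly flag that you cannot close that step with Lemmata \ref{Coulombestimate} and \ref{kineticestimate} as stated, and the fix you sketch would not in fact close it. The $l^{-2}$ gain in those lemmata comes specifically from the operator Hardy inequality $|\bx|^{-2}\leq \bp^2/j^2$ in the channel $\gh_{j,l}$, and that step genuinely uses the pointwise domination $|\phi|\leq 1/|\bx|$. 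For a general Coulomb-normalised $f$, splitting $V_f=f*|\cdot|^{-1}$ into a near-origin Coulombic piece and a bounded tail does not help: the bounded tail contributes something like $\|V_{f,\mathrm{far}}\|_\infty\,\|\Lambda_{Z,c} d_{>,\Phi}\Lambda_{Z,c}-d_{>,\Phi}\|_1$, which carries no $l^{-2}$ factor and whose trace norm is only $\bO(Z)$. So as written this is a genuine gap, not just a technicality.

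The paper avoids the problem entirely by \emph{not} using Cauchy--Schwarz in the Coulomb inner product. Instead it uses Newton's theorem: since $\nu:=\rho^F_>+\rho_{>,\Phi}$ is nonnegative and spherically symmetric, its potential satisfies $(\nu\ast|\cdot|^{-1})(\bx)\leq \|\nu\|_1/|\bx|$, hence
\[
D(\rho^F_>-\rho_{>,\Phi},\nu)\leq \tfrac12\,\|\nu\|_1\int\frac{\rho^F_>(\bx)-\rho_{>,\Phi}(\bx)}{|\bx|}\,\rd\bx
= \tfrac12\,\|\nu\|_1\,\tr\Bigl[\tfrac{1}{|\bx|}\bigl(\Lambda_{Z,c}d_{>,\Phi}\Lambda_{Z,c}-d_{>,\Phi}\bigr)\Bigr].
\]
Now the test potential is literally $1/|\bx|$, so Lemma \ref{Coulombestimate} applies directly channel by channel, giving $\sum_l C l^{-2}\tr_l[\bp^2 d_>]$, and with $\|\nu\|_1=\bO(Z)$ and Lemma \ref{lemmae1} ($L=[Z^{1/12}]$) one lands on $\bO(Z^{23/12})$. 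In other words: the trick is an $L^1$--$L^\infty$-type pairing via Newton rather than an $L^2$--$L^2$ pairing in the Coulomb norm, precisely so that the one-body trace that appears is against the Coulomb potential itself, which is what the available lemmas handle. If you want to keep your structure, the missing idea is to replace the $D$-norm Cauchy--Schwarz by this Newton-theorem bound.
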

\begin{proof}
  By Newton's theorem and, in the last step, Lemma
  \ref{Coulombestimate} and scaling,
 \begin{equation}
   \begin{aligned}
     D(\rho^F_>-\rho_{>,\Phi},\rho^F_{>}+\rho_{>,\phi})&\leq\frac12\int(\rho^F_{>}(\bx)+\rho_{>,\phi}(\bx))\rd \bx\int\frac{\rho^F_{>}(\bx)-\rho_{>,\Phi}(\bx)}{|\bx|}\rd \bx\\
     &\leq Z \tr[ \frac{1}{|\bx|}(\Lambda_{Z,c}d_{>,\Phi}\Lambda_{Z,c}-d_{>,\Phi})]
     \leq \sum_l\frac{C}{l^2}\tr_l[\bp^2d_{>}].
   \end{aligned}
 \end{equation}
 The lemma follows again from Lemma \ref{lemmae1}.
\end{proof}
It remains the error estimate for $R_3$.
\begin{lemma}
 As $Z\to\infty$ 
 \begin{equation}
  D(\rho^F_<,\rho^F+\rho^F_{>})=\bO(Z^{11/6})
 \end{equation}
 uniformly in $\gamma\in [0,1)$.
\end{lemma}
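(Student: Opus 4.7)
The plan is to reduce the estimate to a product of two factors: a sup-norm bound on the Coulomb potential generated by $\rho^F+\rho^F_>$, and a bound on the total mass $\|\rho^F_<\|_{L^1}$. Concretely, start from the trivial estimate
\begin{equation*}
R_3=\tfrac12\int\rho^F_<(\bx)\,\Phi(\bx)\,\rd\bx\leq \tfrac12\|\Phi\|_\infty\,\|\rho^F_<\|_{L^1},
\end{equation*}
where $\Phi:=(\rho^F+\rho^F_>)\ast|\cdot|^{-1}$, and verify $\|\Phi\|_\infty\leq CZ^{4/3}$ and $\|\rho^F_<\|_{L^1}=\tr d_<\leq CZ^{1/2}$.

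For the sup-norm bound, writing $\rho^F+\rho^F_>=\rho^F_<+2\rho^F_>$, I would estimate each summand separately. For $\rho^F_>$: by the previous lemma ($R_2$-estimate), $\rho^F_>$ is close to the Thomas-Fermi-derived density $\rho_{>,\Phi}$, whose Coulomb potential is bounded by $CZ^{4/3}$ by Newton's theorem applied to the spherically symmetric neutral Thomas-Fermi density -- the supremum is attained at the origin and equals $\int\rtf(\by)/|\by|\,\rd\by$, which a direct scaling calculation shows is of order $Z^{4/3}$. For $\rho^F_<$: combining the $L^1$-bound below with the crude estimate $\|\rho^F_<\|_\infty=\bO(Z^3)$ (from the innermost hydrogenic states having $|\psi_{1s}|^2\sim Z^3$) and the standard interpolation $\|\rho\ast|\cdot|^{-1}\|_\infty\leq C\|\rho\|_\infty^{1/3}\|\rho\|_{L^1}^{2/3}$ (proved by splitting the Coulomb integral into a near and far region and optimizing the cut-off radius), one obtains $\|\rho^F_<\ast|\cdot|^{-1}\|_\infty\leq CZ^{4/3}$ as well.

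For the mass bound, the inner density matrix $d_<$ constructed in Appendix \ref{TDM} is built from hydrogenic eigenstates with angular momentum $l\leq L=\lfloor Z^{1/12}\rfloor$ and radial quantum numbers cut off at the Thomas-Fermi chemical-potential scale ($n\lesssim Z^{1/3}$), so a direct degeneracy count gives $\tr d_<\lesssim\sum_{l\leq L}2(2l+1)\cdot Z^{1/3}\lesssim L^2 Z^{1/3}=Z^{1/2}$. Multiplying the two bounds then yields $R_3\leq CZ^{4/3}\cdot Z^{1/2}=CZ^{11/6}$, and the uniformity in $\gamma\in[0,1)$ is automatic since every ingredient is built from $\gamma$-independent Thomas-Fermi theory and Schr\"odinger trial states.

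The main obstacle I anticipate is making the sup-norm estimate on $\Phi$ fully rigorous, since $\rho^F+\rho^F_>$ is not literally spherically symmetric, nor is it exactly the Thomas-Fermi density. This will likely require either spherically averaging (which preserves $D$ when paired with a spherically symmetric inner density) or carefully invoking the already-established $R_2$-closeness in tandem with the standard Thomas-Fermi potential bound. This bookkeeping, rather than any deep new estimate, is the delicate part of the argument.
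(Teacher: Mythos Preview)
Your strategy and the paper's diverge at the very first step, and the divergence matters. The paper applies Newton's theorem in the \emph{opposite} direction: since both densities are spherically symmetric it bounds
\[
D(\rho^F_<,\rho^F+\rho^F_>)\;\leq\;\tfrac12\Big(\int(\rho^F+\rho^F_>)\Big)\int\frac{\rho^F_<(\bx)}{|\bx|}\,\rd\bx\;\leq\;Z\,\tr\!\big[|\bx|^{-1}d_<\big],
\]
and then controls $\tr[|\bx|^{-1}d_<]$ by summing the explicit Burke--Grant bound (Lemma~\ref{Coulomblow}) $\langle\psi_{n,l,j},|\bx|^{-1}\psi_{n,l,j}\rangle\leq C_{\gamma_0}Z/(n+l)^2$ over the Dirac--Coulomb eigenstates that make up $d_<$. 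No sup-norm on any potential is needed; the only ``outer'' input is the trivial particle-number bound $\int(\rho^F+\rho^F_>)\leq 2Z$.

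Your route has two genuine gaps. First, the claim $\|\rho^F_<\|_\infty=\bO(Z^3)$ is false: $d_<$ is built from \emph{Dirac}--Coulomb eigenstates (Appendix~\ref{TDM}), not Schr\"odinger ones, and the $1s_{1/2}$ density behaves like $|\bx|^{2(\sqrt{1-\gamma^2}-1)}$ near the origin, hence is unbounded for every $\gamma>0$. Your interpolation $\|\rho\ast|\cdot|^{-1}\|_\infty\leq C\|\rho\|_\infty^{1/3}\|\rho\|_1^{2/3}$ is therefore inapplicable to $\rho^F_<$. This also invalidates your closing remark that uniformity in $\gamma$ is ``automatic'' because the trial states are $\gamma$-independent---they are not. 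Second, the $R_2$-estimate is $D[\rho^F_>]-D[\rho_{>,\Phi}]=\bO(Z^{23/12})$, a statement about Coulomb \emph{energies}; it gives no control on $\|(\rho^F_>-\rho_{>,\Phi})\ast|\cdot|^{-1}\|_\infty$, so you cannot transfer the Thomas--Fermi potential bound to $\rho^F_>$ this way.

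The fix is exactly the paper's: flip Newton's inequality so that $1/|\bx|$ lands on $\rho^F_<$, and invoke Lemma~\ref{Coulomblow} state by state. That lemma is precisely where the $\gamma$-dependence of the Dirac states is absorbed into a constant $C_{\gamma_0}$.
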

\begin{proof}
 Again, by Newton's theorem,
 \begin{equation}
   D(\rho^F_<,\rho^F+\rho^F_{>})\leq\frac12\int(\rho^F(\bx)+\rho^F_{>}(\bx))\rd \bx\int\frac{\rho^F_<(\bx)}{|\bx|}\rd \bx\leq
   Z\int\frac{\rho^F_<(\bx)}{|\bx|}\rd \bx.
 \end{equation}
 The Coulomb energy of the electrons with low angular momenta can be
 estimated with help of Lemma \ref{Coulomblow} and scaling
\begin{equation}
\begin{aligned}
 &\int\frac{\rho^F_<(\bx)}{|\bx|}\rd \bx=\tr[ \frac{1}{|\bx|}d_<]=\sum_{l=0}^{L-1}\sum_{j\geq\frac12, j=l\pm\frac12} \sum_{m=-j}^j\sum_{n=1}^{K-l} \langle\psi_{j,l,m,n},\frac1{|\bx|}\psi_{j,l,m,n}\rangle\\
 \leq&\sum_{l=0}^{L-1}\sum_{j\geq\frac12, j=l\pm\frac12} \sum_{m=-j}^j\sum_{n=1}^{K-l} \frac{CZ}{(n+l)^2}\\
\leq&\frac{C}{Z}\sum_{l=0}^{L-1}(2l+1)\sum_{n=1}^{K-l}\frac1{(n+l)^2}\leq CL^2K^{-1}Z=\bO(Z^{5/6}).
 \end{aligned}
\end{equation}

\end{proof}

\section{Finishing the Proof}
In the previous section we proved
\begin{multline}
  E_\gamma(Z)-E_S(Z)\\
  \leq\tr[ (c\balpha\cdot\bp+c^2\beta
  -c^2-\frac{Z}{|\bx|})\Lambda_{Z,c}d_<]-\tr[(\frac{\bp^2}2-\frac{Z}{|\bx|})d^S_<]+\bO(Z^{47/24}).
\end{multline}
Upon scaling with the map $V_{c^{-1}}$ we have
\begin{equation}
\begin{aligned}
 &\tr[ (c\balpha\cdot\bp+c^2\beta -c^2-\frac{Z}{|\bx|})\Lambda_{Z,c}d_<]-\tr[(\frac{\bp^2}2-\frac{Z}{|\bx|})d^S_<]\\
=&{Z^2\over\gamma^2}\sum_{l=0}^{L-1}\sum_{j\geq\frac12, j=l\pm\frac12}(2j+1)\sum_{n=1}^{K-l}(\lambda^D_{\gamma,n,l,j}-\lambda^S_{\gamma,n,l})
 =Z^2s^D(\gamma)+R_4+R_5
\end{aligned}
\end{equation}
with $s^D(\gamma)$ defined in \eqref{Scottfunction} and
\begin{equation}
  R_4:=Z^2\gamma^{-2}\sum_{l=0}^{L-1}\sum_{j\geq\frac12, j=l\pm\frac12}(2j+1)\sum_{n=K-l+1}^{\infty}(\lambda^S_{\gamma,n,l}-\lambda^D_{\gamma,n,l,j})
\end{equation}
and
\begin{equation}
  R_5:=Z^2\gamma^{-2}\sum_{l=L}^{\infty}\sum_{j\geq\frac12, j=l\pm\frac12}(2j+1)\sum_{n=1}^{\infty}(\lambda^S_{\gamma,n,l}-\lambda^D_{\gamma,n,l,j}).
\end{equation}
By Corollary \ref{everror3}, for every $\gamma\in[0,1)$ there exists a
constant $C>0$ such that
\begin{equation}
  \begin{aligned}
    R_4&\leq C Z^2\gamma^{-2}\sum_{l=0}^{L-1}\sum_{j\geq\frac12,
      j=l\pm\frac12}(2j+1)\sum_{n=K-l+1}^{\infty}\frac{\gamma^4}{\left(n+l\right)^3l}\\
    &\leq C
    \gamma^{2}Z^2\sum_{l=0}^{L-1}\sum_{n=K-l+1}^{\infty}\frac{1}{\left(n+l\right)^3}
    \leq C \gamma^{2}Z^2LK^{-2}=\bO(Z^{17/12}).
  \end{aligned}
\end{equation}
Similarly,
\begin{equation}
\begin{aligned}
 R_5&\leq C Z^2\gamma^{-2}\sum_{l=L}^{\infty}\sum_{j\geq\frac12, j=l\pm\frac12}(2j+1)\sum_{n=1}^{\infty}\frac{\gamma^4}{\left(n+l\right)^3l}\leq C \gamma^{2}Z^2\sum_{l=L}^\infty\sum_{n=1}^{\infty}\frac{1}{\left(n+l\right)^3}\\
 &\leq C \gamma^{2}Z^2\sum_{l=L}^\infty\frac{1}{l^2}
 \leq C \gamma^{2}Z^2L^{-1}=\bO(Z^{23/12}).
 \end{aligned}
\end{equation}

\section{Comparison with (Semi-)Empirical Data and Schwinger's Prediction}

As described in the introduction, quantum electrodynamics is believed
to describe particles interacting through electromagnetic forces, including
heavy atoms. Unfortunately, the numerical
evaluation of such systems seems to be beyond present techniques, not
to mention the principal problem that QED is only perturbatively
defined rendering the treatment of heavy neutral atoms difficult.

In view of this fact, the comparison with experimental values appears
to be the only source of validation of the results. This, however, is
not directly possible, since our asymptotic result requires to fix
$\alpha Z$. Experimentally, though, $\alpha_\mathrm{physical}$ is a fixed
constant of value roughly $1/137$, whereas Z takes integer values up to 120.

Moreover, the published atomic ground state energies
$E_{\mathrm{NIST}}(Z)$ in the NIST Atomic Spectra Data Base
\cite{NIST_ASD} are measured only to a small extent. The majority of
the energies for large $Z$ is extrapolated or computed, i.e.,
assumptions on underlying approximate mathematically uncontrolled
models influence those data. In addition the experimental values
obviously also contain other QED effects not contained in the Furry
Hamiltonian defined through \eqref{energyform}.

Thus it is not obvious that our formula for the ground state energy
with $\gamma$ replaced by $\alpha_{\mathrm{physical}}Z$ should give an
improved quantitative description of large atoms. Nevertheless,
emboldened by Sell's \cite{Stell1977} \textit{principle of
  unreasonable utility of asymptotic expansions} and its history of successful application, e.g., Lebowitz and Waisman \cite{LebowitzWaisman1980}
and Schwinger \cite{Schwinger1980}, we offer a graphical
comparison of
$${E_\mathrm{NIST}(Z)- E_\mathrm{TF}(Z)\over Z^2}$$
with the relativistic Scott function 
$$\frac12-s^D(\alpha_\mathrm{physical}Z)$$
in Figure \ref{Scottpicture}. Additionally we show the Dirac-Hartree-Fock calculations carried out by {Desclaux} \cite{Desclaux1973}. Note that the Scott function is not divergent for 
$Z\to\frac1{\alpha_{\mathrm{physical}}}$ (as was claimed in \cite{Sorensen1998}), but instead approaches the numerical value
\begin{equation}
 \lim_{\gamma\to1}(\frac12-s^D(\gamma))\approx-1.91.
\end{equation}

Unlike the Chandrasekhar model \cite{LiebYau1988} and the
Brown-Ravenhall model \cite{Evansetal1996,Franketal2009} which give
substantially too low energies and break down for $\gamma=2/\pi<1$ and
$\gamma=2/(\pi/2+2/\pi)$ respectively, the Furry picture -- which for
numerical purposes is implemented through the Douglas-Kroll-Hess
transform -- does not only give stable ground state energies and good
numerical values in quantum chemistry (see \cite{ReiherWolf2009} for
an extensive overview), but, as Figure \ref{Scottpicture} shows, it
also offers a step toward a quantitative correct description of heavy
atoms.

\begin{figure}[ht]
    \centering
`    \includegraphics[width=1.0\textwidth]{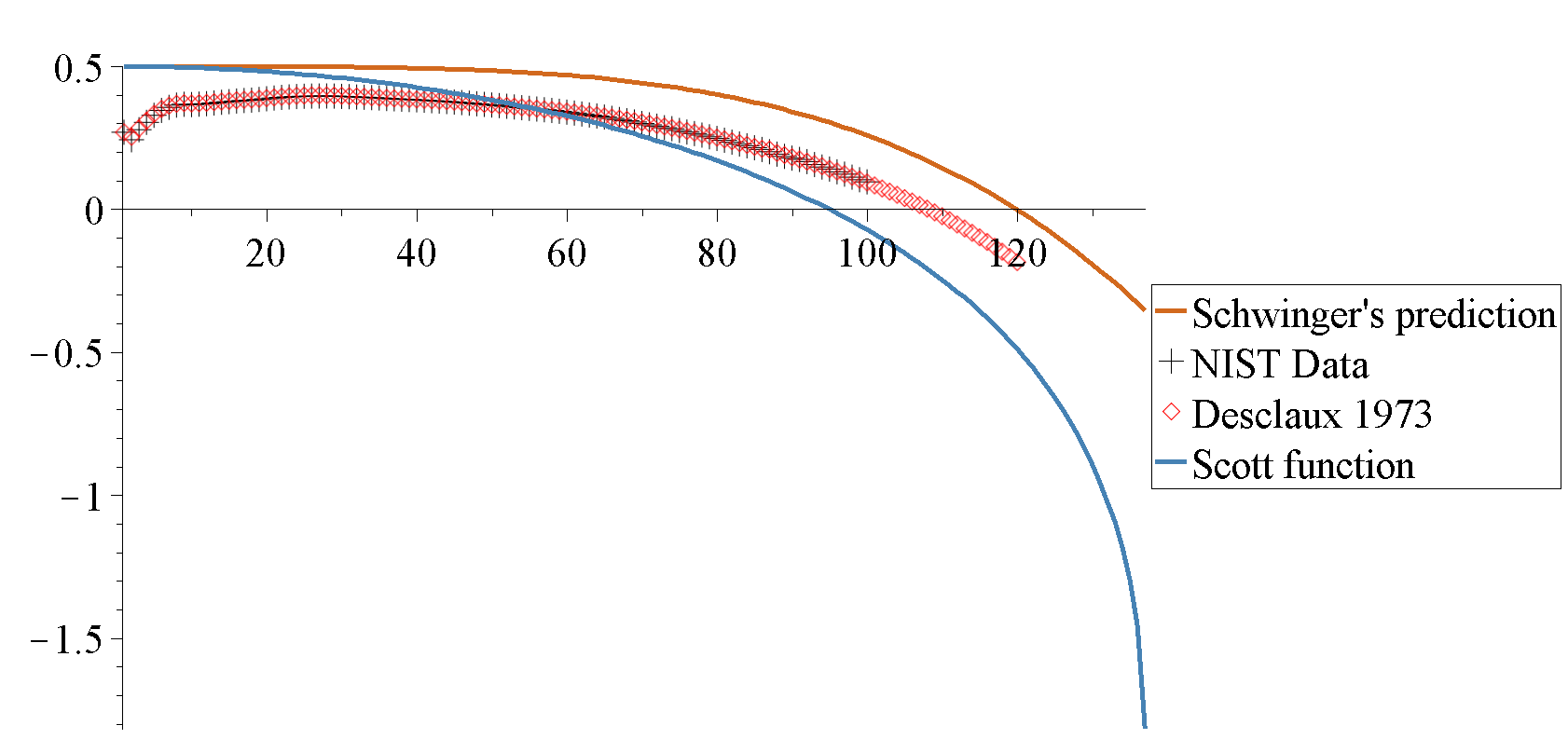}
    \caption{Comparison of the relativistic Scott function with data
      taken from the NIST database \cite{NIST_ASD}, Dirac-Fock
      calculations \cite{Desclaux1973} and Schwinger's original
      prediction \cite{Schwinger1980}.}
    \label{Scottpicture}
\end{figure}

We will supplement the data with Schwinger's prediction
\cite{Schwinger1980} of the relativistic Scott correction which was
derived from the $\gamma^4$ fine structure correction to the
non-relativistic eigenvalues implied by the Dirac equation, i.e., the
second term of the perturbative expansion of the kinetic energy
$$\langle\psi,-\frac{\bp^4}{8}\psi\rangle,$$
the spin-orbit coupling
$$
\langle\psi,\frac{\gamma}2\mathbf{S}\cdot\mathbf{L}\frac1{|\bx|^3}\psi\rangle,$$
and the Darwin term
$$\langle\psi,\frac{\pi}2 \gamma \delta(\bx)\psi\rangle.$$
Together they yield
$$\delta \lambda_{n,l,j}=\frac{\gamma^4}{-2(n+l)^3}\left(\frac1{j+\frac12}--\frac3{4}\frac1{n+l}\right)$$
(cf. \eqref{everror2}). Following Schwinger's computations and using
the identity
$$
\sum_{m=1}^\infty\sum_{n=1}^{\infty}\frac1{(m+n)^s}
=\sum_{m=1}^\infty\sum_{n=1}^m\frac1{m^s}=\zeta(s-1)-\zeta(s),
$$ 
we obtain
\begin{equation}
  \begin{aligned}
    s^D_{Schwinger}:=&\frac1{\gamma^2}\sum_{l=0}^{\infty}\sum_{n=1}^\infty\sum_{j\geq\frac12, j=l\pm\frac12}(2j+1)\delta\lambda_{n,l,j}\\
    =&-\sum_{n=1}^\infty\frac{\gamma^2}{n^3}(1-\frac3{4n})-\sum_{l=1}^\infty\sum_{n=1}^{\infty}\frac{\gamma^2}{(n+l)^3}(2-\frac34)\frac{2l+1}{n+l}\\
    =&-\zeta(3)\gamma^2+\frac3{4}\zeta(4)\gamma^2-\frac5{4}\sum_{l=1}^\infty\sum_{n=1}^{\infty}\frac{\gamma^2}{(n+l)^3}+\frac{3}{4}\sum_{l=1}^\infty\sum_{n=1}^{\infty}\frac{\gamma^2}{(n+l)^4}\\
    =&(-\zeta(3)+\frac3{4}\zeta(4)-\frac5{4}\zeta(2)+\frac5{4}\zeta(3)+\frac{3}{4}\zeta(3)-\frac{3}{4}\zeta(4))\gamma^2\\
    =& (\frac{5\pi^2}{24}-\zeta(3))\gamma^2\sim -0.854\gamma^2
\end{aligned}
\end{equation}
which is asymptotically correct in the non-relativistic limit $\gamma\to0$.

\appendix

\section{Thomas-Fermi Theory}
\label{TFtheory}
We collect some well know auxiliary facts of the Thomas-Fermi
density. For a more exhaustive overview see Lieb and Simon
\cite{LiebSimon1977} and Lieb \cite{Lieb1981}. Consider the minimizer
$\rtf$ of the Thomas-Fermi functional
\begin{equation}
  \mathcal{E}^\mathrm{TF}[\rho]=\frac3{5}\frac{(3\pi^2)^\frac{2}{3}}{2}\int_{\rz^3}\rho(x)^\frac{5}{3}\rd x-\int_{\rz^3}\frac{Z\rho(x)}{|x|}\rd x +\frac12\int_{\rz^3}\int_{\rz^3}\frac{\rho(x)\rho(y)}{|x-y|}\rd x\rd y
\end{equation}
in the set $\mathcal{I}:=\{\rho\in L^\frac{5}{3}|\rho\geq0,\  D[\rho]<\infty\}$.
The infimum
$$E_\mathrm{TF}(Z) := \inf \cE^\mathrm{TF}(\mathcal{I})$$
is attained and $\int \rho =Z$. Moreover
\begin{equation}
  \label{eq:tf}
  E_\mathrm{TF}(Z)= E_\mathrm{TF}(1)Z^{7/3}.
\end{equation}
The Thomas-Fermi energy of hydrogen has been calculated first by Milne
\cite{Milne1927} and later by Sommerfeld \cite{Sommerfeld1932}
reducing it to the slope of the Thomas-Fermi potential at zero, i.e.,
Baker's constant and has the value
\begin{equation}
  \label{eq:values}
  E_\mathrm{TF}(1)\sim -0.768745 [Ha].
\end{equation}

\begin{lemma}[Properties of the Thomas-Fermi density]
  The Thomas-Fermi density $\rtf$ and its mean-field potential
  $V_Z:=\rtf\ast \frac1{|\cdot|}$ satisfy
\begin{enumerate}
\item $\|V_Z\|_{\infty}\leq C Z^{\frac{4}{3}}$ for some constant $C>0$,
\item $|\bp V_Z|\leq \frac{C Z^\frac32}{|\bx|^\frac12}$,
\item $|\bp^2 V_Z|=|\rtf|\leq \frac{C Z^\frac32}{|\bx|^\frac32}$.
\end{enumerate}
\end{lemma}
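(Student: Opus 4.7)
The plan is to reduce everything to the case $Z=1$ via the standard Thomas-Fermi scaling, and then to read the three pointwise bounds off the known behavior of the TF density $\rho_1^{\mathrm{TF}}$ and potential $V_1$ via the Euler-Lagrange equation and Newton's theorem.

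\textbf{Scaling reduction.} The minimizer satisfies the scaling relation $\rtf(\bx)=Z^{2}\rho_1^{\mathrm{TF}}(Z^{1/3}\bx)$, which follows by inserting the ansatz into the Thomas-Fermi Euler-Lagrange equation and checking that both sides rescale with the same power of $Z$. Convolution with $1/|\cdot|$ and substitution $\bu=Z^{1/3}\by$ give $V_Z(\bx)=Z^{4/3}V_1(Z^{1/3}\bx)$, hence $\bnabla V_Z(\bx)=Z^{5/3}(\bnabla V_1)(Z^{1/3}\bx)$ and $\Delta V_Z(\bx)=Z^{2}(\Delta V_1)(Z^{1/3}\bx)$. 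So it suffices to prove that (a) $V_1\in L^\infty(\rz^3)$, (b) $|\bnabla V_1(\by)|\leq C|\by|^{-1/2}$, and (c) $\rho_1^{\mathrm{TF}}(\by)\leq C|\by|^{-3/2}$; the claimed bounds for general $Z$ then follow because $|Z^{1/3}\bx|^{-1/2}=Z^{-1/6}|\bx|^{-1/2}$ and $|Z^{1/3}\bx|^{-3/2}=Z^{-1/2}|\bx|^{-3/2}$, matching the stated $Z^{4/3}$, $Z^{3/2}$, and $Z^{3/2}$ prefactors.

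\textbf{Pointwise bound on $\rho_1^{\mathrm{TF}}$.} For the neutral atom the TF Euler-Lagrange equation at $Z=1$ reads
\begin{equation*}
\tfrac12(3\pi^{2})^{2/3}\rho_1^{\mathrm{TF}}(\by)^{2/3}=\Phi(\by),\qquad \Phi(\by):=\frac{1}{|\by|}-V_1(\by),
\end{equation*}
so $\rho_1^{\mathrm{TF}}=(2\Phi/(3\pi^{2})^{2/3})^{3/2}$. Near the origin $\Phi(\by)\leq 1/|\by|$ (as $V_1\geq 0$), giving $\rho_1^{\mathrm{TF}}(\by)\leq C|\by|^{-3/2}$. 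For large $|\by|$ one uses Sommerfeld's classical asymptotic $\Phi(\by)\sim \mathrm{const}\,|\by|^{-4}$, which yields $\rho_1^{\mathrm{TF}}(\by)\lesssim |\by|^{-6}\leq |\by|^{-3/2}$ for $|\by|\geq 1$. Combining both regimes gives the global bound $\rho_1^{\mathrm{TF}}(\by)\leq C|\by|^{-3/2}$. Since $-\Delta V_1=4\pi\rho_1^{\mathrm{TF}}$ in the distributional sense, this immediately implies $|\bp^{2}V_1(\by)|\leq C|\by|^{-3/2}$, and (c) follows. (Alternatively, the global estimate can simply be cited from Lieb--Simon \cite{LiebSimon1977} or Lieb \cite{Lieb1981}.)

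\textbf{Bounds on $V_1$ and $\bnabla V_1$.} Boundedness of $V_1$ follows from the pointwise bound on $\rho_1^{\mathrm{TF}}$: splitting the convolution $V_1(\by)=\int\rho_1^{\mathrm{TF}}(\bz)|\by-\bz|^{-1}\rd\bz$ into $|\bz-\by|\leq 1$ and its complement, the near piece is bounded since $|\bz|^{-3/2}$ is locally integrable against $|\by-\bz|^{-1}$, and the far piece is bounded because $\rho_1^{\mathrm{TF}}$ has finite mass $\int\rho_1^{\mathrm{TF}}=1$. For the gradient I exploit the spherical symmetry of $\rho_1^{\mathrm{TF}}$: by Newton's theorem $\bnabla V_1(\by)=-\hat{\by}|\by|^{-2}\int_{|\bz|\leq|\by|}\rho_1^{\mathrm{TF}}(\bz)\rd\bz$, and the enclosed mass is estimated by
\begin{equation*}
\int_{|\bz|\leq|\by|}\rho_1^{\mathrm{TF}}(\bz)\rd\bz\leq C\int_{0}^{|\by|}r^{-3/2}r^{2}\rd r=\tfrac{2C}{3}|\by|^{3/2},
\end{equation*}
yielding $|\bnabla V_1(\by)|\leq C|\by|^{-1/2}$. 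Rescaling via the formulas from the first paragraph converts (a)--(c) into statements (1)--(3) of the lemma.

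\textbf{Main obstacle.} Everything is routine once the pointwise bound $\rho_1^{\mathrm{TF}}\leq C|\cdot|^{-3/2}$ is in hand; the slight nontriviality is that this bound must hold globally, not only near the nucleus, which requires either invoking the Sommerfeld $|\by|^{-6}$ decay of the neutral TF potential or directly quoting the uniform estimate from the Lieb--Simon TF theory. All other steps reduce to elementary scaling and Newton's theorem.
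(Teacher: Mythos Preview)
Your proof is correct and follows exactly the route indicated in the paper, which merely states that the lemma ``is easily verified by using the scaling relation of the Thomas-Fermi density $\rtf(\bx)=Z^2\rho^{\mathrm{TF}}_1(Z^{1/3}\bx)$ and the Thomas-Fermi equation''; you simply fill in the details via Newton's theorem. One minor simplification: your detour through the Sommerfeld $|\by|^{-4}$ asymptotic is unnecessary, since $V_1\geq 0$ already gives $\Phi(\by)\leq 1/|\by|$ \emph{globally}, so the bound $\rho_1^{\mathrm{TF}}(\by)\leq C|\by|^{-3/2}$ follows everywhere directly from the Euler--Lagrange equation without any large-$|\by|$ analysis.
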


This is easily verified by using the scaling relation of the
Thomas-Fermi density $\rtf(\bx)=Z^2\rho^{TF}_1(Z^\frac13\bx)$ and the
Thomas-Fermi equation
\begin{equation}
 \frac12(3\pi^2)^\frac23(\rtf)^\frac23=\frac{Z}{|\cdot|}-V_Z.
\end{equation}

\section{Partial Wave Analysis\label{PWA}}

For the convenience of the reader and for normalization of the
notation we gather some fact on the partial wave analysis of the
Brown-Ravenhall operator (see, e.g., \cite{Franketal2009}).

We denote by $Y_{l,m}$ the normalized spherical harmonics on the unit
sphere $\mathbb S^2$ (see, e.g., \cite{Messiah1969}, p. 421) with the
convention that $Y_{l,m}\equiv0$ if $|m|>l$, and we define for $ j\in
\mathbb{N}_0 + \tfrac{1}{2} $, $l \in \mathbb{N}_0 $, and $
m=-j,\ldots,j $ the spherical spinors
\begin{equation} 
  \Omega_{j,l,m}(\omega) :=        \begin{cases}
                \begin{pmatrix} \sqrt{\frac{j+m}{2j}} \,
                Y_{l,m-\eh}(\omega)\\ \sqrt{\frac{j-m}{2j}} \,
                Y_{l,m+\eh}(\omega)
                \end{pmatrix} & \text{if} \ j=l+ \eh,\\[4ex]
                \begin{pmatrix} -\sqrt{\frac{j-m+1}{2j+2}} \, 
                Y_{l,m-\eh}(\omega)\\ \sqrt{\frac{j+m+1}{2j+2}}\, 
                Y_{l,m+\eh}(\omega)
                \end{pmatrix} & \text{if} \ j=l-\eh.
        \end{cases}
\end{equation}
The set of admissible indices is $ \mathcal{I} := \{(j,l,m)\,
| \, j\in\nz-1/2,\ l=j\pm1/2,\; m=-j,...,j \} $. It is known that the
functions $\Omega_{j,l,m}$, $(j,l,m)\in\mathcal I$, form an
orthonormal basis of the Hilbert space $L^2(\mathbb
S^2;\mathbb{C}^2)$. They are joint eigenfunctions of $ \mathbf{J}^2 $, $\mathbf{L}^2$,and $ J_3 $ with eigenvalues given by $ j(j+1) $, $l(l+1)$,
and $m$.  The subspace $\gh_{j,l,m}$ corresponding to the joint
eigenspace of total angular momentum $ \mathbf{J}^2 $ with eigenvalue
$ j(j+1) $ and angular momentum $\mathbf{L}^2 $ with eigenvalue $
l(l+1) $ is given by
$$\gh_{j,l,m} = {\rm span}\{ \bx \mapsto |\bx|^{-1} \, f(|\bx|) \,
\Omega_{j,l,m}(\omega_\bx) \ | \ f \in L^2(\mathbb{R}_+)\}
$$ 
where $\omega_\bx:=\bx/|\bx|$. This leads to the orthogonal decomposition
\begin{equation}\label{eq:decompapp}
  \gh = \bigoplus_{j \in \nz_0 + \frac{1}{2} } \bigoplus_{l=j \pm 1/2} \gh_{j,l},
\qquad
\gh_{j,l} = \bigoplus_{m=-j}^j \gh_{j,l,m},
\end{equation}
of the Hilbert space of two spinors.

Note the identity (see, e.g., Greiner \cite{Greiner1990})
\begin{equation}
(\bsigma \cdot \bomega_\bx) \Omega_{j,l,m}(\bomega_\bx)=-\Omega_{j,2j-l,m}(\bomega_\bx). 
\end{equation}
Furthermore for fixed $l$
\begin{equation}
 \text{dim}_l(\{ \Omega_{j,l,m}, (j,l,m)\in \mathfrak{I}\})=2(2l+1).
\label{diml}
\end{equation}
In $\mathfrak{H}=L^2(\rz^3,\cz^4)$ the corresponding decomposition is
obtained from the discussion above by noting that in
$L^2(\mathbb {S}^2;\mathbb{C}^4)$ eigenfunctions corresponding to the
operators $\bJ^2, J_z$ (now acting on 4-spinors) can be constructed as
linear combinations of the 2-spinor eigenfunctions
\begin{gather}
\begin{aligned}
 &\Phi^+_{j,l,m}=\begin{pmatrix} \ri \Omega_{j,l,m} \\ 0 \end{pmatrix} \\
 &\Phi^-_{j,l,m}=\begin{pmatrix}  0 \\ -\Omega_{j,2j-l,m} \end{pmatrix},
\end{aligned}
\end{gather}
i.e., 
\begin{equation}
 \mathfrak{H}=\bigoplus_{j,l}\gH_{j,l}:=\bigoplus_{(j,l,m)\in \mathfrak{I}}\gH_{j,l,m}
\end{equation}
where
\begin{equation}
  \mathfrak{H}_{j,l,m}=\{\bx\to\frac1{|\bx|}f(|\bx|)\Phi^+_{j,l,m}(\omega_x)+\frac1{|\bx|}g(|\bx|)\Phi^-_{j,l,m}(\omega_x); f,g\in L^2(\rz_+)\}.
\end{equation}
The orthogonal projections onto those spaces are denoted by $\Pi_{j,l}$
and $\Pi_{j,l,m}$.

While the  Dirac operator  is not commuting  with the  orbital angular
momentum  $\bL$,  these subspaces  of  $\mathfrak{H}$  are still  left
invariant since
\begin{gather}
\begin{aligned}
 &\ri(\balpha\cdot\bomega_\bx)  \Phi^+_{j,l,m}(\bomega_\bx)=-\Phi^-_{j,l,m}(\bomega_\bx)\\
 &\ri(\balpha\cdot\bomega_\bx)  \Phi^-_{j,l,m}(\bomega_\bx)=+\Phi^+_{j,l,m}(\bomega_\bx)
\end{aligned}
\end{gather}
(Balinsky and Evans \cite[p. 32, 2.1.30]{BalinskyEvans2011}).

\section{The Foldy-Wouthuysen Transformation}
\label{FWT}
The free Dirac operator my defined using two unitary transforms, the
Fourier transform and the Foldy-Wouthuysen transform
\cite{FoldyWouthuysen1950}
\begin{equation}
u(\bp)=a_+(p)\mathbf{1}+a_-(p)\beta \balpha\cdot\bomega
\end{equation}
using $p=|\bp|$, $\bomega=\frac{\bp}{p}$, $a_\pm=\sqrt{\frac{E(p)\pm
    1}{2E(p)}}$, $E(p)=\sqrt{p^2+1}$, through the following formula
\begin{equation}
\begin{aligned}
D_0:=&\cF^{-1}\begin{pmatrix}1 & c\bsigma\cdot{\bp}\\ c\bsigma\cdot \bp & -1\end{pmatrix}\cF\\
=&\cF^{-1}u(\bp/c)^{-1}
  \begin{pmatrix} E(p/c) & 0 & 0 & 0\\ 0 & E(p/c) & 0 & 0 \\ 0 & 0 & -E(p/c) & 0\\ 0 & 0 & 0 & -E(p/c) 
  \end{pmatrix} 
  u(\bp/c)\cF.
\end{aligned}
\end{equation}
Since the diagonal operator is self-adjoint with domain
$L^2(\rz:\cz^4,(1+\bp^2)\rd\bp)$, $D_0$, the free Dirac operator is
self-adjoint on $H^1(\rz^3,\rd \bp)$.

For the Brown-Ravenhall operator this gives rise to the isometry
\begin{equation}
\begin{aligned}
\mathbf{\Phi}_c:\  \mathfrak{h}:=&L^2(\rz^3:\cz^2)&\mapsto &\mathfrak{h}^B\\
&\psi &\mapsto &(\Phi_0\psi,\Phi_1\psi)
\end{aligned}
\end{equation}
where in Fourier representation $\Phi_0(\bp)=\sqrt{\frac{E(p)+
    1}{2E(p)}}$, $\Phi_1(\bp)=\sqrt{\frac{E(p)-
    1}{2E(p)}}\bsigma\cdot\bomega$. In particular, the isometric
property is easy to see since the map
\begin{equation}
 \ba \mapsto (\Phi_0(\bp)\ba,\Phi_1(\bp)\ba)
\end{equation}
for $\ba\in\cz^2$ is an isometry from $\cz^2$ to $\cz^4$ independent of $\bp$.

\section{The Trial Density Matrix}
\label{TDM}
In this section we will define the density matrix $d$ which will be
used to bound the Furry energy from above via the usual min-max
principle. The construction will be in the spirit of
\cite{SiedentopWeikard1986,SiedentopWeikard1987O}, in particular the
density matrix will be split into two parts, corresponding to low and
high angular momenta respectively, i.e.,
\begin{equation}
 d:=d_<+d_{>,\phi}=d_<+\mathbf{\Phi}_c d_> \mathbf{\Phi}_c^*
\end{equation}
acting on $L^2(\rz^3:\cz^4)$ and its non-relativistic version
\begin{equation}
 d^S:=d^S_<+d_>
\end{equation}
which is acting on $L^2(\rz^3:\cz^2)$.  Low angular momenta correspond
to orbits with perihelions close to the nucleus, while high angular
momenta prohibit orbits in the near vicinity of the nucleus. The
separation will be set at $L:=[Z^{\frac1{12}}]$.

Note that for the construction of the relativistic trial density
matrix we lift the high angular momentum part of the non-relativistic
density matrix from 2-spinor to 4-spinor space via the Foldy
Wouthuysen transformation (cf. appendix \ref{FWT}). Consequently $d$
neither belongs to the positive spectral subspaces of the hydrogenic
nor the free Dirac operator. However, by construction, the individual
parts $d_<$ and $d_{>,\phi}$ do.
\subsection{Low Angular Momenta}

Close to the nucleus the electron-electron interaction will be dominated by the nuclear interaction. This is reflected by choosing the unmodified eigenfunctions of the Coulomb-Dirac operators for small angular momenta,
\begin{equation}
 d_<:=\sum_{l=0}^{L-1}d_{<,l},\ \ \ d_{<,l}=\sum_{j\geq\frac12, j=l\pm\frac12} d_{<,j,l}
\end{equation}
where
\begin{equation}
d_{<,j,l}=\sum_{m=-j}^j\sum_{n=1}^{K-l}\left|\psi_{j,l,m,n}\rangle\langle\psi_{j,l,m,n}\right|
\end{equation}
with $K=[\const Z^\frac13]$ and $\psi_{j,l,m,n}$ being eigenfunctions
of the Coulomb-Dirac operator $
c\balpha\cdot\bp+c^2\beta-c^2-\frac{Z}{|\bx|}$ corresponding to
eigenvalue $c^2\lambda^D_{Z,n,l,j}$ (cf. \eqref{l2}) and azimuthal
quantum number $m$.  The cutoff $K$ was set on the order of the last
occupied shell of the Bohr atom. The Dirac eigenfunctions can be
computed explicitly and can be found for instance in
\cite{Thaller1992}. Likewise, we define the non-relativistic analogue
\begin{equation}
 d^S_<:=\sum_{l=0}^{L-1}d_{<,l},\ \ \ d_{<,l}=\sum_{j\geq\frac12, j=l\pm\frac12} d^S_{<,j,l}
\end{equation}
with
\begin{equation}
d^S_{<,j,l}=\sum_{m=-j}^j\sum_{n=1}^{K-l}\left|\psi^S_{j,l,m,n}\rangle\langle\psi^S_{j,l,m,n}\right|
\end{equation}
where $\psi^S_{j,l,m,n}$ denote the eigenfunctions of $\frac{\bp^2}{2}
-\frac{Z}{|\bx|}$ on $L^2(\rz^3,\cz^2)$ with eigenvalues
$\lambda^S_{Z,n,l}$ (see \eqref{l2}).

\subsection{High Angular Momenta}
For large angular momenta, the electrons are moving slowly at far
distance from the nucleus. Moreover, the correspondence principle
demands a quasi-classical behavior for large quantum numbers. These
heuristics suggest that for large angular momenta a semi-classical and
non-relativistic description of the electrons in their mean field (if
assumed interacting) and a description by the unscreened electrons for
small angular momenta suffices to obtain not only the leading
contribution to the ground state energy (Thomas-Fermi) but also its
first correction (Scott), if the cut between the large and high angular
momenta is chosen appropriately. This idea originates in
\cite{SiedentopWeikard1986} and was guiding the construction of trial
density matrices in
\cite{SiedentopWeikard1987O,Franketal2007S,Franketal2008}. We will
follow it also here. Of course, the same heuristics additionally
suggests that the difference between the Furry and Brown-Ravenhall
projections will be small for large angular momenta. That this is
indeed the case is the essential technical contribution of this paper which
allows to treat the Furry picture. We thus choose
\begin{equation}
  d_>:=\sum_{l\geq L}d_{l},\ \ \ d_l:=\sum_{j=l\pm\frac12}\sum_{m=-j}^{m=j}\sum_{n\in\gz}w_{n,l}\left|\phi_{n,l}\Omega_{j,l,m}\rangle\langle\phi_{n,l}\Omega_{j,l,m}\right|.
\end{equation}
The $\phi_{n,l}$ are the Macke orbitals as constructed in \cite{SiedentopWeikard1987O}.

The kinetic energy estimate for the density matrix $d_>$ found in
\cite{Franketal2009} will be useful:
\begin{lemma}[\cite{Franketal2009}, Lemma E.1]
\label{lemmae1}
Let $L=\left[Z^{1/12}\right]$. Then for large $Z$,
\begin{equation}
\sum_{l=L}^\infty l^{-2} \tr(\bp^2 \mathbf{\Phi}_c d_> \mathbf{\Phi}_c^*)=\sum_{l=L}^\infty l^{-2} \tr(\bp^2 d_l)=\bO(Z^2/L).
\end{equation}
\end{lemma}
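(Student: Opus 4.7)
The plan has two essentially separate pieces: the equality $\tr(\bp^2 \mathbf{\Phi}_c d_> \mathbf{\Phi}_c^*) = \tr(\bp^2 d_>)$, and the quantitative bound on the $l^{-2}$-weighted sum of the channel-wise kinetic energies.

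For the first equality, I would rely on the explicit form of the Foldy-Wouthuysen lift in Appendix \ref{FWT}. In the Fourier picture the operator $\bp^2$ is scalar multiplication by $p^2$, while $\Phi_0(\bp/c)$ and $\Phi_1(\bp/c)$ are block Fourier multipliers. Hence $\mathbf{\Phi}_c^* \bp^2 \mathbf{\Phi}_c$ acts on $\ba \in \cz^2$ as multiplication by $p^2(|\Phi_0|^2 + \Phi_1^* \Phi_1) = p^2 \cdot \I$, using $|\Phi_0|^2 + |\Phi_1|^2 = 1$. By cyclicity of the trace this turns $\tr(\bp^2\mathbf{\Phi}_c d_> \mathbf{\Phi}_c^*)$ into $\tr(\bp^2 d_>)$, and decomposing $d_> = \sum_{l \geq L} d_l$ gives the stated identity channelwise.

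For the quantitative estimate, the plan is to expand $\tr(\bp^2 d_l)$ using the orbital structure. On the angular-momentum-$l$ subspace one has $\bp^2 = -\frac{1}{r}\partial_r^2 r + l(l+1)/r^2$, so for a normalized orbital of the form $r^{-1} u_{n,l}(r)\Omega_{j,l,m}$,
\begin{equation}
\|\bp(\phi_{n,l}\Omega_{j,l,m})\|^2 = \int_0^\infty |u_{n,l}'|^2\,\rd r + l(l+1)\int_0^\infty \frac{|u_{n,l}|^2}{r^2}\,\rd r.
\end{equation}
Summing over $(j,m,n)$ with the Macke weights $w_{n,l}$ and multiplicities $2(2l+1)$, and invoking the key defining property of Macke orbitals (namely that, in the sense of \cite{SiedentopWeikard1987O}, the sum $\sum_l d_l$ reproduces the Thomas--Fermi density to leading order with controlled kinetic energy per channel), I would show $\tr(\bp^2 d_l) \leq C Z^2$ uniformly in $l \geq L$. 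The uniform channel bound is natural because the bulk of the TF kinetic energy $\bO(Z^{7/3})$ is distributed across $\bO(Z^{1/3})$ angular momentum channels, with each individual channel contributing at most $\bO(Z^2)$.

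Given this uniform bound, the sum is immediate: $\sum_{l=L}^\infty l^{-2}\tr(\bp^2 d_l) \leq CZ^2 \sum_{l=L}^\infty l^{-2} = \bO(Z^2/L)$, as claimed. The main obstacle is the uniform per-channel bound, which rests on the explicit construction of the Macke orbitals and a careful comparison of the partial kinetic energy sum $\sum_n w_{n,l} (\|u'_{n,l}\|^2 + l(l+1)\|u_{n,l}/r\|^2)$ with the Thomas-Fermi kinetic energy density restricted to the $l$-channel. This technical verification is exactly what is carried out in \cite[Lemma E.1]{Franketal2009}, and the argument transfers verbatim here because the high-angular-momentum part of our trial density matrix is built from the same non-relativistic Macke orbitals before the Foldy-Wouthuysen lift.
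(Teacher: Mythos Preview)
The paper does not supply its own proof of this lemma; it merely imports the statement from \cite[Lemma~E.1]{Franketal2009} and uses it as a black box. So there is nothing in the present paper to compare your argument against.

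That said, your sketch is a faithful outline of what the cited proof does. The first equality is exactly right: in Fourier space $\mathbf{\Phi}_c^*\,\bp^2\,\mathbf{\Phi}_c = p^2\bigl(|\Phi_0|^2 + \Phi_1^*\Phi_1\bigr) = p^2$, since $(\bsigma\cdot\bomega)^2=\I$ and $|\Phi_0|^2+|\Phi_1|^2=1$, so cyclicity of the trace gives $\tr(\bp^2\mathbf{\Phi}_c d_>\mathbf{\Phi}_c^*)=\tr(\bp^2 d_>)$. For the quantitative part your heuristic is on target, and you correctly identify the crux as the uniform per-channel bound $\tr(\bp^2 d_l)=\bO(Z^2)$ coming from the Macke construction; you also correctly defer the verification to \cite{Franketal2009}, which is precisely what the paper does. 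There is no gap here beyond the deliberate reliance on the cited source.
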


\textsc{Acknowledgment:} It is a pleasure to acknowledge partial
support of the Institut Henri Poincar\'e through its program
``Variational and Spectral Methods in Quantum Mechanics'' and the
Deutsche Forschungsgemeinschaft through its TR-SFB 12 (Symmetrien und
Universalit\"at in Mesoskopischen Systemen).


\begin{thebibliography}{10}

\bibitem{Bach1989}
Volker Bach.
\newblock A proof of {S}cott's conjecture for ions.
\newblock {\em Rep.\ Math.\ Phys.}, 28(2):213--248, October 1989.

\bibitem{BalinskyEvans2011}
A.~A. Balinsky and W.~D. Evans.
\newblock {\em {Spectral Analysis of Relativistic Operators}}.
\newblock Imperial College Press, 1 edition, 2011.

\bibitem{Balmer1885}
Johann~Jakob Balmer.
\newblock Notiz {\"u}ber die {S}pectrallinien des {W}asserstoffs.
\newblock {\em Annalen der Physik}, 261(5):80--87, 1885.

\bibitem{Balodis2004}
Pedro Balodis.
\newblock A proof of {S}cott's correction for matter.
\newblock {\em Comm. Math. Phys.}, 249(1):79--132, 2004.

\bibitem{Bander1982}
Myron Bander.
\newblock Corrections to the {T}homas-{F}ermi model of the atom.
\newblock {\em Annals of Physics}, 144:1--14, 1982.

\bibitem{Bethe1933}
H.\ Bethe.
\newblock Quantenmechanik der {E}in- und {Z}wei-{E}lektronenatome.
\newblock In H.\ Geiger and K.\ Scheel, editors, {\em Handbuch der {P}hysik,
  {XXIV.1}, {B}uch 2}, chapter~3, pages 273--560. Springer, Berlin, 2 edition,
  1933.

\bibitem{BrownRavenhall1951}
G.~E. Brown and D.~G. Ravenhall.
\newblock On the interaction of two electrons.
\newblock {\em Proc. Roy. Soc. London Ser. A.}, 208:552--559, 1951.

\bibitem{BurkeGrant1967}
V~M Burke and I~P Grant.
\newblock The effect of relativity on atomic wave functions.
\newblock {\em Proceedings of the Physical Society}, 90(2):297, 1967.

\bibitem{CassanasSiedentop2006}
Roch Cassanas and Heinz Siedentop.
\newblock The ground-state energy of heavy atoms according to {B}rown and
  {R}avenhall: Absence of relativistic effects in leading order.
\newblock {\em J. Phys. A}, 39(33):10405--10414, 2006.

\bibitem{Darwin1928}
Charles~G. Darwin.
\newblock The wave equation of the electron.
\newblock {\em Proc. Roy. Soc. (London)}, A118:654--680, 1928.

\bibitem{Desclaux1973}
J.P. Desclaux.
\newblock {Relativistic {D}irac-{F}ock Expectation Values for Atoms with $Z =
  1$ to $Z = 120$}.
\newblock {\em Atomic Data and Nuclear Data Tables}, 12(4):311--406, 1973.

\bibitem{Dirac1930}
Paul A.~M.\ Dirac.
\newblock Note on exchange phenomena in the {T}homas-{F}ermi atom.
\newblock {\em Proc.\ Cambridge Philos.\ Soc.}, 26:376--385, 1931.

\bibitem{EnglertSchwinger1984StatisticalAtom:H}
Berthold-Georg Englert and Julian Schwinger.
\newblock Statistical atom: Handling the strongly bound electrons.
\newblock {\em Phys.\ Rev.\ A}, 29(5):2331--2338, 1984.

\bibitem{EnglertSchwinger1984StatisticalAtom:S}
Berthold-Georg Englert and Julian Schwinger.
\newblock Statistical atom: Some quantum improvements.
\newblock {\em Phys.\ Rev.\ A}, 29(5):2339--2352, 1984.

\bibitem{EnglertSchwinger1985A}
Berthold-Georg Englert and Julian Schwinger.
\newblock Atomic-binding-energy oscillations.
\newblock {\em Phys.\ Rev.\ A}, 32:47--63, 1985.

\bibitem{Erdosetal2012}
L{\'a}szl{\'o} Erd{\H{o}}s, S{\o}ren Fournais, and Jan~Philip Solovej.
\newblock Relativistic {S}cott correction in self-generated magnetic fields.
\newblock {\em J. Math. Phys.}, 53(9):095202, 26, 2012.

\bibitem{Erdosetal2012Sc}
L{\'a}szl{\'o} Erd{\H{o}}s, S{\o}ren Fournais, and Jan~Philip Solovej.
\newblock Scott correction for large atoms and molecules in a self-generated
  magnetic field.
\newblock {\em Comm. Math. Phys.}, 312(3):847--882, 2012.

\bibitem{Erdosetal2012S}
L{\'a}szl{\'o} Erd{\H{o}}s, S{\o}ren Fournais, and Jan~Philip Solovej.
\newblock Second order semiclassics with self-generated magnetic fields.
\newblock {\em Ann. Henri Poincar\'e}, 13(4):671--730, 2012.

\bibitem{ErdosSolovej2010}
L{\'a}szl{\'o} Erd{\H{o}}s and Jan~Philip Solovej.
\newblock Ground state energy of large atoms in a self-generated magnetic
  field.
\newblock {\em Comm. Math. Phys.}, 294(1):229--249, 2010.

\bibitem{Evansetal1996}
William~Desmond Evans, Peter Perry, and Heinz Siedentop.
\newblock The spectrum of relativistic one-electron atoms according to {B}ethe
  and {S}alpeter.
\newblock {\em Comm.\ Math.\ Phys.}, 178(3):733--746, July 1996.

\bibitem{FeffermanSeco1992}
C.\ Fefferman and L.\ Seco.
\newblock Eigenfunctions and eigenvalues of ordinary differential operators.
\newblock {\em Adv.\ Math.}, 95(2):145--305, October 1992.

\bibitem{FeffermanSeco1994T}
C.\ Fefferman and L.\ Seco.
\newblock The density of a one-dimensional potential.
\newblock {\em Adv.\ Math.}, 107(2):187--364, September 1994.

\bibitem{FeffermanSeco1994Th}
C.\ Fefferman and L.\ Seco.
\newblock The eigenvalue sum of a one-dimensional potential.
\newblock {\em Adv.\ Math.}, 108(2):263--335, October 1994.

\bibitem{FeffermanSeco1994}
C.\ Fefferman and L.\ Seco.
\newblock On the {D}irac and {S}chwinger corrections to the ground-state energy
  of an atom.
\newblock {\em Adv.\ Math.}, 107(1):1--188, August 1994.

\bibitem{FeffermanSeco1995}
C.\ Fefferman and L.\ Seco.
\newblock The density in a three-dimensional radial potential.
\newblock {\em Adv.\ Math.}, 111(1):88--161, March 1995.

\bibitem{FeffermanSeco1989}
C.~L.\ Fefferman and L.~A.\ Seco.
\newblock An upper bound for the number of electrons in a large ion.
\newblock {\em Proc.\ Nat.\ Acad.\ Sci.\ USA}, 86:3464--3465, 1989.

\bibitem{FeffermanSeco1990O}
C.~L.\ Fefferman and L.~A.\ Seco.
\newblock On the energy of a large atom.
\newblock {\em Bull.\ AMS}, 23(2):525--530, October 1990.

\bibitem{FeffermanSeco1993}
Charles~L.\ Fefferman and Luis~A.\ Seco.
\newblock Aperiodicity of the {H}amiltonian flow in the {T}homas-{F}ermi
  potential.
\newblock {\em Revista Mathem\'atica Iberoamericana}, 9(3):409--551, 1993.

\bibitem{Fermi1927}
E.\ Fermi.
\newblock Un metodo statistico per la determinazione di alcune propriet{\'a}
  dell'atomo.
\newblock {\em Atti della Reale Accademia Nazionale dei Lincei, Rendiconti,
  Classe di Scienze Fisiche, Matematiche e Naturali}, 6(12):602--607, 1927.

\bibitem{Fermi1928}
E.~{Fermi}.
\newblock {Eine statistische Methode zur Bestimmung einiger Eigenschaften des
  Atoms und ihre Anwendung auf die Theorie des periodischen Systems der
  Elemente.}
\newblock {\em {Z. Phys.}}, 48:73--79, 1928.

\bibitem{Foldy1951}
L.~L. Foldy.
\newblock A note on atomic binding energies.
\newblock {\em Phys. Rev.}, 83:397--399, Jul 1951.

\bibitem{FoldyWouthuysen1950}
Leslie~L. Foldy and Siegfried~A. Wouthuysen.
\newblock {On the {D}irac Theory of Spin 1/2 Particles and its Non-Relativistic
  Limit.}
\newblock {\em Phys. Rev., II. Ser.}, 78:29--36, 1950.

\bibitem{Franketal2007S}
Rupert~L. Frank, Elliott~H. Lieb, and Robert Seiringer.
\newblock Stability of relativistic matter with magnetic fields for nuclear
  charges up to the critical value.
\newblock {\em Comm. Math. Phys.}, 275(2):479--489, 2007.

\bibitem{Franketal2008}
Rupert~L. Frank, Heinz Siedentop, and Simone Warzel.
\newblock The ground state energy of heavy atoms: Relativistic lowering of the
  leading energy correction.
\newblock {\em Comm. Math. Phys.}, 278(2):549--566, 2008.

\bibitem{Franketal2009}
Rupert~L. Frank, Heinz Siedentop, and Simone Warzel.
\newblock The energy of heavy atoms according to {B}rown and {R}avenhall: {T}he
  {S}cott correction.
\newblock {\em Doc. Math.}, 14:463--516, 2009.

\bibitem{FurryOppenheimer1934}
W.H. Furry and J.R. Oppenheimer.
\newblock {On the Theory of the Electron and Positive.}
\newblock {\em Phys. Rev., II. Ser.}, 45:245--262, 1934.

\bibitem{Gordon1928}
Walter Gordon.
\newblock {D}ie {E}nergieniveaus des {W}asserstoffatoms nach der {D}iracschen
  {Q}uantentheorie.
\newblock {\em Z. Phys.}, 48:11--14, 1928.

\bibitem{Greiner1990}
Walter Greiner.
\newblock {\em Relativistic Quantum Mechanics}, volume~3 of {\em Theoretical
  Physics -- Text and Exercise Books}.
\newblock Springer, Berlin, 1 edition, 1990.

\bibitem{Griesemeretal1999}
Marcel Griesemer, Roger~T.\ Lewis, and Heinz Siedentop.
\newblock A minimax principle for eigenvalues in spectral gaps: {D}irac
  operators with {C}oulomb potential.
\newblock {\em Doc.\ Math.}, 4:275--283, 1999.

\bibitem{Hess1986}
B.~A. Hess.
\newblock Relativistic electronic structure calculations employing a
  two-component no-pair formalism with external field projection operators.
\newblock {\em Phys. Rev. A}, 33:3742--3748, 1986.

\bibitem{Hughes1986}
Webster Hughes.
\newblock {\em An Atomic Energy Lower Bound that Gives {S}cott's Correction}.
\newblock PhD thesis, Princeton, Department of Mathematics, 1986.

\bibitem{Hughes1990}
Webster Hughes.
\newblock An atomic lower bound that agrees with {S}cott's correction.
\newblock {\em Adv.\ in Math.}, 79:213--270, 1990.

\bibitem{IvriiSigal1993}
V.~Ja.\ Ivrii and I.~M.\ Sigal.
\newblock Asymptotics of the ground state energies of large {C}oulomb systems.
\newblock {\em Annals of Math.}, 138(2):243--335, 1993.

\bibitem{NIST_ASD}
A.~Kramida, {Yu.~Ralchenko}, J.~Reader, and {and NIST ASD Team}.
\newblock {NIST Atomic Spectra Database (ver. 5.2), [Online]. Available:
  {\tt{http://physics.nist.gov/asd}} [2014, November 6]. National Institute of
  Standards and Technology, Gaithersburg, MD.}, 2014.

\bibitem{LebowitzWaisman1980}
Joel~L Lebowitz and Eduardo~M Waisman.
\newblock Statistical mechanics of simple fluids: Beyond van der {W}aals.
\newblock {\em Physics Today}, 33(3):24--30, 1980.

\bibitem{Lenz1932}
W.~Lenz.
\newblock {\"U}ber die {A}nwendbarkeit der statistischen {M}ethode auf
  {I}onengitter.
\newblock {\em Z. Phys.}, 77:713--721, 1932.

\bibitem{Lieb1981}
Elliott~H.\ Lieb.
\newblock {T}homas-{F}ermi and related theories of atoms and molecules.
\newblock {\em Rev.\ Mod.\ Phys.}, 53(4):603--641, October 1981.

\bibitem{LiebSimon1977}
Elliott~H. Lieb and Barry Simon.
\newblock The {T}homas-{F}ermi theory of atoms, molecules and solids.
\newblock {\em Advances in Math.}, 23(1):22--116, 1977.

\bibitem{LiebYau1988}
Elliott~H.\ Lieb and Horng-Tzer Yau.
\newblock The stability and instability of relativistic matter.
\newblock {\em Comm.\ Math.\ Phys.}, 118:177--213, 1988.

\bibitem{Mancasetal2004}
Paul Mancas, A.~M.~Klaus M\"uller, and Heinz Siedentop.
\newblock The optimal size of the exchange hole and reduction to one-particle
  {H}amiltonians.
\newblock {\em Theoretical Chemistry Accounts: Theory, Computation, and
  Modeling (Theoretica Chimica Acta)}, 111(1):49--53, February 2004.

\bibitem{MatteStockmeyer2010}
Oliver Matte and Edgardo Stockmeyer.
\newblock Spectral theory of no-pair {H}amiltonians.
\newblock {\em Reviews in Mathematical Physics}, 22(01):1--53, 2010.

\bibitem{Messiah1969}
Albert Messiah.
\newblock {\em M{\'e}canique Quantique}, volume~1.
\newblock Dunod, Paris, 2 edition, 1969.

\bibitem{Milne1927}
E.~A. Milne.
\newblock The total energy of binding of a heavy atom.
\newblock {\em Mathematical Proceedings of the Cambridge Philosophical
  Society}, 23(07):794--799, 1927.

\bibitem{Mittleman1981}
Marvin~H.\ Mittleman.
\newblock Theory of relativistic effects on atoms: Configuration-space
  {H}amiltonian.
\newblock {\em Phys. Rev. A}, 24(3):1167--1175, September 1981.

\bibitem{Nenciu1976}
G.~Nenciu.
\newblock Self-adjointness and invariance of the essential spectrum for {D}irac
  operators defined as quadratic forms.
\newblock {\em Comm. Math. Phys.}, 48(3):235--247, 1976.

\bibitem{Sorensen1998}
Thomas {\O}stergaard~S{\o}rensen.
\newblock {\em Towards a Relativistic {S}cott Correction}.
\newblock PhD thesis, Aar\-hus Universitet,
  http://www.imf.au.dk/cgi-bin/viewers/viewpublications.cgi?id=79, October
  1998.

\bibitem{Sorensen2005}
Thomas {\O}stergaard~S{\o}rensen.
\newblock The large-{$Z$} behavior of pseudorelativistic atoms.
\newblock {\em J. Math. Phys.}, 46(5):052307, 24, 2005.

\bibitem{ReiherWolf2009}
Markus Reiher and Alexander Wolf.
\newblock {\em Relativistic Quantum Chemistry: The Fundamental Theory of
  Molecular Science}.
\newblock Wiley-VCH, Weinheim, 2009.

\bibitem{SaueVisscher2003}
T.~Saue and L.~Visscher.
\newblock Four-component electronic structure methods for molecules.
\newblock In U.~Kaldor and S.~Wilson, editors, {\em Theoretical Chemistry and
  Physics of Heavy and Superheavy Elements}, volume~11 of {\em Progress in
  Theoretical Chemistry and Physics}, pages 211--267. Springer Netherlands,
  2003.

\bibitem{Saue2011}
Trond Saue.
\newblock Relativistic {H}amiltonians for chemistry: A primer.
\newblock {\em ChemPhysChem}, 12(17):3077--3094, 2011.

\bibitem{Schrodinger1926I}
E.~Schr\"odinger.
\newblock Quantisierung als {E}igenwertproblem ({E}rste {M}itteilung).
\newblock {\em Annalen der Physik}, 384(4):361--376, 1926.

\bibitem{Schwinger1980}
Julian Schwinger.
\newblock {T}homas-{F}ermi model: The leading correction.
\newblock {\em Phys.\ Rev.\ A}, 22(5):1827--1832, 1980.

\bibitem{Schwinger1981}
Julian Schwinger.
\newblock {T}homas-{F}ermi model: The second correction.
\newblock {\em Phys.\ Rev.\ A}, 24(5):2353--2361, 1981.

\bibitem{Scott1952}
J.~M.~C.\ Scott.
\newblock The binding energy of the {T}homas-{F}ermi atom.
\newblock {\em Phil.\ Mag.}, 43:859--867, 1952.

\bibitem{Shabaev1991}
V~M Shabaev.
\newblock Generalizations of the virial relations for the {D}irac equation in a
  central field and their applications to the {C}oulomb field.
\newblock {\em Journal of Physics B: Atomic, Molecular and Optical Physics},
  24(21):4479, 1991.

\bibitem{SiedentopWeikard1987O}
Heinz Siedentop and Rudi Weikard.
\newblock On the leading energy correction for the statistical model of the
  atom: Interacting case.
\newblock {\em Comm.\ Math.\ Phys.}, 112:471--490, 1987.

\bibitem{SiedentopWeikard1987U}
Heinz Siedentop and Rudi Weikard.
\newblock Upper bound on the ground state energy of atoms that proves {S}cott's
  conjecture.
\newblock {\em Phys.\ Lett.\ A}, 120:341--342, 1987.

\bibitem{SiedentopWeikard1989}
Heinz Siedentop and Rudi Weikard.
\newblock On the leading correction of the {T}homas-{F}ermi model: Lower bound
  -- with an appendix by {A.} {M.} {K.} {M\"u}ller.
\newblock {\em Invent.\ Math.}, 97:159--193, 1989.

\bibitem{SiedentopWeikard1986}
Heinz K.~H.\ Siedentop and Rudi Weikard.
\newblock On the leading energy correction for the statistical model of the
  atom: Non-interacting case.
\newblock {\em Abh.\ Braunschweig.\ Wiss.\ Ges.}, 38:145--158, 1986.

\bibitem{Solovejetal2008}
Jan~Philip Solovej, Thomas {\O stergaard S{\o}rensen}, and Wolfgang~L. Spitzer.
\newblock The relativistic {S}cott correction for atoms and molecules.
\newblock {\em Commun. Pure Appl. Math.}, 63:39--118, January 2010.

\bibitem{Sommerfeld1916}
Arnold Sommerfeld.
\newblock Zur {Q}uantentheorie der {S}pektrallinien.
\newblock {\em Annalen der Physik}, 356(17):1--94, 1916.

\bibitem{Sommerfeld1932}
Arnold Sommerfeld.
\newblock {A}symptotische {I}ntegration der {D}if\-fe\-ren\-tial\-glei\-chung
  des {T}homas-{F}ermischen {A}toms.
\newblock {\em Z.\ Phys.\ A}, 78:283--308, 1932.

\bibitem{Stell1977}
G.~Stell.
\newblock Fluids with long-range forces: Toward a simple analytic theory.
\newblock In Bruce~J. Berne, editor, {\em Statistical Mechanics}, volume~5 of
  {\em Modern Theoretical Chemistry}, pages 47--84. Springer US, 1977.

\bibitem{Sucher1980}
J.\ Sucher.
\newblock Foundations of the relativistic theory of many-electron atoms.
\newblock {\em Phys.\ Rev.\ A}, 22(2):348--362, August 1980.

\bibitem{Sucher1984}
J.\ Sucher.
\newblock Foundations of the relativistic theory of many-electron bound states.
\newblock {\em International Journal of Quantum Chemistry}, 25:3--21, 1984.

\bibitem{Sucher1987}
J.\ Sucher.
\newblock Relativistic many-electron {H}amiltonians.
\newblock {\em Phys.\ Scripta}, 36:271--281, 1987.

\bibitem{Thaller1992}
Bernd Thaller.
\newblock {\em The {D}irac Equation}.
\newblock Texts and Monographs in Physics. Springer-Verlag, Berlin, 1 edition,
  1992.

\bibitem{Thomas1927}
L.~H.\ Thomas.
\newblock The calculation of atomic fields.
\newblock {\em Proc.\ Camb.\ Phil.\ Soc.}, 23:542--548, 1927.

\end{thebibliography}

\def\cprime{$'$}

\end{document}